\title{The complexity of bounded context switching with dynamic thread creation}
\author{Pascal Baumann}{Max Planck Institute for Software Systems (MPI-SWS), Germany}{pbaumann@mpi-sws.org}{https://orcid.org/0000-0002-9371-0807}{}
\author{Rupak Majumdar}{Max Planck Institute for Software Systems (MPI-SWS), Germany}{rupak@mpi-sws.org}{https://orcid.org/0000-0003-2136-0542}{}
\author{Ramanathan S. Thinniyam}{Max Planck Institute for Software Systems (MPI-SWS), Germany}{thinniyam@mpi-sws.org}{https://orcid.org/0000-0002-9926-0931}{}
\author{Georg Zetzsche}{Max Planck Institute for Software Systems (MPI-SWS), Germany}{georg@mpi-sws.org}{https://orcid.org/0000-0002-6421-4388}{}
\authorrunning{P. Baumann, R. Majumdar, R.\,S. Thinniyam, and G. Zetzsche}
\keywords{Dynamic thread creation, Bounded context switching, Asynchronous Programs, Safety verification, State reachability, Petri nets, Complexity, Succinctness, Counter Programs}
\DeclareDocumentCommand{\DCPS}{O{}}{\mathsf{DCPS}\ifthenelse{\equal{#1}{}}{}{[#1]}}
\DeclareDocumentCommand{\SRP}{O{}}{\mathsf{SRP}\ifthenelse{\equal{#1}{}}{}{[#1]}  }
\DeclareDocumentCommand{\HP}{O{}}{\mathsf{HP}\ifthenelse{\equal{#1}{}}{}{[#1]}}
\DeclareDocumentCommand{\NTERM}{O{}}{\mathsf{NTERM}\ifthenelse{
\equal{#1}{}}{}{[#1]}  }
\DeclareDocumentCommand{\UBOUND}{O{}}{\mathsf{UBOUND}\ifthenelse{
\equal{#1}{}}{}{[#1]}  }
\DeclareDocumentCommand{\NFAIR}{O{}}{\mathsf{NFAIR}\ifthenelse{
\equal{#1}{}}{}{[#1]}  }
\DeclareDocumentCommand{\NSTARV}{O{}}{\mathsf{NSTARV}\ifthenelse{
\equal{#1}{}}{}{[#1]}  }
\newclass{\TWOEXPSPACE}{2EXPSPACE}
\tikzstyle{every place}=[minimum size=5mm]
\tikzstyle{every transition}=[minimum size=5mm]
\def\@envspa{\hspace{0.3em}}
\def\@sa{\hspace{-0.2em}}
\def\@sb{\hspace{0.5em}}
\def\@sc{\hspace{-0.1em}}
\def\set#1{{\{ #1 \}}}
\def\tuple#1{{\langle #1 \rangle }}
\def\multi#1{{[\![ #1 ]\!]}}
\def\nats{{\mathbb{N}}}
\def\mmap{\mathbf{m}}
\def\card#1{\lvert {#1} \rvert}
\newcommand{\multiset}[1]{{\mathbb{M}[ #1 ]}}
\def\cN{\mathcal{N}}
\def\prod{\mathcal{P}}
\newcommand{\cA}{\mathcal{A}}
\newcommand{\rnp}{\mathsf{RNP}}
\newcommand{\TDPN}{\mathsf{TDPN}}
\DeclareDocumentCommand{\langof}{O{} m}{%
  \mathsf{L}_{#1}(#2)%
  }
\DeclareDocumentCommand{\autstep}{O{}}{%
        \xrightarrow{#1}%
        }
\DeclareDocumentCommand{\autsteps}{O{}}{%
        \xRightarrow{#1}%
      }
\DeclareDocumentCommand{\autsteph}{O{} m}{%
        \xrightarrow{#1}_{#2}%
        }
\DeclareDocumentCommand{\autstepsh}{O{} m}{%
        \xRightarrow{#1}_{#2}%
      }
\mathchardef\mhyphen="2D
\DeclareDocumentCommand{\EXPTIME}{O{}}{%
  \ifthenelse{\equal{#1}{}}{%
    \mathsf{EXPTIME}%
  }{%
    {#1}\mhyphen\mathsf{EXPTIME}%
  }%
}
\begin{document}

\maketitle

\begin{abstract}
Dynamic networks of concurrent pushdown systems ($\DCPS$) are a theoretical
model for multi-threaded recursive programs with shared global state and dynamical creation of threads.
The (global) state reachability problem for $\DCPS$ is undecidable in general, but Atig et al.\ (2009) showed that
it becomes decidable, and is in $\TWOEXPSPACE$, when each thread is restricted to a fixed number of context switches.
The best known lower bound for the problem is $\EXPSPACE$-hard and this lower bound follows already
when each thread is a finite-state machine and runs atomically to completion (i.e., does not switch contexts).
In this paper, we close the gap by showing that state reachability is $\TWOEXPSPACE$-hard already with only one
context switch.
Interestingly, state reachability analysis is in $\EXPSPACE$ both for pushdown threads without context switches 
as well as for finite-state threads with arbitrary context switches.
Thus, recursive threads together with a single context switch provide an exponential advantage.

Our proof techniques are of independent interest for $\TWOEXPSPACE$-hardness results. 
We introduce \emph{transducer-defined} Petri nets, a succinct
representation for Petri nets,
and show coverability is $\TWOEXPSPACE$-hard for this model.
To show $\TWOEXPSPACE$-hardness, we present a modified version of Lipton's simulation of counter machines
by Petri nets, where the net programs can make explicit recursive procedure calls up to a bounded depth. 
\end{abstract}

\newpage

%!TEX root = ./main.tex
\section{Introduction}
\label{sec:intro}

There is a complexity gap between $\EXPSPACE$ and $\TWOEXPSPACE$ that
shows up in several problems in the safety verification of multithreaded programs.

Atig, Bouajjani, and Qadeer \cite{AtigBQ2009}
study safety verification for \emph{dynamic networks of concurrent pushdown systems} ($\DCPS$),
a theoretical model for multithreaded recursive programs with a finite shared global state, 
where threads can be recursive and can dynamically spawn additional threads.
Unrestricted reachability is undecidable in this model.
To ensure decidability, like many other works \cite{QR05,LalReps,MusuvathiQadeer,LaTorre},
they assume a bound $K$ that restricts each thread to have at most 
$K$ context switches.
For safety verification in this model, formulated as global state reachability,
they show a lower bound of $\EXPSPACE$ and an upper bound of
$\TWOEXPSPACE$, ``closing the gap'' is left open.

Kaiser, Kroening, and Wahl \cite{kaiser2010dynamic} study safety verification of
multithreaded \emph{non-recursive} programs with local and global Boolean variables.
In this model, an arbitrary number of non-recursive threads execute over shared global state, but each
thread can maintain local state in Boolean variables.
Although their paper does not provide an explicit complexity bound, a lower bound of $\EXPSPACE$ and an
upper bound of $\TWOEXPSPACE$ can be derived from a reduction from
Petri net coverability and their algorithm respectively.

Interestingly, when we restrict the models to  disallow either context
switches (i.e., each thread runs atomically
to completion) or local state in the form of the pushdown stack or local variables (but allow arbitrary
context switches),
safety verification is in $\EXPSPACE$~\cite{AtigBQ2009,GantyM12}.

Thus, the complexity gap asks whether or not the combination of \emph{local state} (maintained in local variables or in the stack)
and bounded \emph{context switching} provides additional power to computation.
In this paper, we show that indeed it does. In fact, the combination
of local state and just \emph{one} context
switch is sufficient to achieve $\TWOEXPSPACE$ lower bounds for these problems.
This closes the complexity gap.

We believe the constructions and models that we use along the way are of independent interest.
We introduce \emph{transducer-defined} Petri nets ($\TDPN$s), a succinct representation for
Petri nets.
The places in a $\TDPN$ are encoded using words over a fixed alphabet, and the transitions
are described by length-preserving transducers.
We show that coverability for $\TDPN$s is $\TWOEXPSPACE$-complete\footnote{After submitting this work, the authors were made aware of ``(level~1) counter systems with chained counters'' from \cite{DBLP:conf/lics/DemriFP13}, for which $\TWOEXPSPACE$-hardness of state reachability is shown in~\cite[Theorem 14]{DBLP:conf/lics/DemriFP13}. The $\TWOEXPSPACE$-hardness of coverability in $\TDPN$ could also be deduced from that result.} and
give a polynomial-time reduction from coverability for $\TDPN$s to
safety verification for $\DCPS$ with one context switch.

The idea of the latter reduction is to map a (compressed) place to the stack of
a thread and a marking to the set of currently spawned threads.
A key obstacle in the simulation is to ``transfer'' potentially exponential amount
of information from before a transition to after it through a polynomial-sized global store.
We present a ``guess and verify'' procedure, using non-determinism and
the use of additional threads to verify a stack content letter-by-letter.

In order to show $\TWOEXPSPACE$-hardness for $\TDPN$s, we introduce the
model of \emph{recursive net programs} ($\RNP$s), which add the power of making possibly recursive
procedure calls to
the model of net programs (i.e.,  programs with access to Petri net counters). 
The addition of recursion enables us to replace the ``copy and paste code''
idea in Lipton's construction to show $\EXPSPACE$-hardness of Petri net coverability
\cite{lipton1976reachability} 
with a more succinct and cleaner program description where the copies are instead
represented by different values of the local variables of the procedures.
The net effect is to push the requirement for copies into the
call stack of the $\rnp$ while maintaining a syntax which gives us a
$\rnp$ which is polynomial in the size of a given counter program.
When the stack size is bounded by an exponential function of the size of the program,
we get a $\TWOEXPSPACE$-lower bound.
We show that recursive net programs with exponentially large stacks
can be simulated by $\TDPN$s.

Finally, we note that the $\TWOEXPSPACE$ lower bound holds for $\DCPS$ where each stack
is bounded by a linear function of the size.
Such stacks can be encoded by polynomially many local Boolean
variables, giving us a $\TWOEXPSPACE$ lower bound for the model of
Kaiser et al.

In summary, we introduce a number of natural $\TWOEXPSPACE$-complete problems and,
through a series of reductions, close an exponential gap
in the complexity of safety verification for multithreaded recursive programs.

%!TEX root = ./main.tex
\section{Dynamic Networks of Concurrent Pushdown Systems ($\DCPS$)} \label{sec:dcps}

In this section, we define the model of $\DCPS$ and then state our main result.
Intuitively, a $\DCPS$ consists of a finite state control and several pushdown threads with local configurations, one 
of them being the active thread.
A local configuration contains the number of context switches the thread has already performed, as well as the contents of its local stack. 
An action of a thread may specify a new thread with initially one symbol on the stack to be spawned as an inactive thread. 
The active thread can be switched out for one of the inactive threads at any time. 
When a thread is switched out, its context switch number increases by one.
One can view this model as a collection of dynamically created recursive threads (with a call stack each), that communicate using some finite shared memory (the state control).

A \emph{multiset} $\mmap\colon S\rightarrow\nats$ over a set $S$ maps each
element of $S$ to a natural number.
Let $\multiset{S}$ be the set of all multisets over $S$.
We treat sets as a special case of multisets 
where each element is mapped onto $0$ or $1$.
We sometimes write
$\mmap=\multi{a_1,a_1,a_3}$ for the multiset
$\mmap\in\multiset{S}$ such that $\mmap(a_1)=2$, $\mmap(a_3)=1$, and $\mmap(a) = 0$ for each $a \in S\backslash\set{a_1,a_3}$. 
The empty multiset is denoted \(\emptyset\).
The \emph{size} of a multiset $\mmap$, denoted $\card{\mmap}$, is
given by $\sum_{a\in S}\mmap(a)$.
Note that this definition applies to sets as well.
 
Given two multisets $\mmap,\mmap'\in\multiset{S}$ we define $\mmap\oplus
\mmap'\in\multiset{S}$ to be a multiset such that for all $a\in S$,
we have $(\mmap\oplus \mmap')(a)=\mmap(a)+\mmap'(a)$.
We also define the natural order
$\preceq$ on $\multiset{S}$ as follows: $\mmap\preceq\mmap'$ if{}f there
exists $\mmap^{\Delta}\in\multiset{S}$ such that
$\mmap\oplus\mmap^{\Delta}=\mmap'$. We also define $\mmap \ominus
\mmap'$ for $\mmap' \preceq \mmap$ analogously: for all $a\in S$,
we have $(\mmap\ominus \mmap')(a)=\mmap(a)-\mmap'(a)$.

A \emph{Dynamic Network of Concurrent Pushdown Systems ($\DCPS$)} $
\mathcal{A} = (G,\Gamma,\Delta,g_0,\gamma_0)$ consists of a finite set
of \emph{(global) states} $G$, a finite alphabet of \emph{stack
symbols} $\Gamma$, an \emph{initial state} $g_0 \in G$, an 
\emph{initial stack symbol} $\gamma_0 \in \Gamma$, and a finite set of
\emph{transition rules} $\Delta$. Elements of $\Delta$ have one of
the two forms (1) $g|\gamma \hookrightarrow g'|w'$, or (2) $g|\gamma
\hookrightarrow g'|w' \triangleright \gamma'$, where $g,g' \in G$,
$\gamma,\gamma' \in \Gamma$, $w' \in \Gamma^*$, and $|w'| \leq 2$. 
Rules of the first kind allow the $\DCPS$ to take a single step
in one of the pushdown threads while the second additionally
spawn a new thread with top of stack symbol $\gamma'$.
The \emph{size} of $\mathcal{A}$ is defined as $|\mathcal{A}| = |G| + |\Gamma| + |\Delta|$.
 
The set of \emph{configurations} of $\mathcal{A}$ is $G \times (\Gamma^*
\times \nats) \times \multiset{\Gamma^* \times \nats}$. Given a
configuration $\langle g, (w,i), \mmap \rangle$, we call $g$ the 
\emph{(global) state}, $(w,i)$ the \emph{local configuration} of the 
\emph{active thread}, and $\mmap$ the multiset of the \emph{local
configurations} of the \emph{inactive threads}. The initial
configuration of $\mathcal{A}$ is $\langle g_0, (\gamma_0,0),
\emptyset \rangle$. For a configuration $c$ of $\cA$, we will
sometimes write $c.g$ for the state of $c$ and $c.\mmap$ for the
multiset
of threads of $c$ (both active and inactive).
The \emph{size} of a configuration $c = \langle g, (w,i), \mmap \rangle$
is defined as $|c| = |w| + \sum_{(w',j) \in \mmap}|w'|$.

For $i \in \nats$ we define the relation $\Rightarrow_i = \rightarrow_i \cup \mapsto_i$ on configurations of $\mathcal{A}$, where $\rightarrow_i$ and $\mapsto_i$ are defined as follows:
\begin{itemize}
  \item $\langle g, (\gamma.w,i), \mmap \rangle \rightarrow_i \langle g', (w'.w,i), \mmap' \rangle$ for all $w \in \Gamma^*$ iff (1) there is a rule $g|\gamma \hookrightarrow g'|w' \in \Delta$ and $\mmap' = \mmap$ or (2) there is a rule $g|\gamma \hookrightarrow g'|w' \triangleright \gamma' \in \Delta$ and $\mmap' = \mmap \oplus \multi{(\gamma',0)}$.
  \item $\langle g, (w,i), \mmap \oplus \multi{(w',j)} \rangle \mapsto_i \langle g, (w',j), \mmap \oplus \multi{(w,i+1)} \rangle$ for all $j \in \nats,$ $g \in G,$\linebreak$\mmap \in \multiset{\Gamma^* \times \nats}$, and $w,w' \in \Gamma^*$.
\end{itemize} 
For $b \in \nats$ we define the relation $\Rightarrow_{\leq b} :=
\bigcup_{i=0}^b \Rightarrow_i$. We use $\Rightarrow_i^*$ and
$\Rightarrow_{\leq b}^*$ to denote the reflexive, transitive closure
of $\Rightarrow_i$ and $\Rightarrow_{\leq b}$, respectively. 

Given $K \in \nats$, a state $g$ of $\mathcal{A}$ is \emph{$K$-bounded reachable} iff $\langle g_0, (\gamma_0,0), \emptyset \rangle \Rightarrow_{\leq K}^* \langle g, (w,i), \mmap \rangle$ for some $(w,i) \in \Gamma^* \times \{0,\ldots,K\}$ and $\mmap \in \multiset{\Gamma^* \times \{0,\ldots,K+1\}}$.

Intuitively, a local configuration $(w,i)$ describes a pushdown thread with stack content $w$ that has already performed $i$ context switches. 
The relation~$\rightarrow_i$ corresponds to applying the two kinds of transition rules at $i$ context switches. 
Both of them define pushdown transitions, which the active thread can perform. 
Type (2) also spawns a new inactive pushdown thread with $0$ context switches, whose initial stack content consists of a single specified symbol. 
For each $i \in \nats$, the relation~$\mapsto_i$ corresponds to switching out the active thread and raising its number of context 
switches from $i$ to $i+1$, while also switching in a previously inactive thread. 
For a fixed $K$, the \emph{$K$-bounded state
reachability problem} ($\SRP[K]$) for a $\DCPS$ is :
\begin{description}
  \item[Input] A $\DCPS$ $\mathcal{A}$ and a global state
  $g$
  \item[Question] Is $g$ $K$-bounded reachable in $\mathcal{A}$?
\end{description}
This corresponds to asking whether the global state $g$ is reachable if each thread can perform at most $K$ context switches.

\begin{theorem}[Main Result]
\label{thm:srp_dcps}
  For each $K\ge 1$, the problem $\SRP[K]$ is $\TWOEXPSPACE$-complete.
\end{theorem}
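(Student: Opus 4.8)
The $\TWOEXPSPACE$ upper bound holds for every fixed $K$ by the algorithm of Atig, Bouajjani, and Qadeer, so I would concentrate entirely on the lower bound. The plan is to prove $\TWOEXPSPACE$-hardness already for a single context switch, i.e.\ for $\SRP[1]$, and then to transfer it to every $K \geq 1$. For the transfer I would engineer the reduction so that in the produced $\DCPS$ the target state is reachable only along runs in which each thread performs at most one context switch (e.g.\ by routing a thread into a harmless sink once its role is finished), so that permitting up to $K$ switches cannot create new ways of reaching the target. Under this guarantee the set of reachable target states coincides for the $\Rightarrow_{\leq 1}^*$ and $\Rightarrow_{\leq K}^*$ semantics, and hardness for $K=1$ carries over verbatim.

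I would reach $\SRP[1]$ through the chain of problems advertised in the introduction, starting from the halting problem of counter machines whose counters are bounded by a triply-exponential value $2^{2^{2^{\mathrm{poly}(n)}}}$, a canonical $\TWOEXPSPACE$-hard problem. The first step is to simulate such counters by a recursive net program of size $\mathrm{poly}(n)$. Here I would adapt Lipton's nested-counter construction: level-$(k{+}1)$ counters, bounded by the square of the level-$k$ bound, are driven by level-$k$ machinery, so that $L$ nested levels realise a bound of roughly $2^{2^{L}}$. Lipton copies the code once per level; instead I would use a single procedure parameterised by a level number kept in local variables and pushed onto the call stack, letting recursion of depth $L$ replace $L$ copies of code. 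Taking the recursion depth $L = 2^{n}$ -- exponential in the program size but storable with $n$-bit level counters -- yields triply-exponential counters from a polynomial-size program.

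Next I would simulate a recursive net program with exponentially bounded stack by a transducer-defined Petri net, transferring the $\TWOEXPSPACE$-hardness to $\TDPN$ coverability. The idea is to encode a stack configuration of exponential depth as a single word over a fixed alphabet -- each frame contributing its local variables and its level counter in binary -- and to implement the push, pop, and net-counter updates using the length-preserving transducers that define the $\TDPN$'s transitions. Because the transducers stay polynomial even though the encoded words are exponentially long, the resulting $\TDPN$ remains polynomial in $n$.

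Finally I would give the polynomial-time reduction from $\TDPN$ coverability to $\SRP[1]$, mapping each compressed place to the stack of a thread and each marking to the multiset of currently spawned threads. The step I expect to be the crux is faithfully simulating a single $\TDPN$ transition: its effect is described by a transducer reading and writing words of exponential length, yet the $\DCPS$ may pass information from the pre-transition configuration to the post-transition configuration only through its polynomial-size global store. I would resolve this with a guess-and-verify protocol that spends the one available context switch: the active thread reconstructs the transducer's output symbol by symbol, pushing the guessed word onto the stack of a freshly spawned thread, while a companion thread replays the transducer on the consumed input word and checks it letter by letter against the existing stack; any discrepancy is routed to a dead state, so that only correct simulations survive. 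Chaining the three reductions gives a polynomial-time reduction from a $\TWOEXPSPACE$-hard problem to $\SRP[1]$, and together with the upper bound this proves that $\SRP[K]$ is $\TWOEXPSPACE$-complete for every $K \geq 1$.
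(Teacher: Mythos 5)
Your overall skeleton follows the paper: the upper bound is quoted from Atig et al., and hardness travels through the same chain (triply-exponentially bounded counter programs to $\rnp$, then to $\TDPN$ coverability, then to $\SRP[1]$), with recursion of depth $2^n$ replacing Lipton's code duplication and a guess-and-verify protocol in the final step. However, your $\rnp$-to-$\TDPN$ step fails. You propose to encode an entire stack configuration of exponential depth as a single place-word. In a $\TDPN$ the common length $l$ of the place-words is part of the input: the size is defined as $|\mathcal{N}| = l + |\mathcal{T}_{move}| + |\mathcal{T}_{fork}| + |\mathcal{T}_{join}|$, and $w_{\mathit{init}}, w_{\mathit{final}} \in \Sigma^l$ must be written out explicitly. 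With $2^n$ stack frames each contributing at least one symbol, $l$ is exponential in $n$, so the $\TDPN$ --- and the $\DCPS$ built from it, whose global-state count and stack alphabet grow polynomially in $l$ --- has size exponential in $n$. An exponential-time reduction from a $\TWOEXPSPACE$-hard problem does not establish $\TWOEXPSPACE$-hardness under polynomial-time reductions; it yields at best the $\EXPSPACE$ lower bound that was already known. The sub-idea of writing the frames' ``local variables \ldots{} in binary'' into the word is even more problematic: the counter values reach $\exp^3(n)$, so their binary encodings would have doubly exponential length.

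The paper keeps $l$ polynomial by putting no run-time data into place names at all. A place encodes only a pair (program location or variable, recursion depth), i.e.\ $\lceil \log h\rceil + \lceil \log k \rceil$ bits, where $h$ is linear in the program size and $k = 2^n + 1$ is the maximum recursion depth; everything dynamic lives in the \emph{marking}: counter values are token counts, and the call stack is represented by tokens on per-depth ``pending call'' places. This representation is faithful precisely because in the recursive Lipton construction the call stack contains at most one frame per recursion depth, so an unordered set of tokens determines the stack; calls and returns then become fork and join transitions relating depth $d$ to depth $d+1$, which a polynomial-size transducer can check locally on the binary depth suffix. A second, more repairable, gap is in your $\TDPN$-to-$\SRP[1]$ step: a thread cannot push onto another thread's stack, and the reader and the guesser cannot compare letters by alternating through the global state, since that would cost $l$ context switches per thread while only one is available. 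The paper's solution is to spawn one single-letter ``bit-thread'' per position during both the read and the guess stages, and to have a separate verifier thread consume these bit-threads against a guessed accepting run of the transducer, one transition at a time, via a kill operation; you would need this (or an equivalent) decoupling mechanism for your protocol to respect the budget of one context switch per thread.
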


The fact that $\SRP[K]$ is in $\TWOEXPSPACE$ for any fixed $K$
follows from the results of Atig et al.~\cite{AtigBQ2009}.
They use a slightly different variant of $\DCPS$. However, it is possible to show a reduction from $\SRP[K]$ for our variant to $\SRP[K+2]$ for theirs, see Appendix~\ref{sec:inheritance}.

Our main result is to show $\TWOEXPSPACE$-hardness for $\SRP[1]$. One may also
adapt the results of Atig et al.\ to the problem where $K$ is part of
the input (encoded in unary),
to derive an $\EXPSPACE$ lower bound and a $\TWOEXPSPACE$ upper bound.
Our result immediately implies $\TWOEXPSPACE$-hardness for this problem as well. 

In the remaining sections  we prove the lower bound in Theorem~\ref{thm:srp_dcps}.
In Section~\ref{sec:tdpn_to_dcps}, we introduce \emph{transducer-defined Petri nets} ($\TDPN$),
a succinct representation for Petri nets for which we prove the coverability problem
is $\TWOEXPSPACE$-complete.
Then, we show a reduction from the coverability problem for $\TDPN$s to the $\SRP[1]$ problem. 
In Section~\ref{sec:lower_bound}, we prove hardness for coverability of $\TDPN$s,
completing the proof.

%!TEX root = ./main.tex

\def \transpn {\mathcal{N}}
\def \tmove {\mathcal{T}_{\mathit{move}}}
\def \tjoin {\mathcal{T}_{\mathit{join}}}
\def \tfork {\mathcal{T}_{\mathit{fork}}}

\def \qmove {Q_{\mathit{move}}}
\def \qjoin {Q_{\mathit{join}}}
\def \qfork {Q_{\mathit{fork}}}

\def \dmove {\Delta_{\mathit{move}}}
\def \djoin {\Delta_{\mathit{join}}}
\def \dfork {\Delta_{\mathit{fork}}}

\def \initmove {q_{\mathit{move0}}}
\def \initjoin {q_{\mathit{join0}}}
\def \initfork {q_{\mathit{fork0}}}

\def \finmove {Q_{\mathit{movef}}}
\def \finjoin {Q_{\mathit{joinf}}}
\def \finfork {Q_{\mathit{forkf}}}

\def \pda {\mathcal{P}}
\def \main {\mathit{main}}
\def \check {\mathit{check}}
\def \halt {\mathit{halt}}
\def \kill {\mathit{kill}}
\def \return {\mathit{return}}

\section{ Transducer Defined Petri Nets ($\TDPN$)}
\label{sec:tdpn_to_dcps}

In this section, we prove the lower bound in Theorem 
\ref{thm:srp_dcps} by reducing 
coverability for a succinct representation of Petri nets, namely
$\TDPN$, to $\SRP[1]$ for $\DCPS$. We first recall some definitions
about Petri nets,
transducers
and problems
related to them.

\begin{definition}
\label{def:petri_net} 
  A \textbf{Petri net} is a tuple $N = (P,T,F,p_0,p_f)$ where
  $P$ is a finite
  set of \emph{places}, $T$ is a finite set of \emph{transitions} with
  $T \cap P = \varnothing$, $F \subseteq (P \times T) \cup (T \times
  P)$ is its \emph{flow relation}, and $p_0 \in P$ (resp. $p_f \in P$)
  its
  \emph{initial place} (resp. \emph{final place}). 
  A \emph{marking} of $N$ is a multiset $\mmap\in \multiset{P}$. 
  For a marking $\mmap$ and a place $p$ we say that there
  are $\mmap(p)$ \emph{tokens} on $p$. Corresponding to the initial (resp.\ final) 
  place we have the initial marking $\mmap_0=\multi{p_0}$ 
  (resp.\ final marking $\mmap_f=\multi{p_f}$).
  The \emph{size} of $N$ is defined as $|N| = |P|+|T|$.

  A transition $t \in T$ is enabled at a marking $\mmap$ if $ \set{p \mid (p,t) \in F} \preceq \mmap $. If $t$ is enabled in $\mmap$, $t$ can be fired, which leads to a marking $\mmap'$ with $\mmap' = \mmap \oplus \set{p \mid (t,p) \in F} \ominus \set{p \mid (p,t) \in F}$. In this case we write $\mmap \autstep[t] \mmap'$.  
  A marking $\mmap$ is \emph{coverable} in $N$ if there is a sequence
  $\mmap_0 \autstep[t_1] \mmap_1 \autstep[t_2] \ldots \autstep[t_l] \mmap_l$ such that $\mmap \preceq \mmap_l$. We call such a sequence a \emph{run} of $N$.
\end{definition}
The \emph{coverability problem} for Petri nets is defined as:
\begin{description}
	\item[Input] A Petri net $N$.
	\item[Question] Is $\mmap_f$ coverable in $N$?
\end{description}

\begin{definition}
\label{def:transducer}
  For $n \in \nats$, a \emph{(length preserving)} \textbf{$
  \boldsymbol{n}$-ary transducer} $\mathcal{T} = (\Sigma,Q,q_0,Q_f,\Delta)$ consists of an alphabet $\Sigma$, a finite set of \emph{states} $Q$, an \emph{initial state} $q_0 \in Q$, a set of \emph{final states} $Q_f \subseteq Q$, and a \emph{transition relation} $\Delta \subseteq Q \times \Sigma^n \times Q$. For a \emph{transition} $(q,a_1,\ldots,a_n,q') \in \Delta$ we also write $q \xrightarrow{(a_1,\ldots,a_n)} q'$.
  The \emph{size} of $\mathcal{T}$ is defined as $|\mathcal{T}| = n\cdot|\Delta|$.
  
  The \emph{language} of $\mathcal{T}$ is the $n$-ary relation $L(\mathcal{T}) \subseteq (\Sigma^*)^n$ containing precisely those $n$-tuples $(w_1,\ldots,w_n)$, for which there is a transition sequence
  $q_0 \xrightarrow{(a_{1,1},\ldots,a_{n,1})} q_1 \xrightarrow{(a_{1,2},\ldots,a_{n,2})} \ldots \xrightarrow{(a_{1,m},\ldots,a_{n,m})} q_m$
  with $q_m \in Q_f$ and $w_i = a_{i,1}a_{i,2} \cdots a_{i,m}$ for
  all $i \in \{1, \ldots, n\}$. Such a transition sequence is called an
  \emph{accepting run} of $\mathcal{T}$.
  
\end{definition}  
We note that in the more general (i.e. non-length-preserving)
definition of a transducer, the transition relation $\Delta$ is a subset
of
$Q \times (\Sigma \cup \varepsilon)^n \times Q$. All transducers we
consider in this paper are length-preserving.  

\begin{definition}
  \label{def:tdpn}
  A \textbf{transducer-defined Petri net} $\mathcal{N} = (w_{
  \mathit{init}},$ $w_{
    \mathit{final}},$ $
  \mathcal{T}_{move},$ $\mathcal{T}_{fork},$ $\mathcal{T}_{join})$
  consists of two words $w_{\mathit{init}},w_{
  \mathit{final}} \in \Sigma^l$ for some $l \in \nats$, a binary
  transducer $
  \mathcal{T}_{move}$ and two ternary transducers $\mathcal{T}_
  {fork}$ and $\mathcal{T}_{join}$. Additionally, all three
  transducers share $\Sigma$ as their alphabet. This
  defines an \emph{explicit} Petri net $N(\cN) = 
  (P,T,F,p_0,p_f)$ :
  \begin{itemize}
  \item $P=\Sigma^l$. 
  \item $T$ is the disjoint union of $T_{\mathit{move
  }}$, $T_{\mathit{join}}$ and $T_{\mathit{fork}}$\footnote{Note that a tuple $
  (w,w',w'') \in T_{\mathit{join}}$ is different from the
  same tuple in $T_{\mathit{fork}}$. In the interest of readability,
  we have chosen not to introduce
  a $4^{th}$ coordinate to distinguish the two.} where
  \begin{itemize}
    \item $T_{\mathit{move}} =\{(w,w') \in \Sigma^l \times \Sigma^l \mid 
        (w,w') \in L(\mathcal{T}_{\mathit{move}}) \}$,
    \item $T_{\mathit{fork}} =\{(w,w',w'') \in \Sigma^l \times \Sigma^l \times \Sigma^l \mid
        (w,w',w'') \in L(\mathcal{T}_{\mathit{fork}}) \}$, and
     \item $T_{
         \mathit{join}}= \{(w,w',w'') \in \Sigma^l \times \Sigma^l \times
         \Sigma^l \mid
         (w,w',w'') \in L(\mathcal{T}_{\mathit{join}}) \}$.
  \end{itemize}
  \item $ p_0 = w_{\mathit{init}}$ and $p_f=w_{\mathit{final}}$.
  \item $\forall t \in T\colon$
    \begin{itemize}
    \item If $t = (p_1,p_2) \in T_{\mathit{move}}$ then $(p_1,t),
    (t,p_2) \in
    F$.
    \item If $t = (p_1,p_2,p_3) \in T_{\mathit{fork}}$ then $(p_1,t),
    (t,p_2),
    (t,p_3) \in F$.
    \item If $t = (p_1,p_2,p_3) \in T_{\mathit{join}}$ then $(p_1,t),
    (p_2,t),
    (t,p_3) \in F$.
    \end{itemize}
  \end{itemize}
  An accepting run of one of the transducers, which corresponds to
  a single transition of $N$, is called a \emph{transducer-move}.
  The \emph{size} of $\mathcal{N}$ is defined as $|\mathcal{N}| = l + |\mathcal{T}_{move}| + |\mathcal{T}_{fork}| + |\mathcal{T}_{join}|$.
\end{definition}

\begin{figure}[t]
  \def \dist{.4}
  \centering
  \begin{tikzpicture}[]
    \node[place,label=left:$p_1$] (p1) at (0,0) {};
    \node[place,label=right:$p_2$] (p2) at (2,0) {};
    \node[transition,label=below:$t$] at (1,0) {}
      edge[pre] (p1)
      edge[post] (p2);
    \node at (1, -1.2) {$t = (p_1,p_2) \in L(\mathcal{T}_{move})$};
    
    \node[place,label=left:$p_1$] (q1) at (4.5,0) {};
    \node[place,label=right:$p_2$] (q2) at (6.5,\dist) {};
    \node[place,label=right:$p_3$] (q3) at (6.5,-\dist) {};
    \node[transition,label=below:$t$] at (5.5,0) {}
      edge[pre] (q1)
      edge[post] (q2)
      edge[post] (q3);
    \node at (5.5, -1.2) {$t = (p_1,p_2,p_3) \in L(\mathcal{T}_{fork})$};
    
    \node[place,label=left:$p_1$] (r1) at (9,\dist) {};
    \node[place,label=left:$p_2$] (r2) at (9,-\dist) {};
    \node[place,label=right:$p_3$] (r3) at (11,0) {};
    \node[transition,label=below:$t$] at (10,0) {}
      edge[pre] (r1)
      edge[pre] (r2)
      edge[post] (r3);
    \node at (10, -1.2) {$t = (p_1,p_2,p_3) \in L(\mathcal{T}_{join})$};
  \end{tikzpicture}
  \caption{The types of transitions defined by the three transducers.}
  \label{fig:PNtransducers}
\end{figure}
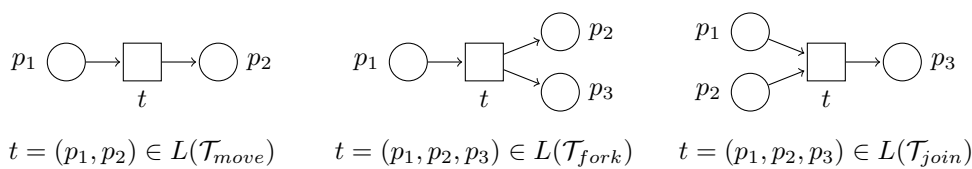

\noindent A Petri net defined by transducers in this way can only contain three different types of transitions, each type corresponding to one of the three transducers. These transition types are depicted in Figure~\ref{fig:PNtransducers}. 
The \emph{coverability problem} for $\TDPN$ is given by:
\begin{description}
	\item[Input] A $\TDPN$ $\transpn$.
  \item[Question] Is $\mmap_f=\multi{w_{\mathit{final}}}$ coverable in
  the
  corresponding explicit
  Petri net $N(\cN)$?
\end{description}

Observe that the exlicit Petri net $N(\cN)$ has $|\Sigma|^l$ places,
which is exponential in the size of $\cN$. This means that $\TDPN$ are
exponentially succinct representations of Petri nets.

It is a common theme in complexity theory to consider succinct versions of decision
problems~\cite{LEISS1981323,galperin1983succinct,papadimitriou1986note}. The resulting complexity is usually one exponent higher than the original version.
In fact, certain types of hardness proofs can be lifted generically~\cite{papadimitriou1986note}
(but such a simple argument does not seem to apply in our case).
The hardness proof in the following is deferred to Section~\ref{sec:lower_bound}.

\begin{theorem}
\label{thm:cover_tdpn}
	The coverability problem for $\TDPN$ is $\TWOEXPSPACE$-complete.
\end{theorem}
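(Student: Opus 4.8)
I would prove the two directions separately, with the hardness direction being the substantial one.

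\textbf{Membership in $\TWOEXPSPACE$.}
Given a $\TDPN$ $\cN$, the explicit net $N(\cN)$ has $|\Sigma|^l$ places and at most $|\Sigma|^{3l}$ transitions, both exponential in $|\cN|$ (since $l$ and $\log|\Sigma|$ are bounded by $|\cN|$). The crucial observation is that membership of a pair $(w,w')$ in $L(\tmove)$, and likewise of a triple in $L(\tfork)$ or $L(\tjoin)$, can be decided in space polynomial in $l$, since it amounts to reachability in the given transducer read synchronously over the argument words. Hence one can either write $N(\cN)$ down explicitly, using exponential space, which lies within $\TWOEXPSPACE$, or simulate on-the-fly access to its transitions. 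Applying the classical $\EXPSPACE$ upper bound for Petri net coverability (Rackoff), which decides coverability of a net of size $m$ in space $2^{O(m\log m)}$, to a net of size $m = 2^{O(|\cN|)}$ yields a space bound of $2^{2^{O(|\cN|)}}$, that is, $\TWOEXPSPACE$ in $|\cN|$.

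\textbf{$\TWOEXPSPACE$-hardness: overall idea.}
The plan is to reduce from the acceptance problem of a Turing machine running in space $2^{2^n}$, a canonical $\TWOEXPSPACE$-complete problem, routed through Minsky counter machines whose counters are bounded by the triply exponential value $2^{2^{2^n}}$; encoding a configuration of space $2^{2^n}$ as a single counter value needs exactly such a bound, and halting for these bounded counter machines is $\TWOEXPSPACE$-complete by the standard configuration-as-number encoding. The heart of the reduction is a \emph{succinct} version of Lipton's simulation of bounded counter machines by Petri nets~\cite{lipton1976reachability}. Recall that Lipton simulates counters bounded by $2^{2^k}$ using $k$ nested \emph{levels}: a zero-test of a level-$j$ counter is realised by a doubly nested loop counting up to $2^{2^{j-1}}$, driven by the level-$(j-1)$ machinery, with complementary pairs $y_j + \bar y_j = 2^{2^j}$ maintained as invariants. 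A net with $k$ levels has size $\Theta(k)$ times a fixed gadget, so reaching the bound $2^{2^{2^n}}$ requires roughly $2^n$ levels, and an \emph{explicit} net would need $2^n$ copies of the gadget --- exponential size. This is precisely the blow-up that a $\TDPN$ absorbs into a polynomial description.

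\textbf{The encoding.}
Concretely, I would encode the level index $j \in \{0,\dots,2^n-1\}$ in binary inside the place name: a place is a word in $\Sigma^l$ with $l = \mathrm{poly}(n)$ whose fields record (i) the logical control or counter location within one copy of the Lipton gadget and (ii) the level index $j$. Because the gadget is identical at every level, a \emph{fixed} number of transducer states suffices regardless of the number of levels: the transducers $\tmove,\tfork,\tjoin$ read the involved place words synchronously and enforce the uniform inter-level relations, the only level-dependent operations being $j \mapsto j-1$ and the test $j = 0$, both of which are length-preserving and hence transducer-definable on the binary encoding of $j$. The word $w_{\mathit{final}}$ then encodes the control location witnessing acceptance, so that $\mmap_f$ is coverable in $N(\cN)$ if and only if the counter machine halts.

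\textbf{Main obstacle.}
The delicate part is correctness of this folded construction, in two respects: (a) the nested budgets must be \emph{exactly} restored after each zero-test, so that every later test still counts to the correct bound, even though the per-level gadgets are now overlaid through transducer-defined transitions rather than spelled out as disjoint subnets; and (b) a run firing transitions outside the intended discipline must not spuriously cover $\mmap_f$, which requires showing that the complementarity invariants $y_j + \bar y_j = 2^{2^j}$ are preserved by every transducer-defined transition. To keep this manageable I would factor the argument through an intermediate model of \emph{recursive net programs} ($\rnp$): write the Lipton gadget code \emph{once}, with the level playing the role of the recursion depth and the copies being distinct values of a local variable, prove correctness of the counter-machine simulation purely at the program level, and then give a generic, correctness-preserving compilation of $\rnp$s of recursion depth at most $2^n$ into $\TDPN$s, with the call stack encoded inside the place word. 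This separates the combinatorial core (Lipton correctness) from the bookkeeping of the succinct encoding, and it is exactly this separation that I expect to demand the most care.
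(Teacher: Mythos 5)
Your overall route is the same as the paper's: membership by unravelling $N(\cN)$ into the exponential-size explicit net and applying Rackoff's $\EXPSPACE$ bound, and hardness by reducing from halting of counter programs bounded by $\exp^3(n)$, factored through recursive net programs ($\rnp$) in which the Lipton gadget is written once, the level plays the role of the recursion depth, and the result is compiled into a $\TDPN$ whose place words consist of a gadget-location field plus a binary level field, with transducers that treat the location part explicitly and check the depth part for equality or increment. This matches Sections~\ref{sec:MinskyToRNP} and~\ref{sec:RNPtoTPN} of the paper essentially clause for clause, including the choice of intermediate model.

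There is, however, one concrete misstep in the compilation step: you propose to compile $\rnp$s of recursion depth up to $2^n$ into $\TDPN$s ``with the call stack encoded inside the place word.'' Taken literally this cannot work: a stack of depth $2^n$ holds up to $2^n$ return addresses, hence needs exponentially many bits, while place words have length $l = \mathrm{poly}(n)$ (it also contradicts your own earlier, correct, description of the encoding, where a word carries only the gadget location and the level index). The paper instead keeps only the current recursion depth in the place word and spreads the call stack over the \emph{marking}: a command ``$l_1$: \textbf{call} \texttt{proc}'' at depth $d$ becomes a fork transition that puts one token on the callee's entry place at depth $d+1$ and one token on an auxiliary place recording that $l_1$ called \texttt{proc} at depth $d$; a \textbf{return} becomes a join transition consuming that auxiliary token together with the callee's return token at depth $d+1$ and producing a token on the successor label at depth $d$. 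Since an $\rnp$ execution has exactly one pending frame per depth, this token-based representation of the stack is faithful, and it is precisely what the fork and join transition types of $\TDPN$ exist for; without such a return-address mechanism, a return could nondeterministically resume at the wrong call site and soundness of the reduction would fail. A smaller inaccuracy: the transducers are not of fixed size --- the location part needs polynomially many states in the gadget size (to distinguish the $8h$ triples of address prefixes) and the depth part needs $O(\log k) = O(n)$ states --- but, as you correctly emphasize, their size is independent of the \emph{number} of levels up to this logarithmic dependence, which is all the argument needs.
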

Traditionally, succinct versions of graphs and automata feature a
compression using
circuits~\cite{galperin1983succinct,papadimitriou1986note} or
formulas~\cite{LEISS1981323}.  One could also compress Petri nets by
using circuits to accept binary encodings of elements $(p,t)$ or
$(t,p)$ of the flow relation. It is relatively easy to reduce
coverability for $\TDPN$ to this model by encoding transitions $t$ as
the pair or triple of places that they correspond to, yielding
$\TWOEXPSPACE$-hardness. We consider transducers because they make the
reduction to $\DCPS$ more natural.
$\TWOEXPSPACE$-membership for any such
representation follows by first unravelling the Petri
net and then checking coverability~\cite{Rackoff78}.  

We now show that coverability
for $\TDPN$ can be
reduced in polynomial time
to $\SRP[1]$ for $\DCPS$. 
The goal of the reduction is, given a $\TDPN$ $\transpn=$ $
(w_{\mathit{init}},$
 $w_{\mathit{final}},$ $
  \mathcal{T}_{move},$ $\mathcal{T}_{fork},$ $\mathcal{T}_{join})$,
  to produce a $\DCPS$ $\cA
(\transpn)$ with a global state $\halt$ such that $w_{\mathit{final}}$ is
coverable in $\transpn$ iff $\halt$ is $1$-bounded reachable in
$\cA
(\transpn)$. 
We outline the main ideas 
and informally explain the solution to some technical issues that arise;
the formal construction is in Appendix~\ref{appendix:killDCPS}.

\subparagraph{Representation of Markings}
\label{para:rep_of_marking}
The main idea behind the simulation of a $\TDPN$ $\transpn$ by a
$\DCPS$ $\cA(\transpn)$ is
that a token on a place $w$ of $\transpn$ is represented by a thread
with stack content $w$. Extending this idea, a marking is represented by a multiset of threads, one for each
token. 

\subparagraph{Initialization}
The initial marking of $\transpn$ is $\multi{w_{\mathit{init}}}$ and $\cA
(\transpn)$ starts by going into a special state where it
always fills its stack with $w_
{\mathit{init}}$ and then moving to a global
state $\main$.
We need $O(l)$ states in the global memory for the initialization.

\subparagraph{Simulation of one Transducer-move}
In the sequel, we explain the simulation of a single 
transducer-move from $\tmove$; the changes required to be made in the
case of $\tjoin$ and $\tfork$ are explained at the end. Remembering
the choice of
transducer incurs a multiplicative cost of 3 in the global memory.
 The
transducer-move requires us to do two things: Read the stack contents
of a particular \emph{input} thread which corresponds to a place $w$
from
which a token is removed; after which we need to create an 
\emph{output} thread
which corresponds to a place $w'$ to which a token is added.
This results
in the following issue regarding input threads:
\begin{description}
 	\item[Issue 1:] How can an input thread communicate its 
 	 stack content $w$ which comes from an exponentially large space of
 	 possibilities (since this space is $\Sigma^l$) given the
   requirement for the
 	 global state space to be polynomial in $|\transpn|$?
 	 \item[Solution 1:] We pop the contents of the thread while
 	 simultaneously
 	 spawning \emph{bit-threads}, each of which contains one letter of
 	 $w$ along with the index $i \in \{ 1, \ldots, l\}$ of the letter
    and the information that
 	 $w$ is a place from which
 	 a token is being removed; all of which is coded into a single
   \emph{bit-symbol}.
 \end{description}
 Note that we have two types of
 threads:
   bit-threads and token-threads (i.e., those whose stack contents
   encode a token's position). Moreover, these two types of threads
   have disjoint sets of stack symbols: bit-symbols and 
   \emph{token-symbols}.  The idea used to solve Issue 1 and read the
   stack contents, cannot be used in reverse to create an output
   token-thread since it is not possible to populate a stack with
   information from bit-threads.
  \begin{description}
 	\item[Issue 2:] How do we ensure the creation of appropriate output
 	threads?
 	\item[Solution 2:] We implement a `guess-and-verify'
 	procedure whereby we
 	first guess the contents of an output token-thread while
   simultaneously
 	producing bit-threads corresponding to $w'$; this is followed by a
 	verification of the transition by comparing bit-threads produced
 	corresponding to $w$ and $w'$, in a bit-by-bit fashion.
 \end{description}
In particular, our simulation of a single transducer-move corresponds
to
a loop on the global state $\main$ which
is broken up into three stages: Read, guess and verify. The
implementation
of this loop ensures that a configuration $c$ of $\cA
(\transpn)$ where $c.g=\main$ has a
multiset
$c.\mmap$ of threads faithfully representing a marking $\tilde{\mmap}$
of
$\transpn$ in that $c.\mmap$ contains exactly $\tilde{\mmap}(w'')$
token-threads with
stack
content
$w''$ for each place $w''$ of $\transpn$ and no other threads. 

We note that the discussion so far shows how the run of a $\transpn$
can be simulated when the schedule switches contexts at appropriate
times.
We must also ensure that new behaviors cannot arise due to context
switches at arbitrary other points.
We accomplish this by using global \emph{locks} that ensure unwanted context switches
get stuck.
\begin{description}
  \item[Issue 3:] How do we control the effect of arbitrary context
  switches?
  \item[Solution 3:]  The global state is
  partitioned in such
  a way as to only enable operations on bit-symbols while in some
  states and token-symbols in others. We ensure that for every
  bit-symbol $\gamma$, there is at most one thread with top of stack
  $\gamma$ at any given time. Thus with
  the help of global control, we make sure
  unwanted
  context switches to bit-threads get the system stuck. The problem
  reduces to avoiding
  unwanted context switches
  between token-threads.
  
  We use a \emph{locking mechanism}.
  We add an
  extra $\top$ symbol at the top of every token-thread when it is
  first created. A read-stage always begins in a special state used for unlocking a thread 
  (i.e. removing $\top$). While reading a particular thread, the
  global state disallows any transition on $\top$ or bit-symbols. Since
  all inactive token-threads have
  $\top$ as the top of stack symbol, this implies that the system
  cannot proceed until it switches back to the unlocked token-thread.
  Similarly, during the guess-stage where we are creating a new
  token-thread, transitions are disallowed on $\top$ and bit-symbols.
  The verify-stage only operates on bit-threads and switching to a
  token-thread is similarly pointless.
\end{description}
We now describe the three stages. 
Recall that the global state keeps the information that the current
step is a transducer-move from $\tmove$.
\subparagraph{Read-stage:} We non-deterministically switch to a
token-thread $t_0$
containing
$w$ as stack content, which we need to read.
As explained earlier, we produce bit-threads decorated appropriately
and
at the end of this stage, we have popped all of $t_0$ and created $l$
bit-threads; $t_0$ ceases to exist. The number of global states
required in the stage is $O(l)$.

\subparagraph{Guess-stage:} Next,
we
create a new
token-thread with $w'$ as its stack contents by non-deterministic
guessing,
simultaneously
spawning bit-threads for each letter of $w'$. At the end of this stage $l$ more
bit-threads have been added to the task buffer (for a total of $2l$
bit-threads) along with a token-thread
containing
$w'$. As in the read-stage, the number of global states used in
this stage is $O(l)$.

\subparagraph{Verify-stage:} We guess a sequence
of transitions $\delta_1 \ldots \delta_l$  of $\tmove$ on-the-fly;
we guess $\delta_i$ which must be of the form $q_{i-1} 
\xrightarrow{(w_i,w'_i)} q_{i}$ where $w_i$ (resp. $w'_i$)
the $i^{th}$ letter of $w$ (resp. $w'$). 
We verify our guess by
comparing each $\delta_i$ with the corresponding bit-threads
$b_i,b'_i$
with index~$i$ produced
in the read-stage from $w,w'$ respectively, before moving on to $\delta_{i+1}$. During
the verification, the bit-threads
are \emph{killed}. We enforce the condition that
the target state of $\delta_i$ matches the source state of $\delta_
{i+1}$.

\subparagraph{Claim:} Killing a bit-thread $t'$ with a single stack
symbol
$\gamma'$ can be simulated by a $\DCPS$. Consider the
following sequence of operations starting from
global
state $g$ with an
active thread $t$ which contains only one symbol $\gamma$ on the stack:

\begin{enumerate}
  \item Spawn a thread $t''$ with a special symbol $\gamma_{
  \mathit{spawn}}$ and move to a special kill-state $\kill$
  which contains information regarding the state $g$ prior to the kill
  operation and stack symbols of $t$ and $t'$.
  \item Switch to a thread with symbol $\gamma'$ and pop its
  contents while moving to a special state $\return$ which is
  forwarded the information contained in $\kill$.
  \item Switch to the thread with $\gamma_{
  \mathit{spawn}}$ as top of stack and replace it with $\gamma$ and
  at the same time go to global state $g$.
\end{enumerate}
This
concludes our proof sketch of the claim. Adding a kill operation to a $\DCPS$ only incurs a polynomial
increase in the size of the $\DCPS$. A formal proof can be found in Appendix~\ref{appendix:killDCPS}.

In our setting, the net
result of the sequence of operations simulating a kill-move is to
remove the two
bit-threads $b_i,b'_i$ from
the multiset of threads without changing the
global state or the top of stack symbol $\gamma$. The special states
$\kill$ (resp. $\return$) ensure that if one switches to a thread whose
top of stack is different from $\gamma'$ in Step 2 (resp. $\gamma_{
\mathit{spawn}}$ in Step 3), no transition can be made. We return
to our
discussion regarding the sequence of transitions~$\delta_i$.

Since this
process of checking the transducer-move occurs bit-by-bit, we
require $O(l|\tmove|)$ many global states in this stage.
At the end of the verification process, $\cA(\transpn)$ is
once again in state $\main$ and the new multiset is the result of the
addition of a $w'$ thread and removal of the $w$ thread from the old
multiset of threads. 
We can now simulate the next transducer-move.

\subparagraph{Checking for Coverability}
At any point when $\cA
(\transpn)$ is in the state $\main$, it makes a non-deterministic
choice between simulating the next transducer-move or checking for
coverability. In the latter case, it goes into a special $\check$
state where the active thread is compared letter by letter with $w_
{\mathit{final}}$ in a process similar to initialization. At the end
of the checking process, $\cA
(\transpn)$ reaches the state $\halt$. 
If the check fails at any intermediate point, $\cA
(\transpn)$ terminates without reaching the $\halt$ state. 
We require a further $O(l)$ states for checking coverability.

\subparagraph{Fork and Join} We have shown above how a
single
transducer-move is
simulated assuming that it is a transducer-move from $\tmove$. In
general, the transducer-move could be from $\tjoin$ or $\tfork$ as
well. In these two cases, we have
triples of the form $(w,w',w'')$ accepted by the transducer.
However, in the former, we read $w,w'$ and guess $w''$ while
in the latter, we read $w$ and guess $w',w''$.
In the case of $\tjoin$, once we have read $w$, we
non-deterministically switch to a thread containing $w'$ as
its
contents. Whenever the threads picked
during the read-stage and the threads created during the guess-stage
do not agree with the guessed transitions of the transducer-move,
we encounter a problem during the verify-stage and $\cA
(\transpn)$ terminates without reaching the $\halt$ state.

\subparagraph{Context Switches}
Every thread (other than the initial one for $w_{\mathit{init}}$) is created
during the guess-stage and then switched out once. The next time it is
switched in, it is read and ceases to exist. This implies that there
exists a run of $\cA
(\transpn)$ simulating a run of $\transpn$ where every thread
undergoes at most one context switch. Conversely, we show that a run
of $\cA
(\transpn)$  reaching $\halt$ where every thread is bounded by at most 1
context switch
implies the existence of a run in $\transpn$ which covers the final
marking as desired.

This concludes our overview of the construction of $\cA
(\transpn)$ and completes the reduction of coverability for
$\TDPN$ to $\SRP[1]$
for $\DCPS$.
The global memory is polynomial in the size of $\transpn$.
Similarly, the stack
alphabet is expanded to include $O(l\cdot|\Sigma|)$ bit symbols,
hence the alphabet of $\cA(\transpn)$ is polynomial as well.
In summary, $\cA(\transpn)$ can be produced in time polynomial in the
size of the input. 
Details of the reduction
are in Appendix \ref{appendix:killDCPS}.

\begin{remark}
Our lower bound holds already for $\DCPS$ where the stack of each thread is
bounded by a linear function of the size of the $\DCPS$. 
Thus, as a corollary, we get $\TWOEXPSPACE$-hardness for a related model in which
each thread is a \emph{Boolean program}, i.e., where each thread has its stack bounded by a constant
but has a polynomial number (in the size of $|G|+|\Gamma|+|\Delta|$) of local Boolean variables. 
This closes the gap from \cite{kaiser2010dynamic} as well as other similar models 
studied in the literature \cite{CookKS07,Kochems14,DOsualdoKO13}.
\end{remark}

%!TEX root = ./main.tex
\section{Recursive Net Programs ($\rnp$)} \label{sec:lower_bound}

We prove Theorem~\ref{thm:cover_tdpn} by adapting the Lipton construction~\cite{lipton1976reachability}, as it is explained in \cite{Esp98a}, 
to our succinct representation of Petri nets. Our construction requires two steps.
First we reduce termination for bounded counter programs to
termination for \emph{Petri net} programs which do not allow zero tests. 
Second, we reduce termination of net programs with to coverability for $\TDPN$.

For the first step, we have to show how we can simulate the operation of a bounded
counter program with one without zero tests.
In the Lipton construction, this is achieved by constructing a gadget
that performs zero tests for counters bounded by some bound $B$. These
gadgets are obtained by transforming a gadget for bound $B$ into a
gadget for $B^2$. Starting with $B=2$ and applying this transformation
$n$ times leads to a gadget for $B=2^{2^n}$. One then has to argue
that the resulting net program still has linear size in the parameter $n$. 
For a $\TWOEXPSPACE$ lower bound, one would need to simulate a program
where the bound is triply exponential in $n$.
A naive implementation of the gadget would then lead to a
program with triply exponential counter values, but exponential program size in $n$.

In order to argue later that the resulting program can be encoded in a
small $\TDPN$, we will present the Lipton construction in a different
way.  Instead of growing the program with every gadget transformation,
we implement the gadgets recursively using a stack. 
We call these programs \emph{recursive net programs} ($\rnp$).
This way, when we instantiate the model for a triply exponential bound on the counters (to get $\TWOEXPSPACE$-hardness 
instead of $\EXPSPACE$-hardness), the resulting programs still have
polynomial size control flow. 
Note that at run time, such programs can have an exponentially deep stack;
however, this very large stack does not form part of the program description. 
We shall show that $\rnp$ have a natural encoding as $\TDPN$.

For the second step, we reduce termination for $\rnp$ to coverability for $\TDPN$. 
To this end, we borrow some techniques from the original construction to translate an $\rnp$ into an exponential sized Petri net. 
We then assign binary addresses to its places and construct transducers for those pairs and triples that correspond to transitions. 
This results in a $\TDPN$ of polynomial size. 
Finally, we argue that we do not need the whole exponential sized Petri net 
to reason about the transducers, and that just a polynomial size part suffices. This then gives us a polynomial time procedure.

\subsection{From Bounded Counter Programs to $\rnp$} \label{sec:MinskyToRNP}

\subparagraph{Bounded Counter Programs}
A \emph{counter program} is a finite sequence of \emph{labelled commands} separated by semicolons. 
Let $l, l_1, l_2$ be \emph{labels} and $x$ be a \emph{variable} (also called a \emph{counter}). 
The labelled commands have one of the following five forms:
\begin{alignat*}{2}
  (1)~l&: \text{\textbf{inc }}x; &&~\text{ \texttt{//} increment}\\
  (2)~l&: \text{\textbf{dec }}x; &&~\text{ \texttt{//} decrement}\\
  (3)~l&: \text{\textbf{halt}}\\
  (4)~l&: \text{\textbf{goto }} l_1; &&~\text{ \texttt{//} unconditional jump}\\
  (5)~l&: \text{\textbf{if }} x = 0 \text{\textbf{ then goto }} l_1 \text{\textbf{ else goto }} l_2; &&~\text{ \texttt{//} conditional jump}
\end{alignat*}

Variables can hold values over the natural numbers, labels have to be
pairwise distinct, but can otherwise come from some arbitrary set.
For convenience, we require each program to contain exactly one \textbf{halt} command at the very end. 
The \emph{size} $|C|$ of a counter program $C$ is the number of its labelled commands.

During execution, all variables start with initial value $0$. 
The semantics of programs follows from the syntax, except for the case of decrementing a variable whose value is already~$0$. 
In this case, the program aborts, which is different from proper termination, i.e., the execution of the \textbf{halt} command. 
It is easy to see that each counter program has only one execution, meaning it is deterministic. 
This execution is \emph{$k$-bounded} if none of the variables ever
reaches a value greater than $k$ during it. 

Let $\exp^{m+1}(x) := \exp(\exp^m(x))$ and $\exp^1(x) = \exp(x) := 2^x$.
The $N$-fold exponentially bounded \emph{halting problem} (also called \emph{termination}) for counter programs ($\HP[N]$) is given by:
\begin{description}
  \item[Input] A unary number $n \in \nats$ and a counter program $C$.
  \item[Question] Does $C$ have an $\exp^{N}(n)$-bounded execution that reaches the \textbf{halt} command?
\end{description}
We make use of the following well-known result regarding this problem:
\begin{theorem}
  For each $N > 0$, the problem $\HP[N+1]$ is $N$-$\EXPSPACE$-complete.
\end{theorem}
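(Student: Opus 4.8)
This is a classical correspondence between bounded counter programs and space-bounded Turing machines, so the plan is to give the standard simulation in both directions; the one idea to keep in mind throughout is that a counter bounded by $\exp^{N+1}(n)$ holds $\log_2(\exp^{N+1}(n)) = \exp^{N}(n)$ bits of information, which is exactly the amount of tape available to an $N$-$\EXPSPACE$ machine. This single-exponent shift between the counter budget and the space class is what makes $\HP[N+1]$ land in $N$-$\EXPSPACE$ rather than $(N{+}1)$-$\EXPSPACE$.

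For membership in $N$-$\EXPSPACE$, I would directly simulate the unique (deterministic) execution of the input program $C$. Each counter value stays below $\exp^{N+1}(n)$, so it can be stored in binary using $\lceil \log_2(\exp^{N+1}(n)+1)\rceil = \exp^{N}(\mathrm{poly}(n))$ bits; since the number $v$ of variables is bounded by $|C|$, the whole configuration fits in $\exp^{N}(\mathrm{poly}(n))$ space. I step the program forward, rejecting as soon as a counter would exceed the bound or a decrement is attempted on value $0$. To rule out a nonterminating but bounded execution, I maintain a step counter: the number of reachable configurations is at most $|C|\cdot(\exp^{N+1}(n)+1)^{v} = \exp^{N+1}(\mathrm{poly}(n))$, so after that many steps without reaching \textbf{halt} the deterministic execution must have repeated a configuration and hence loops, and I reject. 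The step counter needs only $\exp^{N}(\mathrm{poly}(n))$ bits, so the total working space is $\exp^{N}(\mathrm{poly}(n))$, placing the problem in $N$-$\EXPSPACE$.

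For $N$-$\EXPSPACE$-hardness, I would reduce from an arbitrary language decided by a deterministic Turing machine $M$ running in space $\exp^{N}(p(n))$. Given an input $w$ with $|w| = n$, the reduction outputs a counter program $C_w$ and a unary bound $m = q(n)$, both polynomial in $n$, whose unique execution simulates $M$ on $w$. A configuration of $M$ is encoded in a constant number of counters: the tape contents to the left and to the right of the head are each stored as a base-$s$ number, where $s$ is the tape-alphabet size, so these counters range up to $s^{\exp^{N}(p(n))} = \exp^{N+1}(\mathrm{poly}(n))$. Reading the scanned cell is extraction of the least significant digit, and moving the head is shifting a digit between the two counters; both reduce to multiplication and division by the constant $s$, which are realized by short loops using auxiliary counters and the available zero tests. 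The finite control of $M$ is tracked by the labels of $C_w$, so each transition of $M$ becomes a fixed block of commands and the whole program stays polynomial. The blank-padded space bound $\exp^{N}(p(n))$, needed to detect the head leaving its allotted region, can be generated and compared against using $N$ levels of nested doubling loops; since $N$ is a fixed parameter this adds only $O(1)$ blocks. I then argue that $M$ accepts $w$ iff the execution of $C_w$ reaches \textbf{halt} without any counter exceeding $\exp^{N+1}(q(n))$, choosing $q$ so that $\exp^{N+1}(q(n))$ dominates every counter value in the simulation; writing $m = q(n)$ in unary costs only polynomial space.

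The main obstacle is the hardness direction: verifying that the tape operations — digit extraction and the head shift via multiplication and division by $s$ — can be performed with only a constant number of counters, polynomial-size code, and counter values never leaving the $\exp^{N+1}$ budget, and that generating and testing the space bound $\exp^{N}(p(n))$ keeps the program polynomial. These are standard counter-machine constructions, available precisely because counter programs retain zero tests (unlike the net programs of later sections), so the difficulty is careful bookkeeping rather than a conceptual hurdle.
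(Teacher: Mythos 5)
Your proposal is correct and is essentially the argument the paper appeals to: the paper gives no detailed proof, stating only that the case of arbitrary $N$ matches the classical proof for $N=1$ used in Lipton's construction, and that classical proof is exactly what you reconstruct (membership via binary-encoded simulation with a step counter to detect looping, hardness via encoding the two halves of a space-bounded tape as base-$s$ counters with digit extraction and head shifts by constant multiplication/division, plus nested doubling loops to generate the space bound). Your bookkeeping of the exponent shift ($\log$ of an $\exp^{N+1}(n)$ counter being $\exp^{N}(n)$ bits) and the restriction $N>0$ needed to absorb the constant factor $\log_2 s$ into the bound are both handled correctly.
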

The proof for arbitrary $N$ matches the proof for $N = 1$, which the Lipton construction used.

\subparagraph{Recursive Net Programs}
The definition of \emph{recursive net programs} ($\rnp$) also involves sequences of labelled commands separated by semicolons. Let $l, l_1, l_2$ be labels, $x$ be a variable, and \texttt{proc} be a procedure name. Then the labelled commands can still have one of the previous forms (1) to (4). However, form (5) changes from a conditional to a nondeterministic jump, and there are two new forms for procedure calls:
\begin{alignat*}{2}
  (1)~l&: \text{\textbf{inc }}x; &&~\text{ \texttt{//} increment}\\
  (2)~l&: \text{\textbf{dec }}x; &&~\text{ \texttt{//} decrement}\\
  (3)~l&: \text{\textbf{halt}}\\
  (4)~l&: \text{\textbf{goto }} l_1; &&~\text{ \texttt{//} unconditional jump}\\
  (5)~l&: \text{\textbf{goto }} l_1 \text{\textbf{ or goto }} l_2; &&~\text{ \texttt{//} nondeterministic jump}\\
  (6)~l&: \text{\textbf{call} \texttt{proc}}; &&~\text{ \texttt{//} procedure call}\\
  (7)~l&: \text{\textbf{return}}; &&~\text{ \texttt{//} end of procedure}
\end{alignat*}

In addition to labelled commands, these programs consist of a finite set $\mathsf{PROC}$ of procedure names and also a \emph{maximum recursion depth} $k \in \mathbb{N}$. Furthermore, they not only contain one sequence of labelled commands to serve as the main program, but also include two additional sequences of labelled commands for each procedure name $\text{\texttt{proc}} \in \mathsf{PROC}$. The second sequence for each \texttt{proc} is not allowed to contain any \textbf{call} commands and serves as a sort of ``base case'' only to be called at the maximum recursion depth. Each label has to be unique among all sequences and each jump is only allowed to target labels of the sequence it belongs to. Each $\rnp$ contains exactly one \textbf{halt} command at the end of the main program. For $\text{\texttt{proc}} \in \mathsf{PROC}$ let $\#c(\texttt{proc})$ be the number of commands in both of its sequences added together and let $\#c(\texttt{main})$ be the number of commands in the main program. Then the \emph{size} of an $\rnp$ $R$ is defined as $|R| = \lceil\log{k}\rceil + \#c(\texttt{main}) + \sum_{\text{\texttt{proc}} \in \mathsf{PROC}} \#c(\texttt{proc})$.

The semantics here is quite different compared to counter programs: If
the command ``$l:\text{\textbf{call} \texttt{proc}}$'' is executed, the label $l$ gets pushed onto the call stack. Then if the stack contains less than $k$ labels, the first command sequence pertaining to \texttt{proc}, which we now call $\text{\texttt{proc}}_{<\text{max}}$, is executed. If the stack already contains $k$ labels, the second command sequence, $\text{\texttt{proc}}_{=\text{max}}$, is executed instead. Since $\text{\texttt{proc}}_{=\text{max}}$ cannot call any procedures by definition, the call stack's height (i.e.\ the \emph{recursion depth}) is bounded by $k$. On a $\text{return}$ command, the last label gets popped from the stack and we continue the execution at the label occurring right after the popped one.

How increments and decrements are executed depends on the current recursion depth $d$ as well. For each variable $x$ appearing in a command, $k + 1$ copies $x_0$ to $x_k$ are maintained during execution. The commands \textbf{inc}~$x$ resp.\ \textbf{dec}~$x$ are then interpreted as increments resp.\ decrements on $x_d$ (and not $x$ or any other copy). As before, all these copies start with value~$0$ and decrements fail at value~$0$, which is different from proper termination.

Instead of a conditional jump, we now have a nondeterministic one, that allows the program execution to continue at either label. Regarding termination we thus only require there to be at least one execution that reaches the \textbf{halt} command. This gives us the following halting problem for $\rnp$:
\begin{description}
  \item[Input] An $\rnp$ $R$
  \item[Question] Is there an execution of $R$ that reaches the \textbf{halt} command?
\end{description}

We now adapt the Lipton construction to recursive net programs. We start with a $\exp^2(n)$-bounded counter program $C$ with a set of counters $X$ and construct an $\rnp$ $R(C)$ with maximum recursion depth $n+1$ that terminates iff $C$ terminates. The number of commands in $R(C)$ will be linear in $|C|$.

\subparagraph{Auxiliary Variables}

The construction of $R(C)$ involves simulating the zero test. To this end, we introduce for each counter $x \in X$ a complementary counter $\bar{x}$ and ensure that the invariant $x_0 + \bar{x}_0 = \exp^2(n)$ always holds. We can then simulate a zero test on $x$ by checking that $\bar{x}$ can be decremented $\exp^2(n)$ times. This requires us to implement a decrement by $\exp^2(n)$ in linearly many commands and also a similar increment to reach a value of $\exp^2(n)$ for $\bar{x}$ from its initial value $0$ at the start of the program. Furthermore, we need helper variables $s$, $\bar{s}$, $y$, $\bar{y}$, $z$, and $\bar{z}$. We also sometimes need to increment or decrement the $(d+1)$th copy of one of these six variables at recursion level $d$. As an example, for incrementing $s_{d+1}$ in this way, we define the procedure \texttt{s\_inc}:
\[
  \text{\texttt{s\_inc}}_{<\text{max}}\colon~\text{\textbf{inc }}s;\text{\textbf{return}} \qquad \text{\texttt{s\_inc}}_{=\text{max}}\colon~\text{\textbf{inc }}s;\text{\textbf{return}}
\]
The analogous procedures for $\bar{s}$, $y$, $\bar{y}$, $z$, and $\bar{z}$ are defined similarly.

\subparagraph{Program Structure}

The program $R(C)$ consists of two parts: The initial part $R_{\mathit{init}}(C)$, which initializes all the complementary counters as mentioned above, followed by $R_{\mathit{sim}}(C)$, the part that simulates $C$. We construct $R_{\mathit{sim}}(C)$ from $C$ by replacing some of its commands. Increments of the form \textbf{inc}~$x$ are replaced by \textbf{dec}~$\bar{x};$\textbf{inc}~$x$, decrements \textbf{dec}~$x$ are replaced by \textbf{dec}~$x;$\textbf{inc}~$\bar{x}$. Unconditional jumps and the \textbf{halt} command stay the same. Each conditional jump (form (5) for counter programs)
is replaced by
\begin{align*}
  l&: \text{Test}(x, l_\text{continue}, l_2);\\
  l_\text{continue}&: \text{Test}(\bar{x}, l_1, l_2)
\end{align*}
where Test$(x,l_{zero},l_{nonzero})$ is what we call a \emph{macro}. We use it as syntactic sugar to be replaced by its specification for the actual construction of $R(C)$. This is in contrast to procedures, which refer to specific parts of the program that can be called to increase the recursion depth.

\subparagraph{Test Macros and Decrement Procedure}

The macro Test is specified in the left part of Figure~\ref{fig:Testdec}. It involves a call to the procedure \texttt{dec}, which is defined in the right part of the same figure. Below Test we have also specified the variant Test$_{+1}$, which is used in \texttt{dec}. The main difference is that Test$_{+1}$ can only be invoked on variables $y$ or $z$ and acts on their $(d+1)$th copy at recursion depth $d$.

\begin{figure}[t]
  \centering
  {\small
  \begin{minipage}[t]{.4\textwidth}
    \centering
    \begin{align*}
      \text{Test}&(x, l_\text{zero}, l_\text{nonzero}):\\[1em]
      &~\text{\textbf{goto }} l_\text{nztest} \text{\textbf{ or goto }} l_\text{loop};\\
      l_\text{nztest}:&~ \text{\textbf{dec }}x;\text{\textbf{inc }}x;\text{\textbf{goto }} l_\text{nonzero};\\
      l_\text{loop}:&~ \text{\textbf{dec }}\bar{x};\text{\textbf{inc }}x;\text{\textbf{call} }\bar{\text{\texttt{s}}}\text{\texttt{\_dec}};\text{\textbf{call} \texttt{s\_inc}};\\
      &~\text{\textbf{goto }} l_\text{exit} \text{\textbf{ or goto }} l_\text{loop};\\
      l_\text{exit}:&~ \text{\textbf{call} \texttt{dec}};\text{\textbf{goto }} l_\text{zero}\\[2em]
      \text{Test}_{+1}&(v, l_\text{zero}, l_\text{nonzero}):\\[1em]
      &~\text{\textbf{goto }} l_\text{nztest} \text{\textbf{ or goto }} l_\text{loop};\\
      l_\text{nztest}:&~ \text{\textbf{call} \texttt{v\_dec}};\text{\textbf{call} \texttt{v\_inc}};\text{\textbf{goto }} l_\text{nonzero};\\
      l_\text{loop}:&~ \text{\textbf{call} }\bar{\text{\texttt{v}}}\text{\texttt{\_dec}};\text{\textbf{call} \texttt{v\_inc}};\\
      &~ \text{\textbf{call} }\bar{\text{\texttt{s}}}\text{\texttt{\_dec}};\text{\textbf{call} \texttt{s\_inc}};\\
      &~\text{\textbf{goto }} l_\text{exit} \text{\textbf{ or goto }} l_\text{loop};\\
      l_\text{exit}:&~ \text{\textbf{call} \texttt{dec}};\text{\textbf{goto }} l_\text{zero}
    \end{align*}
  \end{minipage}
  \begin{minipage}[t]{.4\textwidth}
    \centering
    \begin{align*}
      \text{\texttt{dec}}_{<\text{max}}&:\\[1em]
      l_\text{outer}:&~ \text{\textbf{call} \texttt{y\_dec}};\text{\textbf{call} }\bar{\text{\texttt{y}}}\text{\texttt{\_inc}};\\
      l_\text{inner}:&~ \text{\textbf{call} \texttt{z\_dec}};\text{\textbf{call} }\bar{\text{\texttt{z}}}\text{\texttt{\_inc}};\\
      &~ \text{\textbf{dec }}s;\text{\textbf{inc }}\bar{s};\\
      &~\text{Test}_{+1}(z, l_\text{next}, l_\text{inner});\\
      l_\text{next}:&~ \text{Test}_{+1}(y, l_\text{exit}, l_\text{outer});\\
      l_\text{exit}:&~ \text{\textbf{return}}\\[2em]
      \text{\texttt{dec}}_{=\text{max}}&:\\[1em]
      &~ \text{\textbf{dec }}s;\text{\textbf{inc }}\bar{s};\text{\textbf{dec }}s;\text{\textbf{inc }}\bar{s};\\
      &~ \text{\textbf{return}}
    \end{align*}
  \end{minipage}
  }
  \caption{Definitions of the macros Test and Test$_{+1}$ as well as the procedure \texttt{dec}.
  Regarding the second macro we require $v \in \{y,z\}$.}
  \label{fig:Testdec}
\end{figure}

Semantically, \texttt{dec} at recursion depth $d$ decrements $s_d$ by $\exp^2(n+1-d)$ (and increments $\bar{s}_d$ by the same amount). Both variants of Test simulate a conditional jump and have the side effect of switching the values $x_d$ and $\bar{x}_d$ if the tested variable $x_d$ was $0$. Because of this, every conditional jump of $C$ gets replaced by two instances of the Test macro, where the second one reverses the potential side effect. 

The decrements of procedure \texttt{dec} are performed via two nested loops that each run $\exp^2(n-d)$-times. Each of these loops uses a helper variable $y_{d+1}$ or $z_{d+1}$ that has to be tested for zero at the end, using the Test$_{+1}$ macro. This involves transferring the helper variable's value to $s_{d+1}$ and then calling \texttt{dec} at the next recursion depth. Essentially, any decrement by $\exp^2(j)$ for some $j$ is implemented using $\exp^2(j-1)$ many decrements by $\exp^2(j-1)$ via the nested loops. This iterative squaring of the value by which we decrement continues down to the base case of $\exp^2(0) = 2$.

\subparagraph{Semantics}

Our construction is semantically very similar to the Lipton construction, barring two main differences: Firstly, instead of having $n+1$ different procedure definitions of \texttt{dec} (one per level $d$), we only need two because of recursion. The case for the Test macros is similar, as is the case of the helper variables $s$, $y$, $z$ and their complements. Secondly, our variable copies start with index $0$ counting upwards, whereas in the Lipton construction the variables start with index $n$ and count downwards. This means that for some index $d$ we have the invariant $s_d + \bar{s}_d = \exp^2(n+1-d)$ in our construction, where it is $s_d + \bar{s}_d = \exp^2(d)$ for Lipton. While the invariant of the Lipton construction is simpler, ours allows us to define the recursion depth starting at 0 and going upwards, which seemed more natural for recursion.

Let us give a more precise analysis regarding the effect of the Test macros and \texttt{dec} procedure. During the execution of \texttt{dec} at recursion depth $d$, we begin with $s_d = \exp^2(n+1-d)$, $y_{d+1} = z_{d+1} = \exp^2(n-d)$, and $\bar{s}_d = \bar{y}_{d+1} = \bar{z}_{d+1} = 0$. The invariants $s_d + \bar{s}_d = \exp^2(n+1-d)$, $y_{d+1} + \bar{y}_{d+1} = \exp^2(n-d)$, and $z_{d+1} + \bar{z}_{d+1} = \exp^2(n-d)$ are upheld throughout. At the end we have $s_d = \bar{y}_{d+1} = \bar{z}_{d+1} = 0$, $y_{d+1} = z_{d+1} = \exp^2(n-d)$, and $\bar{s}_d = \exp^2(n+1-d)$, meaning the decrements were performed correctly and all helper variables retain their initial values. The situation is quite similar for Test and Test$_{+1}$, if the variable to be tested was initially $0$. In the non-zero case, the tested variable is just decremented and incremented once, whereas no other variables are touched. All executions that differ from the described behavior are guaranteed to get stuck.

Correctness of these semantics is proven by induction on the recursion depth in Appendix~\ref{sec:decinduction}.
It requires the assumptions $x_0 + \bar{x}_0 = \exp^2(n)$, $v_{d} + \bar{v}_{d} = \exp^2(n+1-d)$, and $\bar{v}_{d} = 0$ for all $x \in X$, $v \in \{\bar{s},y,z\}$, and $d > 0$.

\subparagraph{Initialization}

We now have to construct $R_{\mathit{init}}(C)$ in such a way, that it
performs all the necessary increments for these assumptions to hold at
the start of $R_{\mathit{sim}}(C)$. It has the following form:

\begin{alignat*}{2}
  &~ \text{\textbf{call} \texttt{inc}}; &&\\
  l_\text{loop}:&~ \text{\textbf{call} \texttt{y\_dec}};\text{\textbf{call} }\bar{\text{\texttt{y}}}\text{\texttt{\_inc}}; &&\\
  &~ \text{\textbf{inc }}\bar{x};\ldots &&\text{ \texttt{//} copy for each }x \in X\\
  &~ \text{Test}_{+1}(y, l_\text{exit}, l_\text{loop}); &&\\
  l_\text{exit}:&~ \ldots &&\text{ \texttt{//} first command of }R_{\mathit{sim}}(C)
\end{alignat*}
Here, \texttt{inc} is the procedure defined in Figure \ref{fig:inc}. Semantically, it performs the correct amount of increments for all copies of $\bar{s}$, $\bar{y}$ and $\bar{z}$. This is again achieved using iterative squaring. Because \texttt{inc}'s first command is a call to itself, it first handles all the variable copies at higher recursion depth, before continuing at the current recursion depth $d$. Therefore using $\bar{s}_{d+1}$, $\bar{y}_{d+1}$ and $\bar{z}_{d+1}$ as part of the Test$_{+1}$ macro does not cause any problems.

Following the call to \texttt{inc}, the remainder of $R_{\mathit{init}}(C)$ then performs the correct increments regarding $\bar{x}_0$ for each $x \in X$. Since $y_1$ had the same target value as these, we use it as a counter for the loop that realizes these last increments. The Test$_{+1}$ macro at the end then conveniently resets the value of $y_1$, once it is decremented to $0$.

Like for \texttt{dec} and the Test macros, the proof of correctness regarding these semantics requires an induction on the recursion depth and can be found in Appendix~\ref{sec:decinduction}.

\begin{figure}[t]
  {\small
  \centering
  \begin{minipage}{.4\textwidth}
    \centering
    \begin{align*}
      \text{\texttt{inc}}_{<\text{max}}&:\\[1em]
      &~ \text{\textbf{call} \texttt{inc}};\\
      l_\text{outer}:&~ \text{\textbf{call} \texttt{y\_dec}};\text{\textbf{call} }\bar{\text{\texttt{y}}}\text{\texttt{\_inc}};\\
      l_\text{inner}:&~ \text{\textbf{call} \texttt{z\_dec}};\text{\textbf{call} }\bar{\text{\texttt{z}}}\text{\texttt{\_inc}};\\
      &~ \text{\textbf{inc }}y;\text{\textbf{inc }}z;\text{\textbf{inc }}\bar{s};\\
      &~ \text{Test}'_{+1}(z, l_\text{next}, l_\text{inner});\\
      l_\text{next}:&~ \text{Test}'_{+1}(y, l_\text{exit}, l_\text{outer});\\
      l_\text{exit}:&~ \text{\textbf{return}}
    \end{align*}
  \end{minipage}
  \hspace{1em}
  \begin{minipage}{.4\textwidth}
    \centering
    \begin{align*}
      \text{\texttt{inc}}_{=\text{max}}&:\\[1em]
      &~ \text{\textbf{inc }}y;\text{\textbf{inc }}y;\\
      &~ \text{\textbf{inc }}z;\text{\textbf{inc }}z;\\
      &~ \text{\textbf{inc }}\bar{s};\text{\textbf{inc }}\bar{s};\\
      &~ \text{\textbf{return}}
    \end{align*}
  \end{minipage}
  }
  \caption{Definition of the procedure \texttt{inc}.}
  \label{fig:inc}
\end{figure}

\subparagraph{Size Analysis}
To give a brief size analysis of $R(C)$, $\mathsf{PROC}$ contains 14 procedure names, whose corresponding definitions have constant size. For each command in $C$, $R_{\mathit{sim}}(C)$ contains constantly many commands, and $R_{\mathit{init}}(C)$ has linearly many commands in the size of the variable set $X$. Since wlog.\ each variable of $C$ is involved in at least one of its commands, the amount of commands in $R(C)$ is linear in $|C|$. Here, for doubly exponential counter values, we would not even need $n$ to be given in unary since only $\lceil\log(n+1)\rceil$ factors into the size of $R(C)$.

\subparagraph{Handling Triply Exponential Counter Values}
The exact same construction with a maximum recursion depth of $2^n + 1$ can be used to simulate a counter program with counters bounded by $\exp^3(n)$: Starting with $2$ and squaring $n$-times yields $\exp^2(n)$, therefore squaring $2^n$ times instead yields $\exp^3(n)$. The correctness follows from the same inductive proofs as before. For this changed maximum recursion depth, configurations contain exponentially in $n$ many counter values and also maintain a call stack of size up to $2^n$. However, since the maximum recursion depth can be encoded in binary, its size is still polynomial in the unary encoding of $n$. Thus, the halting problem for recursive net programs is $\TWOEXPSPACE$-hard.

\subsection{From $\rnp$ to $\TDPN$} \label{sec:RNPtoTPN}

Figure~\ref{fig:netcommands} and Figure~\ref{fig:netprocedures} show how the commands of recursive net programs can be simulated by Petri net transitions. This is again done in similar fashion to Esparza's description~\cite{Esp98a} of the Lipton construction~\cite{lipton1976reachability}. As we can see, this involves only the three types of transitions defined by our transducers.

\begin{figure}[t]
  \centering
  \def \dist{1}
{\small
  \begin{tikzpicture}[scale=0.8]
    \node[place,label=right:$l_{1,d}$] (p1) at (0,0) {};
    \node[place,label=right:$l_{2,d}$] (p2) at (0,-2*\dist) {};
    \node[place,label=right:$x_{d}$] (p3) at (1.5,-1*\dist) {};
    \node[transition] at (0,-1*\dist) {}
      edge[pre] (p1)
      edge[post] (p2)
      edge[post] (p3);
    \node at (0.5, -2*\dist - 1.2) {$\begin{aligned} l_1&: \text{\textbf{inc }}x; \\ l_2&: \ldots \end{aligned}$};
    
    \node[place,label=right:$l_{1,d}$] (q1) at (3.5,0) {};
    \node[place,label=right:$l_{2,d}$] (q2) at (3.5,-2*\dist) {};
    \node[place,label=right:$x_{d}$] (q3) at (5,-1*\dist) {};
    \node[transition] at (3.5,-1*\dist) {}
      edge[pre] (q1)
      edge[post] (q2)
      edge[pre] (q3);
    \node at (4, -2*\dist - 1.2) {$\begin{aligned} l_1&: \text{\textbf{dec }}x; \\ l_2&: \ldots \end{aligned}$};
    
    \node[place,label=right:$l_{1,d}$] (t1) at (7,0) {};
    \node[place,label=right:$w_{halt}$] (t2) at (7,-2*\dist) {};
    \node[transition] at (7,-1*\dist) {}
      edge[pre] (t1)
      edge[post] (t2);
    \node at (7.25, -2*\dist - 1.2) {$l_1\colon \text{\textbf{halt}};$};

    \node[place,label=right:$l_{1,d}$] (r1) at (10, 0) {};
    \node[place,label=right:$l_{2,d}$] (r2) at (10,-2*\dist) {};
    \node[transition] at (10,-1*\dist) {}
      edge[pre] (r1)
      edge[post] (r2);
    \node at (10.25, -2*\dist - 1.2) {$l_1\colon \text{\textbf{goto }} l_2;$};
    
    \node[place,label=right:$l_{1,d}$] (s1) at (14,0) {};
    \node[place,label=right:$l_{2,d}$] (s2) at (13,-2*\dist) {};
    \node[place,label=right:$l_{3,d}$] (s3) at (15,-2*\dist) {};
    \node[transition] at (13,-1*\dist) {}
      edge[pre] (s1)
      edge[post] (s2);
    \node[transition] at (15,-1*\dist) {}
      edge[pre] (s1)
      edge[post] (s3);
    \node at (14.25, -2*\dist - 1.2) {$l_1\colon \text{\textbf{goto }} l_2 \text{\textbf{ or goto }} l_3;$};
  \end{tikzpicture}
}
  \caption{Petri net transitions for five of the seven command types found in recursive net programs. Here, $d \in \{0,\ldots,k\}$, where $k$ is the maximum recursion depth.}
  \label{fig:netcommands}
\end{figure}
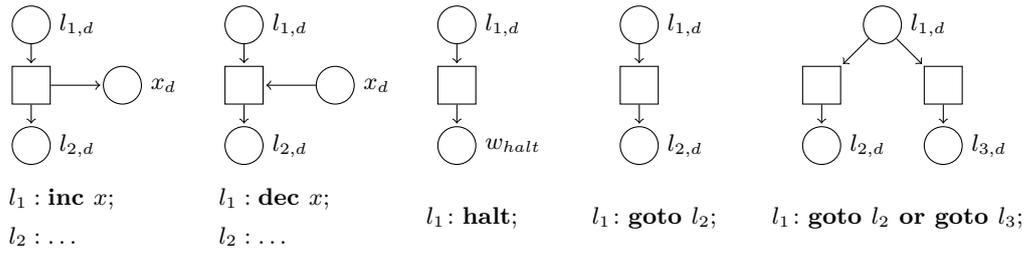

\begin{figure}[t]
  \def \dist{1.5}
  \centering
  \begin{tikzpicture}[]
    \node[place,label=above:$l_{1,d}$] (l1) at (-2,0) {};
    \node[place,label=above:$l_{1,d}$\_calls\_\texttt{proc}] (l1callsl3) at (2,0) {};
    \node[place,label=above:$l_{2,d}$] (l2) at (6,0) {};
    
    \node[place,label=below:$l_{3,d+1}$] (l3) at (0,-\dist) {};
    \node at (1,-\dist) {$\cdots$};
    \node[place,label=below:$l_{4,d+1}$] (l4) at (2,-\dist) {};
    \node[place,label=below right:return\_\texttt{proc}$_{d+1}$] (returnl3) at (4,-\dist) {};
    \node[transition] at (3,-\dist) {}
      edge[pre] (l4)
      edge[post] (returnl3);
    \node[right] at (7, -\dist) {$\begin{aligned} \text{\texttt{proc}}\colon~l_3:&~\ldots \\ &~\vdots \\ l_4:&~\text{\textbf{return}}; \end{aligned}$};
    
    \node[transition] at (-1,0) {}
      edge[pre] (l1)
      edge[post] (l3)
      edge[post] (l1callsl3);
    \node[transition] at (5,0) {}
      edge[pre] (l1callsl3)
      edge[pre] (returnl3)
      edge[post] (l2);
    \node[right] at (7,0) {$\begin{aligned}
      l_1&: \text{\textbf{call} \texttt{proc}}; \\ l_2&: \ldots \end{aligned}$};
  \end{tikzpicture}

  \caption{Petri net transitions for procedure calls found in recursive net programs. Here, $d \in \{0,\ldots,k-1\}$, where $k$ is the maximum recursion depth.}
  \label{fig:netprocedures}
\end{figure}
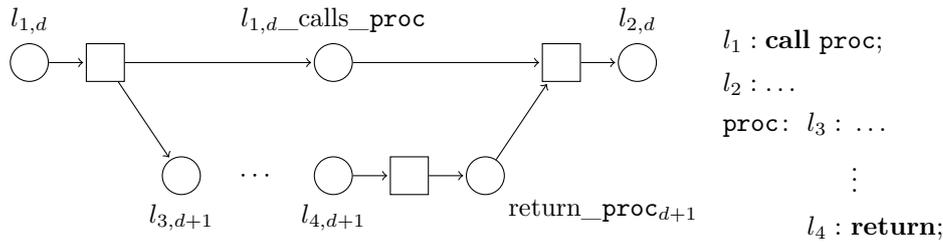

Let us give more detail regarding the Petri net construction: Given an
$\rnp$ $R$ with maximum recursion depth $k$ we construct a
transducer-defined Petri net $\mathcal{N} = (w_{\mathit{init}},$ $w_{\mathit{final}},$ $\mathcal{T}_{\mathit{move}},$ $\mathcal{T}_{\mathit{fork}},$ $\mathcal{T}_{\mathit{join}})$, which defines the Petri net $N(\cN) = (P,T,F,p_0,p_f)$, such that $\multi{p_f}$ is coverable in $N(\mathcal{N})$ iff there is a terminating execution of $R$. We begin by arguing about the shape of $N(\mathcal{N})$ and then construct our transducers afterwards.

The idea is for $N(\mathcal{N})$ to start with one place per variable and one place per label, as well as one auxiliary place for each \textbf{call} command and each $\text{\texttt{proc}} \in \mathsf{PROC}$, which can be seen in Figure~\ref{fig:netprocedures}. Additionally, there is also a single auxiliary place $w_{\mathit{halt}}$ for the \textbf{halt} command. Let the number of all these places be $h$. Then each such place gets copied $k+1$ times, so that a copy exists for each possible recursion depth. Transitions get added at each recursion depth $d$ according to Figure~\ref{fig:netcommands} and Figure~\ref{fig:netprocedures}, whereas some transitions in the latter also connect to places of recursion depth $d+1$.

Regarding the transducers, we use the alphabet $\{0,1\}$. Every place address $w = u.v$ has a prefix $u$ of length $\lceil\log{h}\rceil$ and a postfix $v$ of length $\lceil\log{k}\rceil$. We assign each of the $h$ places that $N(\mathcal{N})$ started with a number from $0$ to $h-1$. The binary representation of this number (with leading zeros) is used for the $u$-part of its address. For the $v$-part, we use the binary representation of the recursion depth $d$ (also with leading zeros), that a particular copy of this place corresponds to. The address of the place corresponding to the first label in the main program at recursion depth $0$ is used for $w_{\mathit{init}}$, whereas the one corresponding to $w_{\mathit{halt}}$ at recursion depth $0$ is used for $w_{\mathit{final}}$.

To accept a particular pair or triple of addresses as a transition, each of the three transducers distinguishes between all possibilities regarding the $u$-parts. Any pair or triple of $\lceil\log{h}\rceil$-length words that matches a particular transition of the right type (move, fork, join) has a unique path in the transducer, while all non-matching pairs or triples do not. Then for the $v$-parts, the transducer needs to either check for equality, if all places correspond to the same recursion depth, or for one binary represented number to be one higher. Since it is clear from the $u$-parts, whether the recursion depths should all match or not, we can just connect the unique paths to the correct part of the transducer at the end.

The transducer parts for the first $\lceil\log{h}\rceil$ bits require to distinguish between up to\linebreak $2^{3\log{h}} = 8h$ possibilities, meaning they require polynomially in $h$ many states. The parts for the last $\lceil\log{k}\rceil$ bits can easily be constructed using polynomially many states in $\log{k}$.
More details on this construction can be found in Appendix~\ref{sec:TPNconstruction}.
Since $h$ is linear in the number of commands in $R$ and $\lceil\log{k}\rceil$ is the size of the binary encoding of the maximum recursion depth, $\mathcal{N}$ is of polynomial size compared to $R$. Because we can construct $\mathcal{N}$ by first constructing $N(\mathcal{N})$ without the copies for each recursion depth, this is feasible in polynomial time. Thus, the coverability problem for transducer-defined Petri nets is $\TWOEXPSPACE$-hard.

%!TEX root = ./main.tex
\section{Discussion}
\label{sec:discussion}

The chain of reductions in Sections~\ref{sec:tdpn_to_dcps} and~\ref{sec:lower_bound}
complete the $\TWOEXPSPACE$ lower bound for $1$-bounded reachability for $\DCPS$.
In fact, an inspection of the reductions show a technical strengthening: 
the $\TWOEXPSPACE$ lower bound already holds for $\SRP[1]$ of $\DCPS$ 
which satisfy two additional properties, \emph{boundedness} and \emph{local termination}. 

\begin{definition}
	A $\DCPS$ $\cA$ is said to be \emph{bounded} if there is a global
	bound $B \in \nats$ on the size of every configuration of every run of $\cA$.
	It is \emph{locally terminating} if every
	infinite run of $\cA$ contains infinitely many context switches.
      \end{definition}

Consider the chain of reductions from the halting problem for bounded counter programs
to $\rnp$ to $\TDPN$ to $\SRP[1]$.
The configurations of the counter programs, by definition, are bounded by a triply-exponential
bound on the parameter $n$.
This bound translates to bounds on the $\rnp$ and $\TDPN$ instances.
In particular, the number of places in the $\TDPN$ produced in the reduction
is exponentially bounded in $n$ and the number of
tokens on these places is triple-exponentially bounded in $n$. 
The $\DCPS$ constructed from the $\TDPN$ uses the stack of a thread to store
an address of a place; thus, the height of a stack is bounded by a polynomial in $n$.
In addition, since the number of tokens in the $\TDPN$ correspond to the number of in-progress
threads in the $\DCPS$, this implies a triple exponential (in $n$) bound on the number of threads
in any execution of the $\DCPS$. 
Thus, the size of every configuration in every run of the $\DCPS$ is bounded.

Second, the rules of the constructed $\DCPS$ do not allow any
one thread to run indefinitely. In other words, any non-terminating
run of the $\DCPS$ must involve infinitely many threads and the run
contains infinitely many context switches.

\begin{theorem}
	The $\SRP[1]$ problem for bounded, locally terminating $\DCPS$ is $\TWOEXPSPACE$-hard.
\end{theorem}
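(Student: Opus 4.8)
The plan is not to construct a fresh reduction but to reopen the one already used to prove the lower bound of Theorem~\ref{thm:srp_dcps} and check that every $\DCPS$ it outputs already lies in the restricted class. Concretely, I would start from an $\exp^3(n)$-bounded instance of the halting problem for counter programs (the source of $\TWOEXPSPACE$-hardness, obtained by running the $\rnp$ construction at recursion depth $2^n+1$), push it through the maps counter program $\to \rnp \to \TDPN \to \DCPS$, and then verify the two properties \emph{boundedness} and \emph{local termination} directly on the resulting $\DCPS$ $\cA(\transpn)$. Since this is the same chain underlying Theorem~\ref{thm:srp_dcps}, correctness of the reduction is already established; only the two structural properties remain to be checked.

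For boundedness I would track the relevant magnitudes along the chain and, crucially, argue they hold on \emph{every} run rather than only on the intended simulating one. The counter values are triple-exponentially bounded in $n$; the complementary-counter construction enforces invariants of the form $x + \bar{x} = \exp^2(\cdots)$ that are preserved by every paired increment/decrement, so the total number of tokens of the $\TDPN$ $\transpn$ is triple-exponentially bounded regardless of firing order. In $\cA(\transpn)$ a token is a token-thread whose stack is a place address of length $l$, which is polynomial in $n$; adding the $\top$ lock symbol and the at most $O(l)$ transient bit- and $\gamma_{\mathit{spawn}}$-threads present during a single transducer-move, every stack stays polynomially high and the thread count stays triple-exponentially bounded. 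Hence there is a single bound $B$ on $\card{c}$ for every configuration $c$ of every run.

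Local termination is the easier property: by design no thread of $\cA(\transpn)$ can execute indefinitely. Each token-thread is created during a guess-stage, immediately switched out, and at its next activation is read letter-by-letter and then destroyed; bit-threads are consumed by the kill mechanism; and the simulation loop on $\main$ only makes progress by handing control between threads. Thus every maximal intra-thread computation has bounded length, so any infinite run of $\cA(\transpn)$ must use the switch relation $\mapsto$ infinitely often, i.e.\ contain infinitely many context switches. Combining the two observations, the reduction for Theorem~\ref{thm:srp_dcps} already targets bounded, locally terminating $\DCPS$, which yields the claim.

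The step I expect to be the main obstacle is the uniform boundedness claim over all schedules: a priori an adversarial sequence of context switches could try to spawn token-threads without faithfully simulating $\transpn$ and thereby escape the triple-exponential bound. The resolution is the locking discipline (Solution~3): every inactive token-thread carries $\top$ on top, and operations on $\top$ and on bit-symbols are disabled outside the designated sub-stage, so any context switch taken at an ``illegal'' moment simply gets stuck instead of producing new threads. Together with the firing-order-independent invariant $x + \bar{x} = \mathit{const}$, this confines the thread count on all runs, completing the argument.
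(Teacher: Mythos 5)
Your proposal is correct and follows essentially the same route as the paper: the paper's own argument (Section~\ref{sec:discussion}) likewise reopens the chain counter program $\to \rnp \to \TDPN \to \DCPS$, derives boundedness from the triply-exponential counter bound translating into a polynomial stack height and a triply-exponential thread count, and derives local termination from the fact that no single thread of $\cA(\transpn)$ can run indefinitely. Your extra care about adversarial schedules (invoking the locking discipline and the firing-order-independent invariants to get the bound on \emph{all} runs, not just faithful ones) is a sound elaboration of a point the paper states more tersely, but it is the same argument.
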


\bibliography{bibliography}

\appendix

%!TEX root = ./main.tex
\section{Proofs for Section \ref{sec:tdpn_to_dcps}}
\label{appendix:killDCPS}
For simplicity, we assume that threads die when they have an empty
stack.

In order to explain
our simulation of a $\TDPN$ by a $\DCPS$, we introduce the following
extension of $\DCPS$ which is easily seen to be syntactic sugar in
that it
does not add any more power to the $\DCPS$ model.
\begin{definition}
\label{def:kill_dcps}
	A \textit{kill}-$\DCPS$ $\cA=(G,\Gamma,\Delta,g_0,\gamma_0)$ is a
	$\DCPS$ with following additional features:

	\begin{itemize}
	 	\item $\Gamma=\Gamma_{\mathit{reg}}
	\cup \Gamma_{\mathit{kill}}$ is the
	disjoint union of two sets of symbols.
	\item Only finite state transitions are allowed on the set $\Gamma_{\mathit{kill}}$. In other words, for any $\gamma \in \Gamma_{\mathit{kill}}$, for any
	rule of the form 
	\[ g| \gamma \hookrightarrow g'|w \triangleright
	\; \gamma'' \quad \quad 
	\text{ or } \quad \quad
	 g| \gamma \hookrightarrow g'|w\]
	it is the case that $w \in (\Gamma_{\mathit{kill}} \cup \varepsilon)$.
	\item We allow {kill}-rules of
	the
	following form in $\Delta$ for $\gamma,\gamma' \in \Gamma_{\mathit{kill}}$:
	\[ g| \gamma \hookrightarrow g'|\gamma \not \triangleright
	\; \gamma' \quad \quad \text{or} \quad \quad  g| \gamma
	\hookrightarrow g'|\varepsilon \not \triangleright
	\; \gamma'\]
Corresponding to such rules,	the relation $\Rightarrow_i$ for $\cA$
 additionally includes the
	following transitions:
	\[ \langle g,(\gamma,i), \mmap \oplus [(\gamma',j)] \rangle
	\rightarrow_i \langle
	g',(\gamma,0),\mmap \rangle \quad \text{or} \quad \langle g,(\gamma,i), \mmap \oplus [(\gamma',j)] \rangle
	\rightarrow_i \langle
	g',(\varepsilon,0),\mmap \rangle\]

	 \end{itemize} 
	 Note that the number of context switches of the active thread
	 drops
	 to zero on application of a kill rule.
	 
	 The relation $\Rightarrow_{\leq k}$ and its transitive closure
	 $\Rightarrow_{\leq k}^*$ are defined as for a $\DCPS$, with the
	 additional restriction that the killed thread $
	 (\gamma',j)$ in every application of a kill rule must satisfy $j
	 \leq
	 k$. A
	 configuration $c$ of $\cA$ is said to be $k$-bounded reachable by a
	 kill-$\DCPS$ if $\langle g_0,(\gamma_0,0),\emptyset \rangle
	 \Rightarrow_{\leq k}^* c$ with this restriction.
\end{definition}

\begin{proposition}
\label{prop:kill_simulation}
Reachability for kill-$\DCPS$ is reducible in polynomial time to
reachability for $\DCPS$.
\end{proposition}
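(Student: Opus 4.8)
The goal is to show that a kill-$\DCPS$ can be simulated by an ordinary $\DCPS$ with only polynomial blowup, so reachability is preserved under a polynomial-time reduction. The plan is to replace each kill-rule by a short sequence of ordinary $\DCPS$ rules that together remove the target thread $(\gamma',j)$ from the multiset while returning the active thread to its original state and top-of-stack symbol. This is exactly the three-step ``guess and verify'' mechanism sketched in the Claim in the main text: (i) the active thread with top symbol $\gamma$ spawns a fresh witness thread carrying a special symbol $\gamma_{\mathit{spawn}}$ and the global state moves to a dedicated $\kill$ state that records $g'$ together with $\gamma$ and $\gamma'$; (ii) a context switch is forced to a thread whose top-of-stack symbol is $\gamma'$, whose contents are then popped, moving to a $\return$ state that carries the stored information forward; (iii) a final context switch back to the $\gamma_{\mathit{spawn}}$ thread replaces that symbol by $\gamma$ and sets the global state to $g'$.

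First I would fix the translation on rules: every non-kill rule of the kill-$\DCPS$ is kept verbatim, and each kill-rule $g\mid\gamma\hookrightarrow g'\mid\gamma\not\triangleright\gamma'$ (and the $\varepsilon$-variant) is expanded into the three ordinary rules above, introducing $O(1)$ fresh global states $\kill, \return$ and one fresh stack symbol $\gamma_{\mathit{spawn}}$ per kill-rule. This keeps $|G|$, $|\Gamma|$, and $|\Delta|$ within a polynomial factor, so the construction is computable in polynomial time. I would then state the correspondence between configurations: a configuration $c$ of the kill-$\DCPS$ corresponds to a configuration of the $\DCPS$ with the same state, same active thread, and same multiset, and argue that $c \Rightarrow_{\leq k} c'$ in the kill-$\DCPS$ iff the corresponding configurations are related by $\Rightarrow_{\leq k}^*$ in the $\DCPS$, where non-kill steps correspond one-to-one and a single kill-step corresponds to the three-step gadget.

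The main obstacle is proving that the gadget cannot introduce spurious reachability, i.e.\ the only way for the $\DCPS$ to leave the $\kill/\return$ states and return to a ``clean'' state $g'$ is by faithfully executing steps (i)--(iii) on a genuinely present thread with top symbol $\gamma'$. The key structural fact to exploit is that the $\kill$ and $\return$ states permit \emph{no} rule except the designated ones: from $\kill$ the system is stuck unless it switches to a thread whose top is exactly $\gamma'$, and from $\return$ it is stuck unless it switches to the $\gamma_{\mathit{spawn}}$-thread. Because the witness thread was just spawned, it is present and has the correct unique symbol, so step (iii) can always complete, and because each intermediate special state forbids all other transitions, no extra threads can be consumed or created and the global store cannot be mutated along the way. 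I would also check the context-switch accounting: steps (i)--(iii) involve two context switches of auxiliary threads, but these threads are freshly created for the kill and discarded, so they do not affect the $K$-bound on the threads of the underlying simulation; the restriction $j \le k$ on the killed thread matches the $\Rightarrow_{\leq k}$ bound on both sides. Assembling these observations gives the bidirectional simulation and hence the reduction, establishing the proposition.
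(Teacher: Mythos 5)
Your gadget is essentially the one sketched in the Claim of Section~\ref{sec:tdpn_to_dcps} and formalized in Appendix~\ref{appendix:killDCPS}, but it omits the one step that makes that construction sound: the \emph{original active thread must erase itself}. In the paper's construction, after spawning the witness, the active thread pops its own $\gamma$ while moving to an intermediate $g_{\mathit{spawn}}$-state and thereby dies (the appendix works under the convention that threads with empty stack disappear); the freshly spawned witness is then the \emph{replacement} of the active thread, which is exactly why the kill-$\DCPS$ semantics resets the active thread's context-switch number to $0$. In your gadget the active thread keeps $\gamma$ and is merely switched out before step~(ii), so the gadget terminates in $\langle g',(\gamma,0),\mmap\oplus\multi{(\gamma,i+1)}\rangle$ rather than in $\langle g',(\gamma,0),\mmap\rangle$: a junk copy of the killer thread survives in the multiset. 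This already falsifies your stated configuration correspondence (``same state, same active thread, same multiset''), and it is not harmless, because leftover $\Gamma_{\mathit{kill}}$-threads are extra resources; your claim that ``no extra threads can be consumed or created'' is precisely what fails. Concretely, take $\gamma_a,\gamma_b\in\Gamma_{\mathit{kill}}$, let the main thread spawn exactly one $\gamma_a$-thread and one $\gamma_b$-thread while advancing its state to $g_1$, and include the kill rules $g_1|\gamma_b \hookrightarrow g_2|\gamma_b \not\triangleright \gamma_a$ and $g_2|\gamma_b \hookrightarrow g_3|\gamma_b \not\triangleright \gamma_b$. In the kill-$\DCPS$, $g_3$ is unreachable for every bound $k$: when the first kill fires, the unique $\gamma_b$-thread is the active one, so afterwards no $\gamma_b$-victim exists in the multiset. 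In your constructed $\DCPS$, the first gadget leaves the junk thread $(\gamma_b,1)$ behind, and the second gadget consumes it as its victim, so $g_3$ becomes reachable with all switch counters at most $1$. Hence the converse direction of your reduction fails.

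Relatedly, your context-switch accounting is off: of the two switches inside the gadget, the first switches \emph{out} the original active thread --- a pre-existing thread, neither freshly created nor discarded --- and that is exactly the step that manufactures the junk; the second switches \emph{in} the pre-existing victim, whose counter $j\le k$ you did account for correctly. The repair is small and lands you on the paper's construction: make the active thread empty before any switch can occur, either with two rules as in the appendix ($g|\gamma \hookrightarrow (g_{\mathit{spawn}},\ldots)|\gamma \triangleright \gamma_{\mathit{spawn}}$, followed by popping $\gamma$ to $\varepsilon$ in the $g_{\mathit{spawn}}$-state), or in a single rule $g|\gamma \hookrightarrow \kill|\varepsilon \triangleright \gamma_{\mathit{spawn}}$, which is legal since the pushed word may be $\varepsilon$. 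With that change the gadget's net effect coincides exactly with the kill rule, and the rest of your argument (verbatim treatment of non-kill rules, an induction over gadget occurrences --- the paper inducts on the number of configurations containing $g_{\mathit{spawn}}$ in the run --- and the polynomial size bound) goes through as in the paper.
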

We show that for every kill-$\DCPS$ $\cA$, there exists a $\DCPS$
$\cA'$ such that
	$\cA'$ can be produced in time polynomial in the size of $\cA$ and if
	$\mathrm{config}(\cA),\mathrm{config}
	(\cA')$ are the
	set of configurations of $\cA,\cA'$ respectively, then
	$\mathrm{config}(\cA)
	\subseteq \mathrm{config}(\cA')$ and
	any configuration $c \in \mathrm{config}(\cA)$ is $k$-bounded reachable in
	$\cA$ iff $c$
	is $k$-bounded
	reachable in
	$\cA'$.

\begin{proof}
	Let $\cA=(G,\Gamma,\Delta,g_0,\gamma_0)$ where $\Gamma=\Gamma_{\mathit{reg}} \cup
	\Gamma_{\mathit{kill}}$, then there can be at most
	$O(|Q|^2|\Gamma^2|)$ many kill rules.  Let $\cA'=
	(G',\Gamma \cup \{ \gamma_{\mathit{spawn}}\},\Delta',g_0,\gamma_0)$ where
	$G'=G \cup G_{\mathit{kill}}$ with
	$G_{\mathit{kill}}=(\{g_{\mathit{kill}},g_{\mathit{return}},g_{\mathit{spawn}}\}\times Q^2 \times \Gamma^2 \times \{nopop,pop\})
	$, $\Delta'=\Delta \cup \Delta_{\mathit{kill}}$ where $\Delta_{\mathit{kill}}$ is
	described below.

We demonstrate for the case
	corresponding to a kill-rule $g| \gamma \hookrightarrow g'|\gamma
	\not \triangleright
	\; \gamma'$. The case where $\gamma$ is popped and results in
	$\varepsilon$ uses the states with the marker $pop$ in the global
	state with appropriate
	modifications. We include the
	following set of transitions
	in
	$\Delta_{\mathit{kill}}$:
	\begin{enumerate}
		\item $g| \gamma \hookrightarrow (g_{\mathit{spawn}},g,g',\gamma,\gamma',nopop)|\gamma \triangleright
			\gamma_{\mathit{spawn}}$ 
			\item $(g_{\mathit{spawn}},g,g',\gamma,\gamma',nopop)| \gamma \hookrightarrow (g_{\mathit{kill}},g,g',\gamma,\gamma',nopop)|\varepsilon$
			\item $(g_{\mathit{kill}},g,g',\gamma,\gamma',nopop) | \gamma'
			\hookrightarrow (g_{\mathit{return}},g,g',\gamma,\gamma',nopop)|\varepsilon$
			\item $(g_{\mathit{return}},g,g',\gamma,\gamma',nopop) | \gamma_{\mathit{spawn}}
			\hookrightarrow
			g'|\gamma$.
	\end{enumerate}
	Note that in the case of $\gamma$ being popped, Step 4 above is
	modified so that the right hand side is $g'|\varepsilon$. 
Since we add only finitely many new rules to $\cA'$ for each kill-rule
of $\cA$, $\Delta'$ is bigger than $\Delta$ by $O
(|Q|^2|\Gamma|^2)$, $Q'$ is bigger than $Q$ by $O
(|Q|^2|\Gamma|^2)$, the new alphabet size is one more than the old
alphabet size and hence
$\cA'$ is polynomial in the size of $\cA$.

It is clear that the application of any kill-rule in $\cA$ can be
simulated by
the extra rules that we have added to $\cA'$. Corresponding to any
transition in $\cA$ of the
form
\[ \langle g,(\gamma,i), \mmap \oplus \multi{(\gamma',j)} \rangle
	\overset{\cA}\rightarrow_i \langle
	g',(\gamma,0),\mmap \rangle,\]
we have the following sequence of transitions in $\cA'$
\begin{flalign*}
	&\langle g,(\gamma,i), \mmap \oplus \multi{(\gamma',j)} \rangle\\
\overset{\cA'}\rightarrow_i &\langle (g_{\mathit{spawn}},g,g',\gamma,\gamma',nopop),(\gamma,i), \mmap \oplus \multi{
(\gamma_{\mathit{spawn}},0),(\gamma',j)}\rangle\\
\overset{\cA'}\rightarrow_i &\langle (g_{\mathit{kill}},g,g',\gamma,\gamma',nopop),(\varepsilon,i), \mmap \oplus \multi{
(\gamma_{\mathit{spawn}},0),(\gamma',j)}\rangle\\
\overset{\cA'}\mapsto_i &\langle (g_{\mathit{kill}},g,g',\gamma,\gamma',nopop),(\gamma',j), \mmap \oplus \multi{
(\gamma_{\mathit{spawn}},0)} \rangle\\
\overset{\cA'}\rightarrow_j &\langle (g_{\mathit{return}},g,g',\gamma,\gamma',nopop),(\varepsilon,j), \mmap \oplus \multi{
(\gamma_{\mathit{spawn}},0)}\rangle\\
\overset{\cA'}\mapsto_j &\langle (g_{\mathit{return}},g,g',\gamma,\gamma',nopop),(\gamma_{\mathit{spawn}},0), \mmap  \rangle\\
\overset{\cA'}\rightarrow_0 &\langle g',(\gamma,0), \mmap \rangle.
\end{flalign*}
Conversely, we will prove by induction on the number of configurations
of the form $\langle (g_{\mathit{spawn}},$ $g,$ $g',$ $\gamma,$ $\gamma'),$ $(\gamma,$ $i),$ $\mmap
\rangle$ (i.e. those which contain $g_{\mathit{spawn}}$ in their state) in a
run $\rho'$ of $\cA'$ that
$k$-bounded reaches a
configuration
$\langle g,(\gamma,i),\mmap \rangle \in \mathrm{config}(\cA)$ that
there is a
corresponding $k$-bounded run $\rho$ of $\cA$ which reaches $\langle
q,(\gamma,i),\mmap \rangle$.\\
In the base case, there are no configurations involving $g_{\mathit{spawn}}$.
This means that only the rules in $\Delta$ are used in $\rho'$ and the
required witness $\rho=\rho'$.\\
Suppose we have $i+1$ occurences of $g_{\mathit{spawn}}$ in $\rho'$. Let 
\begin{flalign}
 	\rho'= & c_0
\overset{\cA'}\Rightarrow_{\leq k}^* \langle g,(\gamma,i'),\mmap \oplus
\multi{(\gamma',j'),(\gamma'',j'')}
\rangle\\
\overset{\cA'}\rightarrow_{i'} &\langle (g_{\mathit{spawn}},g,g',\gamma,\gamma',nopop),(\gamma,i'), \mmap \oplus \multi{
(\gamma,0),(\gamma',j'),(\gamma'',j'')}\rangle\\
\overset{\cA'}\rightarrow_{i'} &\langle (g_{\mathit{kill}},g,g',\gamma,\gamma',nopop),(\varepsilon,i'), \mmap \oplus \multi{
(\gamma,0),(\gamma',j'),(\gamma'',j'')}\rangle\\
\overset{\cA'}\mapsto_{i'} &\langle (g_{\mathit{kill}},g,g',\gamma,\gamma',nopop),(\gamma',j'), \mmap \oplus \multi{
(\gamma,0),(\gamma'',j'')}\rangle\\
%insert context switch here
\overset{\cA'}\mapsto_{j'} &\langle (g_{\mathit{kill}},g,g',\gamma,\gamma',nopop),(\gamma'',j''), \mmap \oplus \multi{
(\gamma,0),(\gamma',j'+1)} \rangle\\
\overset{\cA'}\mapsto_{j''} &\langle (g_{\mathit{kill}},g,g',\gamma,\gamma',nopop),
(\gamma',j'+1), \mmap \oplus \multi{
(\gamma,0),(\gamma'',j''+1)} \rangle\\
\overset{\cA'}\rightarrow_{j'+1} &\langle (g_{\mathit{return}},g,g',\gamma,\gamma',nopop),(\varepsilon,j'+1), \mmap \oplus \multi{
(\gamma,0),(\gamma'',j''+1)}\rangle\\
\overset{\cA'}\mapsto_{j'+1} &\langle (g_{\mathit{return}},g,g',\gamma,\gamma' ,nopop),(\gamma,0), \mmap \oplus \multi{
		(\gamma'',j''+1)} \rangle\\
\overset{\cA'}\rightarrow_0 &\langle g',(\gamma,0), \mmap \oplus \multi{
(\gamma'',j''+1)} \rangle
 \end{flalign} 
 Note that the only transitions that could happen
 between lines 6 and 9 are context switches by construction i.e. none
 of the contents of any of the threads can be affected in any way.
 This fact also holds for context switches at the $g_{\mathit{spawn}}$ and
 $g_{\mathit{return}}$ states. The
 context switches in lines 7 and 8 above could be removed to get the
 following run:
\begin{flalign*}
 	\rho''= & c_0
\overset{\cA'}\Rightarrow_{\leq k}^* \langle g,(\gamma,i'),\mmap \oplus
\multi{(\gamma',j'),(\gamma'',j'')}
\rangle\\
\overset{\cA'}\rightarrow_{i'} &\langle (g_{\mathit{spawn}},g,g',\gamma,\gamma',nopop),(\gamma,i'), \mmap \oplus \multi{
(\gamma,0),(\gamma',j'),(\gamma'',j'')}\rangle\\
\overset{\cA'}\rightarrow_{i'} &\langle (g_{\mathit{kill}},g,g',\gamma,\gamma',nopop),(\varepsilon,i'), \mmap \oplus \multi{
(\gamma,0),(\gamma',j'),(\gamma'',j'')}\rangle\\
\overset{\cA'}\mapsto_{i'} &\langle (g_{\mathit{kill}},g,g',\gamma,\gamma',nopop),(\gamma',j'), \mmap \oplus \multi{
(\gamma,0),(\gamma'',j'')}\rangle\\
\overset{\cA'}\rightarrow_{j'} &\langle (g_{\mathit{return}},g,g',\gamma,\gamma',nopop),(\varepsilon,j'), \mmap \oplus \multi{
(\gamma,0),(\gamma'',j'')}\rangle\\
\overset{\cA'}\mapsto_{j'} &\langle (g_{\mathit{return}},g,g',\gamma,\gamma',nopop),(\gamma,0), \mmap \oplus \multi{(\gamma'',j'')} \rangle\\
\overset{\cA'}\rightarrow_0 &\langle g',(\gamma,0), \mmap \oplus \multi{
(\gamma'',j'')} \rangle
 \end{flalign*}
 By induction hypothesis, there exists a run $\tilde{\rho}$ in $\cA$
 such that $c_0
\overset{\cA}\Rightarrow_{\leq k}^* \langle g,(\gamma,i'),\mmap \oplus
\multi{(\gamma',j'),(\gamma'',j'')}
\rangle$ and we also have $\langle g,(\gamma,i'),\mmap \oplus
\multi{(\gamma',j'),(\gamma'',j'')}
\rangle \overset{\cA}\rightarrow_{j'} \langle g',(\gamma,0), \mmap \oplus \multi{
(\gamma'',j'')} \rangle$, where $j' \leq k$ by hypothesis, giving us
the run $\rho$. 
	\end{proof}

\subsection{Details of the $\DCPS$ simulating a $\TDPN$}
We will now give a polynomial time reduction of the coverability
problem for a transducer defined
Petri net to $\SRP[1]$ for $\DCPS$. We will infact construct a
kill-$\DCPS$, but as shown by Proposition \ref{prop:kill_simulation},
this suffices.

Let the transducer defined Petri net given be $\transpn=(w_{
\mathit{init}},w_{\mathit{final}},\tmove,\tjoin,\tfork)$ with the
three transducers sharing the
common alphabet $\Sigma$ and $w_{\mathit{final}}$ be the target state. Let the explicit Petri net
corresponding to
$\transpn$ be $N=(P,T,F,m_0)$ with target marking $m_f$, where $m_0$ 
(resp. $m_f$) is the
marking with one
token on $w_{\mathit{init}}$ (resp. $w_{\mathit{final}}$) and 0 tokens elsewhere. Let
$l=|w_{\mathit{init}}|$ be the length of
each word in $P$. Let $\tmove=
(\Sigma,\qmove,\initmove,\finmove,\dmove)$, $\tjoin=
(\Sigma,\qjoin,\initjoin,\finjoin,\djoin)$ and $\tfork=
(\Sigma,\qfork,\initfork,\finfork,\dfork)$.

In our description of the kill-$\DCPS$, we will use the placeholder
symbol $\Box$
with subscripts such as $\Box_1,\Box_2$ etc. We will then define the
combination of values tuples of these placeholders could take.

The kill-$\DCPS$ $\cA(\transpn)=(G,\Gamma,\Delta,g_0,\gamma_0)$ is
given as
follows:
\begin{itemize}
	\item $G=G_{\mathit{init}}\cup
	G_{\mathit{check}} \cup G_{\mathit{read}} \cup G_{\mathit{guess}} \cup G_{\mathit{verify}}$ , where
	\begin{itemize}
		\item $G_{\mathit{init}}=\{init\} \times \{ 1, \cdots, l\}$,
		\item $G_{\mathit{check}}=\{ g_{\mathit{main}},g_{\mathit{halt}}\} \cup \{ g_{\mathit{check}}\} \times
		\{
		0,1,\cdots,l\} \cup \{ g_{\mathit{guess}}\} \times \{ move,join,fork\}$,
		\item $G_{\mathit{read}}=\{move,fork,join\} \times \{read,unlock\} \times 
		\{1,\cdots,l\} \times \{
		pop1,pop2\}$,
		\item $G_{\mathit{guess}}=\{move,fork,join\} \times \{1, \cdots, l, toplock\}
		\times
		\{ push1,push2\}$,
		\item $G_{\mathit{verify}}=\{ \qmove \cup \qjoin \cup \qfork\} \times \{1,
		\cdots , l \} \times \{ pop1, pop2, push1, push2\} \times \{
		\dmove \cup \djoin \cup \dfork \cup \Sigma \} \}$,
		
	\end{itemize}
	\item $\Gamma = \Sigma \cup \{ \top, \gamma_{\mathit{guess}} \} \cup \Gamma_
	{kill}$ where \\
	$\Gamma_{\mathit{kill}}=\gamma_{\mathit{verify}} \cup \Sigma \times \{ 1, \cdots ,l\} \times \{
	pop1,pop2,push1,push2\}$.
	\item $g_0=(init,l)$.
	\item $\gamma_0=w_l$ where $w_{\mathit{init}}=w_1w_2 \cdots w_l$.
	
	\item $\Delta$ contains the following rules:
	\begin{itemize}
		
		\item Related to $G_{\mathit{init}}$:
		\begin{enumerate}
			\item $(init,i)|w_i \hookrightarrow (init,i-1)|w_{i-1}w_i$ for all
			$i \in \{ 2, \cdots, l\}$ and
			\item $(init,1)|w_1 \hookrightarrow g_{\mathit{main}}|\top w_1$
		\end{enumerate}
		
		\item Related to $G_{\mathit{check}}$:
		\begin{enumerate}
			\item For each $\Box_1 \in \{ move,join,fork\}$ :\[ g_{\mathit{main}}|\top
			\hookrightarrow (\Box_1,unlock,1,pop1)|\top\]
			\item \[ g_{\mathit{main}}|\top \hookrightarrow (g_{\mathit{check}},1)|\varepsilon \]
			
			\item For each $i \leq l-1$:
			\begin{flalign*}
			(g_{\mathit{check}}&,i)|b_i  \hookrightarrow (g_{\mathit{check}},i+1)|\varepsilon\\
			&\text{ where } w_{\mathit{final}}=b_1b_2...b_l
			\end{flalign*}
			\item 
			\begin{flalign*}
			(g_{\mathit{check}}&,l)|b_l  \hookrightarrow g_{\mathit{halt}}|\varepsilon\\
			&\text{ where } w_{\mathit{final}}=b_1b_2...b_l
			\end{flalign*}
			\item For each $a \in \Sigma$
			\begin{flalign*}
				(g_{\mathit{guess}},&\Box_1)|a  \hookrightarrow (\Box_1,l,push1)|\varepsilon
				\triangleright \gamma_{\mathit{guess}} \\
				&\text{where } \Box_1 \in \{ join,fork,move\}
			\end{flalign*}
		\end{enumerate}
		
		\item Related to $G_{\mathit{read}}$:
		\begin{enumerate}
			\item 
			\begin{flalign*}
			(\Box_1,&unlock, 1,\Box_2)| \top \hookrightarrow
			(\Box_1,read,1,\Box_2)| \varepsilon \\
			\text{where }& (\Box_1,\Box_2) \in \{ (move, pop1),(join,pop1),
			(fork,pop1),(join,pop2)\}
			\end{flalign*}
			\item For each $a \in \Sigma$ and $i \in \{1, \cdots, l-1 \}$ :
			\begin{flalign*}
				(\Box_1,&read,i,\Box_2)|a  \hookrightarrow 
				(\Box_1,read,i+1,\Box_2)|\varepsilon \triangleright (a,i,\Box_2) \\
				 \text{where}& (\Box_1,\Box_2) \in \{ (move,pop1),(join,pop1),
				 (join,pop2),(fork,pop1) \} 
			\end{flalign*}
			\item For each $a \in \Sigma$: 
			
			\begin{flalign*}
				(\Box_1,&read,l,\Box_2)|a  \hookrightarrow 
				(g_{\mathit{guess}},\Box_1)|a
				\triangleright (a,l,\Box_2)\\
				\text{where}& (\Box_1,\Box_2) \in \{ (move,pop1), 
				(join,pop2),(fork,pop1)\}
			\end{flalign*}
			\item For each $a \in \Sigma$: 
			\begin{flalign*}
				(join,&read,l,pop1)|a  \hookrightarrow 
				(join,unlock,1,pop2)|\varepsilon
				\triangleright (a,l,pop1)
			\end{flalign*}
			
		\end{enumerate}
		
		\item Related to $G_{\mathit{guess}}$: 
		\begin{enumerate}
			\item For each $a \in \Sigma$:
			\begin{flalign*}
			(\Box_1,&l,\Box_2)|\gamma_{\mathit{guess}}  \hookrightarrow 
			(\Box_1,l-1,\Box_2)|a
				\triangleright (a,l,\Box_2)\\
				\text{where}& (\Box_1,\Box_2) \in \{ (move,push1),(join,push1),
				(fork,push2),(fork,push1)\}
				\end{flalign*}			
			\item For each $a,b \in \Sigma, 1< i < l$:
			\begin{flalign*}
				(\Box_1,&i,\Box_2)|a  \hookrightarrow (\Box_1,i-1,\Box_2)|b.a
				\triangleright (b,i,\Box_2)\\
				\text{where}& (\Box_1,\Box_2) \in \{ (move,push1),(join,push1),
				(fork,push2),(fork,push1)\}
			\end{flalign*}		
				\item For each $a,b \in \Sigma$:
				\begin{flalign*}
			(\Box_1,&1,\Box_2)|a  \hookrightarrow 
			(\Box_1,toplock,\Box_2)|b.a
				\triangleright (b,1,\Box_2)\\
				\text{where}& (\Box_1,\Box_2) \in \{ (move,push1),(join,push1),
				(fork,push2),(fork,push1)\}
				\end{flalign*}
				\item For each $a \in \Sigma$:
				\begin{flalign*}
			(\Box_1,&toplock,\Box_2)|a  \hookrightarrow 
			\Box_3|\top.a \triangleright \gamma_{\mathit{verify}}\\
				\text{where }& (\Box_1,\Box_2) \in \{ (move,push1),(join,push1),
				(fork,push2)\} \text{ and }\\
				&\Box_3 \in G_{\mathit{verify}}  \text{ has the form } (
				\Box_4,1,pop1,\delta) \text{ with
				}\\ 
				& (\Box_1,\Box_4) \in \{ (\mathit{fork},q_{\mathit{fork0}}),
				(\mathit{join},q_{\mathit{join0}}), (\mathit{move},q_{
				\mathit{move0}})\}
				\text{and } \\
				&\delta \text{ is a
				transition with source } \Box_4
				\end{flalign*}
				
				\item For each $a \in \Sigma$:
				\begin{flalign*}
			(fork,&toplock,push1)|a  \hookrightarrow 
			(fork,l,push2)|\top.a \triangleright \gamma_{\mathit{guess}}
				\end{flalign*}
		\end{enumerate}
		
		\item Related to $G_{\mathit{verify}}$. 
		\begin{enumerate}
			
			\item For each $q \in \qmove, i \leq l-1, \delta=(q  \xrightarrow{
			(a_1,a_2)} q') \in \dmove$:
			\begin{flalign*}
				(q,i,pop1,\delta)|
				\gamma_{\mathit{verify}}  \hookrightarrow
				(q,i,push1,\delta)|\gamma_{\mathit{verify}} \not
				\triangleright (a_1,i,pop1)
			\end{flalign*}
			\item For each $q \in \qmove, i \leq l-1, \delta=(q  \xrightarrow{
			(a_1,a_2)} q'), \delta'=(q'  \xrightarrow{
			(a_1',a_2')} q'') \in \dmove$:
			\begin{flalign*}
				(q,i,push1,\delta)|
				\gamma_{\mathit{verify}} & \hookrightarrow
				(q',i+1,pop1,\delta')|\gamma_{\mathit{verify}} \not
				\triangleright (a_2,i,push1)
			\end{flalign*}
			\item For each $q \in \qmove, \delta=(q \xrightarrow{(a_1,a_2)} q')
			\in \dmove \text{ with } q' \in
			\finmove$:
			\begin{flalign*}
				(q,l,pop1,\delta)|
				\gamma_{\mathit{verify}} & \hookrightarrow
				(q,l,push1,\delta)|\gamma_{\mathit{verify}}
				\not
				\triangleright (a_1,l,pop1)
			\end{flalign*}
			\item For each $q \in \qmove, \delta=(q \xrightarrow{(a_1,a_2)} q')
			\in \dmove \text{ with } q' \in
			\finmove$:
			\begin{flalign*}
				(q,l,push1,\delta)|
				\gamma_{\mathit{verify}} & \hookrightarrow
				g_{\mathit{main}}|\varepsilon
				\not
				\triangleright (a_2,l,push1)
			\end{flalign*}

			\item For each $q \in \qjoin, i \leq l-1, \delta=(q \xrightarrow{
			(a_1,a_2,a_3)} q') \in \djoin$:
			\begin{flalign*}
				(q,i,pop1,\delta)|
				\gamma_{\mathit{verify}} & \hookrightarrow
				(q,i,pop2,\delta)|\gamma_{\mathit{verify}} \not
				\triangleright (a_1,i,pop1)
			\end{flalign*}

			\item For each $q \in \qjoin, i \leq l-1, \delta=(q \xrightarrow{
			(a_1,a_2,a_3)} q') \in \djoin$:
			\begin{flalign*}
				(q,i,pop2,\delta)|
				\gamma_{\mathit{verify}} & \hookrightarrow
				(q,i,push1,\delta)|\gamma_{\mathit{verify}} \not
				\triangleright (a_2,i,pop2)
			\end{flalign*}

			\item For each $q \in \qjoin, i \leq l-1, \delta=(q \xrightarrow{
			(a_1,a_2,a_3)} q')$, \\ $\delta'=$ $(q'$ $\xrightarrow{
			(a_1,a_2,a_3)}q'')$ $
			\in \djoin$:
			\begin{flalign*}
				(q,i,push1,\delta)|
				\gamma_{\mathit{verify}} & \hookrightarrow
				(q',i+1,pop1,\delta')|\gamma_{\mathit{verify}} \not
				\triangleright (a_3,i,push1)
			\end{flalign*}

			\item For each $q \in \qjoin, \delta=(q \xrightarrow{(a_1,a_2,a_3)}
			q') \in \djoin \text{ with } q' \in
			\finjoin$:
			\begin{flalign*}
				(q,l,pop1,\delta)|
				\gamma_{\mathit{verify}} & \hookrightarrow
				(q,l,pop2,\delta)|\gamma_{\mathit{verify}}
				\not
				\triangleright (a_1,l,pop1)
			\end{flalign*}

			\item For each $q \in \qjoin, \delta=(q \xrightarrow{(a_1,a_2,a_3)}
			q') \in \djoin \text{ with } q' \in
			\finjoin$:
			\begin{flalign*}
				(q,l,pop2,\delta)|
				\gamma_{\mathit{verify}} & \hookrightarrow
				(q,l,push1,\delta)|\gamma_{\mathit{verify}}
				\not
				\triangleright (a_2,l,pop2)
			\end{flalign*}

			\item For each $q \in \qjoin, \delta=(q \xrightarrow{(a_1,a_2,a_3)}
			q') \in \djoin \text{ with } q' \in
			\finjoin$:
			\begin{flalign*}
				(q,l,push1,\delta)|
				\gamma_{\mathit{verify}} & \hookrightarrow
				g_{\mathit{main}}|\varepsilon
				\not
				\triangleright (a_3,l,push1)
			\end{flalign*}
			
			\item For each $q \in \qfork, i \leq l-1, \delta=(q \xrightarrow{
			(a_1,a_2,a_3)} q') \in \dfork$:
			\begin{flalign*}
				(q,i,pop1,\delta)|
				\gamma_{\mathit{verify}} & \hookrightarrow
				(q,i,push1,\delta)|\gamma_{\mathit{verify}} \not
				\triangleright (a_1,i,pop1)
			\end{flalign*}

			\item For each $q \in \qfork, i \leq l-1, \delta=(q \xrightarrow{
			(a_1,a_2,a_3)} q') \in \dfork$:
			\begin{flalign*}
				(q,i,push1,\delta)|
				\gamma_{\mathit{verify}} & \hookrightarrow
				(q,i,push2,\delta)|\gamma_{\mathit{verify}} \not
				\triangleright (a_2,i,push1)
			\end{flalign*}

			\item For each $q \in \qfork, i \leq l-1, \delta=(q \xrightarrow{
			(a_1,a_2,a_3)} q')$,\\ $\delta'=(q' \xrightarrow{(a_1',a_2',a_3')}
			q'')
			\in \dfork$:
			\begin{flalign*}
				(q,i,push2,\delta)|
				\gamma_{\mathit{verify}} & \hookrightarrow
				(q',i+1,pop1,\delta')|\gamma_{\mathit{verify}} \not
				\triangleright (a_3,i,push2)
			\end{flalign*}

			\item For each $q \in \qfork, \delta=(q \xrightarrow{(a_1,a_2,a_3)}
			q') \in \dfork \text{ with } q' \in
			\finfork$:
			\begin{flalign*}
				(q,l,pop1,\delta) q')|
				\gamma_{\mathit{verify}} & \hookrightarrow
				(q,l,push1,\delta)|\gamma_{\mathit{verify}}
				\not
				\triangleright (a_1,l,pop1)
			\end{flalign*}

			\item For each $q \in \qfork, \delta=(q \xrightarrow{(a_1,a_2,a_3)}
			q') \in \dfork \text{ with } q' \in
			\finfork$:
			\begin{flalign*}
				(q,l,push1,\delta)|
				\gamma_{\mathit{verify}} & \hookrightarrow
				(q,l,push2,\delta)|\gamma_{\mathit{verify}}
				\not
				\triangleright (a_2,l,push1)
			\end{flalign*}

			\item For each $q \in \qfork, \delta=(q \xrightarrow{(a_1,a_2,a_3)}
			q') \in \dfork \text{ with } q' \in
			\finfork$:
			\begin{flalign*}
				(q,l,push2,\delta)|
				\gamma_{\mathit{verify}} & \hookrightarrow g_{\mathit{main}}|\varepsilon
				\not
				\triangleright (a_3,l,push2)
			\end{flalign*}
		\end{enumerate}

	\end{itemize}
	
\end{itemize}

\begin{lemma}
	\label{lem:transpn_to_dcps}
The marking $\mmap_f$ is coverable in $N$ iff $g_{\mathit{halt}}$ is
1-bounded
reachable in $\cA(\transpn)$.
\end{lemma}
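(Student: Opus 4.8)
The plan is to establish a tight correspondence between markings of $N$ and those configurations of $\cA(\transpn)$ whose global state is $g_{\mathit{main}}$, and to use this correspondence in both directions. Call a configuration $c$ of $\cA(\transpn)$ \emph{clean} if $c.g = g_{\mathit{main}}$ and, after discarding empty-stack (dead) threads, $c.\mmap$ consists exactly of \emph{token-threads} $(\top w'', j)$ with $j \in \{0,1\}$, one per token, and no bit-threads, $\gamma_{\mathit{guess}}$-threads or $\gamma_{\mathit{verify}}$-threads. Such a clean $c$ \emph{represents} the marking $\tilde{\mmap}$ with $\tilde{\mmap}(w'')$ equal to the number of token-threads with content $\top w''$. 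I would first record the context-switch bookkeeping that makes the whole simulation $1$-bounded: every token-thread is built during a guess-stage (switched in with its count unchanged), is switched out exactly once when control passes to the verify-stage, and is afterwards switched back in only to be read to completion and die; hence it is switched out at most once. The bit-threads and the $\gamma_{\mathit{verify}}$-thread are removed by kill rules, which leave the active thread in place and reset its count (per Definition~\ref{def:kill_dcps}) and therefore never cost a context switch, and the killed bit-threads always carry count $0 \le 1$, respecting the kill restriction.

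For the forward direction I would prove the one-step claim: if $\tilde{\mmap} \autstep[t] \tilde{\mmap}'$ in $N$ and $c$ is clean representing $\tilde{\mmap}$, then $c \Rightarrow_{\le 1}^* c'$ for some clean $c'$ representing $\tilde{\mmap}'$. This is a case analysis on whether $t \in T_{\mathit{move}}$, $T_{\mathit{join}}$ or $T_{\mathit{fork}}$; in each case I switch to the input token-thread(s), run the read-stage (popping the content while spawning the indexed bit-threads), run the guess-stage (pushing a nondeterministically chosen output content while spawning the matching bit-threads), and run the verify-stage by guessing an accepting run $\delta_1\cdots\delta_l$ of the corresponding transducer — which exists precisely because $t$ lies in its language — killing the bit-threads in lockstep. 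Composing this claim along a covering run $\mmap_0 \autstep[t_1]\cdots\autstep[t_m]\mmap_m$ with $\mmap_f \preceq \mmap_m$, started from the clean configuration reached from the initial configuration by the initialization rules (which build $\top w_{\mathit{init}}$ and land in $g_{\mathit{main}}$), yields a clean configuration representing $\mmap_m$; since it contains a $\top w_{\mathit{final}}$ token-thread, switching to it and running the check-stage reaches $g_{\mathit{halt}}$.

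The backward direction is the main obstacle, since I must rule out that some \emph{unintended} $1$-bounded run — with context switches at arbitrary points, partially read threads, or reused bit-threads — reaches $g_{\mathit{halt}}$. The plan is a normalization argument driven by the two guarantees that the locking discipline is designed to provide. First, the global state deterministically cycles read $\to$ guess $\to$ verify $\to g_{\mathit{main}}$, the only genuine nondeterminism being, at $g_{\mathit{main}}$, the choice of transducer type or of entering the check-stage; so every accepting run factors into blocks, each a single transducer-move simulation or the final check. Second, within a block the locking forces atomicity: while reading or guessing, the global state admits no transition on $\top$ or on any bit-symbol, so switching to any other (locked) token-thread or to a bit-thread deadlocks immediately; and a token-thread currently being read already carries count $1$, so switching it out would raise its count to $2$ and violate the $1$-bound. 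Hence in any run that actually reaches $g_{\mathit{halt}}$ the read- and guess-stages run to completion on a single thread without interruption.

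I would then exploit the uniqueness property — for each index $i$ and tag in $\{pop1,pop2,push1,push2\}$ there is at most one live bit-thread at a time — to show that a verify-stage kill of $(a,i,\cdot)$ can fire only if the guessed letter $a$ equals the letter actually stored by the corresponding bit-thread. Chaining this with the source/target-state matching that the global state enforces on the $\delta_i$, the guessed sequence is forced to be a genuine accepting run of the transducer, so every simulated block is a real transition of $N$. Reading off the blocks of such a normalized run yields a run of $N$ whose reachable marking covers $\multi{w_{\mathit{final}}}$, since the check-stage succeeds only on a thread whose content below $\top$ is exactly $w_{\mathit{final}}$. This establishes both implications of the lemma; together with Proposition~\ref{prop:kill_simulation} it lets the equivalence be transported to an ordinary $\DCPS$ with only polynomial blow-up.
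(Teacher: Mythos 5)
Your proposal is correct and takes essentially the same approach as the paper's proof: both set up the invariant that configurations at $g_{\mathit{main}}$ faithfully represent markings (the paper's inductive ``Claim''), prove the forward direction by composing simulations of single transducer-moves followed by the check stage, and prove the backward direction by using the locking discipline, the context-switch budget, and the uniqueness of bit-threads per index/tag to force every run reaching $g_{\mathit{halt}}$ to factor into faithful transducer-move blocks. The only difference is presentational: the paper packages both directions into one biconditional claim proved by induction on the number of transducer-moves, whereas you separate a one-step forward simulation lemma from a backward normalization argument.
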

\begin{proof}
	We will prove the following statement towards establishing the
	equivalence:\\
\underline{	\textbf{Claim:}} A marking $\mmap_k$ of $N(\cN)$ can be
reached
in
$N(\cN)$ by using
	$k \geq 1$ transducer transitions iff for every place $w' \in P$ such
	that
	$\mmap_k(w') \geq 1$ there exists a 1-bounded run $\rho$ of $\cA
	(\transpn)$ such that $\rho=\langle g_0,(\gamma_0,0),\emptyset
	\rangle
	\Rightarrow^*_{\leq 1} \langle g_{\mathit{main}}, (\top w',1),\mmap \rangle$
	where
	the
	following
	conditions hold:
	\begin{itemize}
		\item $g_{\mathit{main}}$ occurs $k+1$ times in $\rho$,
		\item for each $u \in P$, $\mmap \oplus \multi{(\top w',1)}$ contains
		exactly $\mmap'(u)$ threads each of which have as stack content
		$\top
		u$ and context switch number $1$ and
		\item $\mmap \oplus \multi{(\top w',1)}$ does not contain any other
		threads than those in the above point. 
		\item The context switch number of any thread in any configuration
		of $\rho$ is at most 1.
	\end{itemize}

We will prove the claim by induction on the number $k$. 
	The initial marking $\mmap_0$ is 
	 $\mmap_0 =\multi{w_{\mathit{init}}}$.
	Let $w_{\mathit{init}}=a_1a_2..a_l$. It is clear that the rules related to $G_{\mathit{init}}$ can be used to reach
	$\langle g_{\mathit{main}},(\top a_1 a_2...a_l,0),\emptyset\rangle$ as follows:
	\begin{flalign*}
		&\langle (init,l),(w_l,0),\emptyset\rangle \\
		\xrightarrow{G_{\mathit{init}}.1}& \langle (init,l-1),(a_{l-1}a_l),\emptyset\rangle\\
		& \vdots \\
		\xrightarrow{G_{\mathit{init}}.1}& \langle (init,1),(a_1...a_{l-1}a_l),\emptyset\rangle\\
		\xrightarrow{G_{\mathit{init}}.2}& \langle g_{\mathit{main}},(\top a_1...a_l,0),\emptyset\rangle
	\end{flalign*}
Note that the above is the unique run which reaches $g_{\mathit{main}}$ for the
first time (since there
is only one thread).

The base case where $k=1$ is similar to the induction step explained
below. We point out the small ways in which the base case differs
after proving the induction step. 

\noindent \underline{\textit{Induction Step:}} Let $\mmap_k$ be a
marking
reached
in $N(\cN)$ via
$k$ transducer-moves:
\[ \mmap_0 \xrightarrow{t_1} \mmap_1 \xrightarrow{
t_2} \mmap_2
\cdots \xrightarrow{t_{k-1}} \mmap_{k-1} \xrightarrow{
t_k} \mmap_k\]
Corresponding to each transition $t_i$ there exists an accepting run
of a transducer $\mathcal{T}_i$ which is one of $\{
\tmove,\tjoin,\tfork\}$, we call $t_i$ a $
\mathcal{T}_i$-move. The $
\mathcal{T}_i$-move is a
sequence of transitions $\delta_{i1} \cdots \delta_{il}$ from the
corresponding transducer.

By induction hypothesis, for any $u \in P$ with $\mmap_{k-1}(u) \geq
1$ there exists a run $\rho'$ of $\cA(\transpn)$
in which $g_{\mathit{main}}$ occurs $k$ times and $\rho'=\langle g_0,(\gamma_0,0),
\emptyset\rangle \Rightarrow^*_{\leq 1} \langle g_{
\mathit{main}},(u,1),\mmap\rangle$ and for
all $w
\in P$ there are $\mmap_{k-1}(w)$ threads in $\mmap \oplus \multi{
(u,1)})$ each
of which has stack content $\top w$. 

Let $\mmap_{k-1} \xrightarrow{t_k} \mmap_k$ be a $\tjoin$-move. Then
there exist $w',w'',w'''$
such that \linebreak $\initjoin \xrightarrow{(w',w'',w''')} q_{\mathit{joinf}}$
where $q_{\mathit{joinf}} \in \finjoin$. This implies $\mmap_k=\mmap_
{k-1} \oplus \multi{w'''}
\ominus \multi{w',w''}$.

By induction hypothesis, we can assume that $u=\top w'$ and there
exists a thread $t$ in $\mmap$ with stack content $\top w''$. Let
$\mmap' \oplus \multi{(\top w'',1)}= \mmap$.

Let $w'=w'_1 \cdots w'_l, w''=w''_1 \cdots w''_l, w'''=w'''_1 \cdots
w'''_l$. We outline the simulation of a transducer-move by $\cA
(\transpn)$
below.

\subsection{Simulation of a Single Transducer-move}
\label{sim-trans-move}
We label each transition by either the corresponding rule if it is a
transition of type $\rightarrow$ (for instance $
\xrightarrow{G_{\mathit{read}}.1}_1$ means that the $1^{st}$ rule related to
$G_{\mathit{read}}$ is being used ) or label it by the $c.s. i$ if it is
the $i^{th}$ context switch ( for example, $\xmapsto{c.s. i}_1$).
\begin{enumerate}
	\item Starting from the configuration $\langle g_{\mathit{main}},(\top
	w',1),\mmap' \oplus \multi{(\top w'',1)}\rangle$, we use the rule
 $G_{\mathit{check}}.1$ to unlock the thread $(\top
	w',1)$ and move to \newline $\langle (join,read,1,pop1),(w',1),
	\mmap'\oplus
	\multi{(\top w'',1)} \rangle$
	\begin{flalign*}
		&\langle g_{\mathit{main}},(\top w',1),\mmap' \oplus \multi{(\top w'',1)}\rangle \\
	\xrightarrow{G_{\mathit{check}}.1}_1& \langle (join,unlock,1,pop1),(\top
	w',1),\mmap' \oplus \multi{(\top w'',1)}\rangle
	\end{flalign*}
 \item We then use the rules related to $G_{\mathit{read}}$ to pop off the
 stack contents of $(w',1)$ symbol by symbol while simultaneously
 spawning threads of the kind $(w'_i,i,pop1)$, arriving at the
 configuration $\langle (join,unlock,1,pop2),(\varepsilon,1), \mmap'
 \oplus
	\multi{(\top w'',1)} \oplus_{i=1}^{l} \multi{
	((w'_i,i,pop1),0)}\rangle$.
	\begin{flalign*}
		& \langle (join,unlock,1,pop1),(\top
	w',1),\mmap' \oplus \multi{(\top w'',1)}\rangle\\
	\xrightarrow{G_{\mathit{read}}.1}_1& \langle (join,read,1,pop1),(
	w',1),\mmap' \oplus \multi{(\top w'',1)}\rangle\\
		\xrightarrow{G_{\mathit{read}}.2}_1& \langle (join,read,2,pop1),(
	w'_2 \cdots w'_l,1),\mmap' \oplus \multi{(\top w'',1)} \oplus \multi{
	((w'_1,1,pop1),0)}\rangle\\
	&\vdots\\
	\xrightarrow{G_{\mathit{read}}.2}_1& \langle (join,read,l,pop1),(
	w'_l,1),\mmap' \oplus \multi{(\top w'',1)} \oplus_{i=1}^{l-1} \multi{
	((w'_i,i,pop1),0)}\rangle \\
	\xrightarrow{G_{\mathit{read}}.4}_1& \tuple{(join,unlock,1,pop2),
	(\varepsilon,1),\mmap'\oplus \multi{(\top w'',1)} \oplus_{i=1}^{l} \multi{
	((w'_i,i,pop1),0)}}
	\end{flalign*}
 \item Next, we context switch to $(\top w'',1)$ and repeat the
 process of unlocking, poping the stack contents and simultaneously
 spawning threads $(w''_i,i,pop2)$. 
 \begin{flalign*}
 	&\tuple{(join,unlock,1,pop2),
	(\varepsilon,1),\mmap'\oplus \multi{(\top w'',1)} \oplus_{i=1}^{l} \multi{
	((w'_i,i,pop1),0)}}\\	
	\xmapsto{c.s. 1}_1 & \tuple{(join,unlock,1,pop2),(\top w'',1),\mmap'
	\oplus_{i=1}^{l} \multi{
	((w'_i,i,pop1),0)}}\\
	\xrightarrow{G_{\mathit{read}}.1}_1& \langle (join,read,1,pop2),(
	w'',1),\mmap' \oplus_{i=1}^{l} \multi{
	((w'_i,i,pop1),0)}\rangle\\
		\xrightarrow{G_{\mathit{read}}.2}_1& \langle (join,read,2,pop2),(
	w''_2 \cdots w'_l,1),\\
	&\mmap' \oplus_{i=1}^{l} \multi{
	((w'_i,i,pop1),0)} \oplus \multi{
	(w''_1,1,pop2)}\rangle\\
	&\vdots\\
	\xrightarrow{G_{\mathit{read}}.2}_1& \langle (join,read,l,pop2),(
	w''_l,1),\\
	&\mmap' \oplus_{i=1}^{l} \multi{
	((w'_i,i,pop1),0)} \oplus_{i=1}^{l-1} \multi{
	((w''_i,i,pop2),0)}\rangle
 \end{flalign*}
 \item While reading the last (i.e. $l^{th}$) letter of $w''$, we
 move to the state $(g_{\mathit{guess}},join)$ without poping. 
 \begin{flalign*}
 	& \langle (join,read,l,pop2),(
	w''_l,1), \\
	&\mmap' \oplus_{i=1}^{l} \multi{
	((w'_i,i,pop1),0)} \oplus_{i=1}^{l-1} \multi{
	((w''_i,i,pop2),0)}\rangle \\
	\xrightarrow{G_{\mathit{read}}.3}_1& \langle(g_{\mathit{guess}},join),
	(w''_l,1), \\
	&\mmap' \oplus_{i=1}^{l} \multi{
	((w'_i,i,pop1),0)}\oplus_{i=1}^{l} \multi{
	((w''_i,i,pop2),0)}\rangle
 \end{flalign*}
	\item We then use the rule $G_{\mathit{check}}.5$ to pop $w''_l$ and spawn a
	thread $\gamma_{\mathit{guess}}$. 
	\begin{flalign*}
		&\langle(g_{\mathit{guess}},join),
	(w''_l,1),\\
	&\mmap' \oplus_{i=1}^{l} \multi{
	((w'_i,i,pop1),0)}\oplus_{i=1}^{l} \multi{
	((w''_i,i,pop2),0)}\rangle\\
	\xrightarrow{G_{\mathit{check}}.5}_1& \langle(join,l,push1),
	(\varepsilon,1), \\
	&\mmap' \oplus_{i=1}^{l} \multi{
	((w'_i,i,pop1),0)}\oplus_{i=1}^{l} \multi{
	((w''_i,i,pop2),0)} \oplus \multi{(\gamma_{\mathit{guess}},0)} \rangle
	\end{flalign*}
		\item Context switching to the thread $\gamma_{\mathit{guess}}$, the rules
		related to $G_{\mathit{guess}}$ are used to guess a thread
		with stack contents $(\top w''',1)$ while simultaneously spawning
		threads $(w'''_i,$ $i,$ $push1)$ for each $i \leq l$. At this point we
		have threads (one per each letter) which contain the information regarding
		$w',w'',w'''$. At the end of the guess process, we spawn a thread
		$\gamma_{\mathit{verify}}$.
		\begin{flalign*}
	&\langle(join,l,push1),
	(\varepsilon,1), \\
	&\mmap' \oplus_{i=1}^{l} \multi{
	((w'_i,i,pop1),0)}\oplus_{i=1}^{l} \multi{
	((w''_i,i,pop2),0)} \oplus \multi{(\gamma_{\mathit{guess}},0)} \rangle\\
	\xmapsto{c.s. 2}_1& \langle (join,l,push1),
	(\gamma_{\mathit{guess}},0), \\
  &\mmap' \oplus_{i=1}^{l} \multi{
	((w'_i,i,pop1),0)}\oplus_{i=1}^{l} \multi{
	((w''_i,i,pop2),0)} \rangle\\
	\xrightarrow{G_{\mathit{guess}}.1}_0& \langle(join,l-1,push1),
	(w'''_l,0),\\
	&\mmap' \oplus_{i=1}^{l} \multi{
	((w'_i,i,pop1),0)}\oplus_{i=1}^{l} \multi{
	((w''_i,i,pop2),0)} \oplus \multi{((w'''_l,l,push1),0)} \rangle\\
	\xrightarrow{G_{\mathit{guess}}.2}_0& \langle (join,l-2,push1),
	(w'''_{l-1}w'''_l,0), \\
	&\mmap' \oplus_{i=1}^{l} \multi{((w'_i,i,pop1),0)}\oplus_{i=1}^{l} \multi{
	((w''_i,i,pop2),0)} \\
  &\oplus_{i=l-1}^l \multi{((w'''_i,i,push1),0)} \rangle\\
	&\vdots	\\
	\xrightarrow{G_{\mathit{guess}}.2}_0& \langle (join,1,push1),
	(w'''_2 \cdots w'''_{l-1}w'''_l,0), \\
	&\mmap' \oplus_{i=1}^{l} \multi{((w'_i,i,pop1),0)}\oplus_{i=1}^{l} \multi{
	((w''_i,i,pop2),0)} \\
  &\oplus_{i=2}^l \multi{((w'''_i,i,push1),0)} \rangle\\
	\xrightarrow{G_{\mathit{guess}}.3}_0& \langle(join,toplock,push1),
	(w'''_1 \cdots w'''_{l-1}w'''_l,0),\\
	&\mmap' \oplus_{i=1}^{l} \multi{((w'_i,i,pop1),0)}\oplus_{i=1}^{l} \multi{
	((w''_i,i,pop2),0)} \\
  &\oplus_{i=1}^l \multi{((w'''_i,i,push1),0)} \rangle\\
		\xrightarrow{G_{\mathit{guess}}.4}_0& \langle(\initjoin,1,pop1,\delta_0),
	(\top w'''_1 \cdots w'''_{l-1}w'''_l,0),\\
	&\mmap' \oplus_{i=1}^{l} \multi{((w'_i,i,pop1),0)}\oplus_{i=1}^{l} \multi{
	((w''_i,i,pop2),0)} \oplus_{i=1}^l \multi{((w'''_i,i,push1),0)} \\
	& \oplus \multi{(\gamma_{\mathit{verify}},0)}\rangle
\end{flalign*}

		\item We switch to the thread $\gamma_{\mathit{verify}}$ and begin the
		process of verifying that our guess of $w'''$ enables us to make a
		valid accepting run of the join transducer. This is a
		sequence of transitions $(\initjoin
		\xrightarrow{
	\delta_{k,1}} q_
	{k,1} \xrightarrow{\delta_{k,2}} \cdots \xrightarrow{\delta_{k,l}} q_
	{k,l})$ where $q_{k,l} \in \finjoin$. This is accomplished by killing
	the threads \newline 
	$(w',i,pop1),(w'',i,pop2),(w''',i,push1)$ for
	each $i
	\leq l$ while checking that $\delta_{k,i} =$\linebreak$(q_{k,i-1}
		\xrightarrow{
	(w'_i,w''_i,w'''_i)} q_
	{k,i})$. 
	\begin{flalign*}
	& \langle(\initjoin,1,pop1,\delta_{k,1}),
	(\top w'''_1 \cdots w'''_{l-1}w'''_l,0),\\
	&\mmap' \oplus_{i=1}^{l} \multi{((w'_i,i,pop1),0)}\oplus_{i=1}^{l} \multi{
	((w''_i,i,pop2),0)} \\
  &\oplus_{i=1}^l \multi{((w'''_i,i,push1),0)} \oplus \multi{(\gamma_{\mathit{verify}},0)}\rangle \\
	\xmapsto{c.s. 3}_0& \langle(\initjoin,1,pop1,\delta_{k,1}),
	(\gamma_{\mathit{verify}},0),\\
	&\mmap' \oplus_{i=1}^{l} \multi{((w'_i,i,pop1),0)}\oplus_{i=1}^{l} \multi{
	((w''_i,i,pop2),0)} \\
  &\oplus_{i=1}^l \multi{((w'''_i,i,push1),0)} \oplus\multi{(\top
	w''',1)}\rangle
	\\
	&\text{where } \delta_{k,1}=(\initjoin \xrightarrow{
	(w'_1,w''_1,w'''_1)} q_
	{k,1}) \in \djoin\\
	\xrightarrow{G_{\mathit{verify}}.5}_0& \langle(\initjoin,1,pop2,\delta_{k,1}),
	(\gamma_{\mathit{verify}},0),\\
	&\mmap' \oplus_{i=2}^{l} \multi{((w'_i,i,pop1),0)}\oplus_{i=1}^{l} \multi{
	((w''_i,i,pop2),0)} \\
  &\oplus_{i=1}^l \multi{((w'''_i,i,push1),0)} \oplus \multi{(\top
	w''',1)}\rangle\\
	\xrightarrow{G_{\mathit{verify}}.6}_0& \langle(\initjoin,1,push1,\delta_{k,1}),
	(\gamma_{\mathit{verify}},0),\\
	&\mmap' \oplus_{i=2}^{l} \multi{((w'_i,i,pop1),0)}\oplus_{i=2}^{l} \multi{
	((w''_i,i,pop2),0)} \\
  &\oplus_{i=1}^l \multi{((w'''_i,i,push1),0)} \oplus\multi{(\top
	w''',1)}\rangle\\
	\xrightarrow{G_{\mathit{verify}}.7}_0& \langle(q_{k,1},2,pop1,\delta_{k,2}),
	(\gamma_{\mathit{verify}},0),\\
	&\mmap' \oplus_{i=2}^{l} \multi{((w'_i,i,pop1),0)}\oplus_{i=2}^{l} \multi{
	((w''_i,i,pop2),0)} \\
  &\oplus_{i=2}^l \multi{((w'''_i,i,push1),0)} \oplus\multi{(\top
	w''',1)}\rangle\\
	&\text{where } \delta_{k,2}=(q_{k,1} \xrightarrow{
	(w'_2,w''_2,w'''_2)} q_{k,2})\\
	&\vdots\\
	\xrightarrow{G_{\mathit{verify}}.7}_0& \langle(q_{k,l},l,pop1,\delta_{k,l}),
	(\gamma_{\mathit{verify}},0),\\
	&\mmap' \oplus \multi{((w'_l,l,pop1),0)} \oplus \multi{
	((w''_l,l,pop2),0)} \oplus \multi{((w'''_l,l,push1),0)} \\
	&\oplus\multi{(\top
	w''',1)}\rangle\\
	&\text{ where } \delta_{k,l}=(q_{k,l-1} \xrightarrow{
	(w'_l,w''_l,w'''_l)} q_
	{k,l}) \in \djoin \text{ and } q_{k,l} \in \finjoin\\
	\xrightarrow{G_{\mathit{verify}}.8}_0& \langle(q_{k,l},l,pop2,\delta_{k,l}),
	(\gamma_{\mathit{verify}},0),\\
	&\mmap' \oplus \multi{
	((w''_l,l,pop2),0)} \oplus \multi{((w'''_l,l,push1),0)} \\
	&\oplus\multi{(\top
	w''',1)}\rangle\\
	\xrightarrow{G_{\mathit{verify}}.9}_0& \langle(q_{k,l},l,push1,\delta_{k,l}),
	(\gamma_{\mathit{verify}},0),\\
	&\mmap' \oplus \multi{((w'''_l,l,push1),0)} \oplus\multi{(\top
	w''',1)}\rangle\\
	\xrightarrow{G_{\mathit{verify}}.10}_0& \langle g_{\mathit{main}},
	(\varepsilon,0),\mmap' \oplus\multi{(\top
	w''',1)}\rangle\\
	\xmapsto{c.s. 4}_0 & \langle g_{\mathit{main}},(v,1),\mmap''\rangle\\
	&\text{ where } \mmap'' \oplus \multi{(v,1)} = \mmap' \oplus \multi{(\top
	w''',1)}
\end{flalign*}
	\item At this point, we can make a context switch to any of the
	threads in $\mmap' \oplus\multi{(\top
	w''',1)}$ to start the simulation of the next transducer-move.
	\begin{flalign*}
		&\langle g_{\mathit{main}},
	(\varepsilon,0),\mmap' \oplus\multi{(\top
	w''',1)}\rangle\\
	\mapsto_0 & \langle g_{\mathit{main}},(v,1),\mmap''\rangle\\
	&\text{ where } \mmap'' \oplus \multi{(v,1)} = \mmap' \oplus \multi{(\top
	w''',1)}
	\end{flalign*}
\end{enumerate}

The cases of the transducer-move belonging to $\tmove$ or $\tfork$ is
similar. The only differences are that in the case of $\tmove$ there
is only one stack to be read before we proceed to the
guess of the stack related to $push1$ and in the
case of $\tfork$ there are two stacks to be guessed. This concludes
the proof that for any marking $\mmap_k$ reached in $k$
transducer-moves
by $N(\cN)$, there exists a 1-bounded run $\rho$ of $\cA(\transpn)$ that
reaches
the state $g_{\mathit{main}}$ and in which $g_{\mathit{main}}$ occurs $k+1$ times.

Conversely, consider a run $\rho$ of $\cA(\transpn)$ which satisfies
the conditions in the claim:
\begin{flalign*} 
&\tuple{g_0,(\gamma_0,0),\emptyset} \\
 \Rightarrow_0^* & \tuple{g_{\mathit{main}},(\top w_{\mathit{init}},0),\emptyset}\\
 \Rightarrow_{\leq 1}^* & \tuple{g_{\mathit{main}},(\top u_1,1),\mmap_1}\\
 &\vdots \\
 \Rightarrow_{\leq 1}^* & \tuple{g_{\mathit{main}},(\top u_{k-1},1),\mmap_
 {k-1}}\\
  \Rightarrow_{\leq 1}^* & \tuple{g_{\mathit{main}},(\top u_k,1),\mmap_k}
\end{flalign*}

Let $\mmap_{k-1}$ be the marking reached by $N(\cN)$ obtained by applying
the
induction hypothesis to the initial segment of $\rho$ which reaches
configuration \newline $\tuple{g_{\mathit{main}},(\top u_{k-1},1),\mmap_{k-1}}$.
Consider the
segment $\rho' = \tuple{g_{\mathit{main}},(\top u_{k-1},1),\mmap_{k-1}}
  \Rightarrow_{\leq 1}^*$ \linebreak $\tuple{g_{\mathit{main}},(\top u_k,1),\mmap_k}$. we
  assume that the first transition in $\rho'$ is to the state
  $(join,unlock,1,$ $pop1)$ (the other two choices being $
  (move,unlock,1,pop1)$ and $(fork,unlock,1,pop1)$, for both of
  which a similar argument holds). By
  construction, $\rho'$ must reach the global states in each of the
  starting
  configurations outlined in the steps 1 through 8 in Subsection 
  \ref{sim-trans-move}. It suffices to argue that the number of
  threads and their contents do not deviate from the configurations in
  steps 2 to 8. First, we observe that we cannot insert any context
  switches after a configuration containing an active thread $(u,1)$
  where $u \neq \varepsilon$. This is because such a switch increases the
  context switch number of the active thread to 2, which is disallowed
  by the conditions placed on $\rho$. Thus the only deviation from the
  run can occur at the following two types of places:
  \begin{enumerate}
   	\item the four context
  switches used in $\rho$ and
  \item configurations where the active thread has context switch
  number 0.
   \end{enumerate}  
  Let us consider the first case above, i.e. at the context switches
  where we
  could have switched to a different thread. There are in total 4 such
  context switches. At $c.s. 1$ we could switch to a different thread
  $(\top \tilde{w}'',1)$ instead of $(\top w'',1)$. However, the
  corresponding run $\tilde{\rho}$ on making this change behaves
  in a similar way in the sense that $(w',\tilde{w}'',w''')$ must
  belong to the language of $\tjoin$ for us to successfully return to
  a configuration with state $g_{\mathit{main}}$. At $c.s. 2$, the only
  possible transition from a state $(\Box_1,l,\Box_2) \in G_{\mathit{guess}}$
  requires the top of stack symbol to be $\gamma_{\mathit{guess}}$. Switching
  to any other thread and switching back leads to an increase in
  context switch number to more than 1, which is disallowed. Similarly
  at $c.s. 3$ we require the top of stack symbol to be $\gamma_{\mathit{verify}}$ and the same argument as before applies. At $c.s. 4$ we
  are allowed to switch to any thread we want and this is intentional.

  Next we consider the second case i.e. places where the active thread
  has context switch number 0. The first place where this occurs is at
  Step 6. At a configuration with state $(join,i,push1)$, the inactive
  threads either have top of stack symbol $\top$ or a symbol from
  $\Gamma_{\mathit{kill}} \setminus \{ \gamma_{\mathit{verify}}\}$. The only transitions
  allowed are a sequence of context switches ending with a switch to
  back to the original active thread. None of the inactive threads with
  top of
  stack $\top$ can be involved in the context switches since this
  would increase such a thread's switch number to at least
  2, which is disallowed by the
  1-boundedness of $\rho$. The only threads which can be involved are
  those with top of 
  stack symbol belonging
  to $\Gamma_{\mathit{kill}} \setminus \{ \gamma_{\mathit{verify}}\}$. Each of these can
be switched to at most once since they are required to be killed later
in the run and this is disallowed if their switch number is strictly
more than 1. Thus the only difference in configuration can be that
such threads have switch number 1 instead of 0. A similar argument as
above applies
to the case
  when the
  active thread is $(\gamma_{\mathit{verify}},0)$ in Step 7. We note that by
  the end of Step 7, all of the threads with a symbol from
  $\Gamma_{\mathit{kill}} \setminus \{ \gamma_{\mathit{verify}}\}$ have been killed
  and thus the configuration must be identical to that obtained
  without any deviation from Steps 1 through 8.\\
  From the above, we can conclude that the run $\rho'$ corresponds to
  the application of a join transducer move on the marking $\mmap_
  {k-1}$ giving us a marking $\mmap_k$ of $N(\cN)$ with the desired
  properties.
  \subparagraph*{Regarding the base case:} The base case differs from
  the
  induction step in that the first time we arrive at $g_{\mathit{main}}$, the
  configuration is $\tuple{g_{\mathit{main}},(\top w_{\mathit{init}},0),\emptyset}$.
  The first transducer-move has to be either from $\tmove$ or $\tfork$
  and when arriving in $g_{\mathit{main}}$ for the second time, all threads
  have switch number 1, as can be seen from the fact that after the
  guess in Step 6, the newly created thread is switched out before the
  verification process in Step 7, resulting in a switch number of 1.
  This concluces our proof of the claim.

We now use the claim to prove the lemma.
Let $\rho'$ be a run of $\transpn$ which reaches a marking $\mmap$
covering $\mmap_f$. In other words, $\mmap(w_{\mathit{final}}) \geq
1$. Suppose $\rho'$
uses $k'$ transducer-moves, then by the claim, there is a run
$\rho$ of $\cA(\transpn)$ such that $\rho=\langle g_0,
(\gamma_0,0),\emptyset\rangle 
	\Rightarrow^*_{\leq 1} \langle g_{\mathit{main}}, (\top w_{\mathit{final}},1),\mmap\rangle$
	 Let $w_{\mathit{final}}=b_1b_2...b_l$. We now have
	the following sequence of transitions which enable us to reach $g_{\mathit{halt}}$:
	\begin{flalign*}
		&\langle g_{\mathit{main}}, (\top w_{\mathit{final}},1),\mmap \rangle \\
		\xrightarrow{G_{\mathit{check}}.2}_1& \langle(g_{\mathit{check}},1), ( w_{\mathit{final}},1),\mmap \rangle \\
		\xrightarrow{G_{\mathit{check}}.3}_1& \langle (g_{\mathit{check}},2), (
		b_2b_3...b_l,1),\mmap \rangle\\
		& \vdots \\
		\xrightarrow{G_{\mathit{check}}.3}_1& \langle (g_{\mathit{check}},l), (b_l,1),\mmap
		\rangle\\
		\xrightarrow{G_{\mathit{check}}.4}_1&  \langle g_{\mathit{halt}}, 
		(\varepsilon,1),\mmap\rangle
	\end{flalign*}

	Conversely, let $\rho$ be a run of $\cA(\transpn)$ that
	reaches
	$g_{\mathit{halt}}$.  Then by construction there is an initial segment
	$\rho''$
	of $\rho$ which reaches $g_{\mathit{main}}$ for the last time before we move
	to
	$(g_{\mathit{check}},1)$ i.e. $\rho$ must be of the following form
\begin{flalign*}
	&\langle g_0,
	(\gamma_0,0),\emptyset\rangle \\
	\Rightarrow^*_{\leq 1} &\langle g_{\mathit{main}},(w_{\mathit{final}},1),\mmap \rangle\\
	\xrightarrow{G_{\mathit{check}}.2}_1& \langle(g_{\mathit{check}},1), ( w_{\mathit{final}},i),\mmap \rangle \\
		\Rightarrow^*_{\leq 1} &\langle g_{\mathit{halt}}, 
		(\varepsilon,1),\mmap\rangle
\end{flalign*}
	 Note that once we reach $(g_{\mathit{check}},1)$,
	all inactive threads have switch number 1 and no context switches are
	possible. Hence there is exactly one sequence of transitions which
	enables us to reach $g_{\mathit{halt}}$ from $(g_{\mathit{check}},1)$. This implies the
	multiset $\mmap$ when we reach $g_{\mathit{halt}}$ is exactly the same as that
	at the configuration $\langle g_{\mathit{main}},(w_{\mathit{final}},1),\mmap \rangle$ reached at the end of $\rho''$.

		Assuming that $\rho''$ has $k'+1$ occurrences of $g_{\mathit{main}}$ and
		applying the claim, we obtain a run of $\transpn$ which uses $k'$
		transducer-moves and ends in a final marking $\mmap_{k'}$ which must
		satisfy $\mmap_{k'}(w_{\mathit{final}}) \geq 1$.
		\end{proof}

%!TEX root = ./main.tex
\section{Proofs for \cref{sec:lower_bound}}

\subsection{Proofs for \cref{sec:MinskyToRNP}} \label{sec:decinduction}

Our construction borrows heavily from the Lipton construction~\cite{lipton1976reachability}, as it is explained in~\cite{Esp98a}. Therefore most to all of these proofs rely on ideas that were already established there. 
Let us prove the following three statements regarding $R(C)$:
\begin{enumerate}
\item Assume at the start of the execution of \texttt{dec} at recursion depth $d > 1$ we have $v_{d'} = 2^{2^{n+1-d'}}$ and $\bar{v}_{d'} = 0$ for each $v \in \{\bar{s},y,z\}$ and all $d' > d$. Then either $s_d$ gets decremented by $2^{2^{n+1-d}}$ and $\bar{s}_d$ gets incremented by the same amount, or the execution of \texttt{dec} gets stuck.
\item Assume at the start of the execution of Test$_{+1}(v, l_{\text{zero}}, l_{\text{nonzero}})$ with $v \in \{y,z\}$ at recursion depth $d > 1$ we have $u_{d'} = 2^{2^{n+1-d'}}$ and $\bar{u}_{d'} = 0$ for each $u \in \{\bar{s},y,z\}$ and all $d' > d+1$. Furthermore, assume that $s_{d+1} = 0$, $\bar{s}_{d+1} = 2^{2^{n - d}}$, and $v_{d+1} + \bar{v}_{d+1} = 2^{2^{n - d}}$. Then either we go to $l_{\text{nonzero}}$ with no side effects if $v_{d+1} \neq 0$ at the start, or we go to $l_{\text{zero}}$ with the sole side effect of swapping the values of $v_{d+1}$ and $\bar{v}_{d+1}$ if $v_{d+1} = 0$ at the start, or the execution of Test$_{+1}$ gets stuck.
\item Assume at the start of the execution of \texttt{inc} at recursion depth $d > 1$ all variables have value $0$. Then either the execution of \texttt{inc} gets stuck, or for each $v \in \{\bar{s},y,z\}$ and all $d' \geq d$ the variable $v_{d'}$ gets incremented by $2^{2^{n+1-d'}}$.
\end{enumerate}
We prove these using induction on $d$.

Regarding the base case for \texttt{inc} or \texttt{dec}, we consider the maximum recursion depth $d = n + 1$. Thus, \texttt{inc}$_{=\text{max}}$ or \texttt{dec}$_{=\text{max}}$ get called which perform the increments and decrements by $2^{2^{n+1-(n+1)}} = 2^{2^0} = 2^1 = 2$ on the correct variables. For Test$_{+1}$, the base case is $d = n$, because it contains \textbf{call} commands, which do not occur at maximum recursion depth by definition. We prove this base case together with the inductive case:

Regarding the inductive case for Test$_{+1}(v, l_{\text{zero}}, l_{\text{nonzero}})$, we can jump to $l_{\text{nonzero}}$ iff we previously jump to $l_{\text{nztest}}$ and perform a valid nonzero test on $v_{d+1}$ there via decrementing and incrementing once. This then leaves no side effects. On the other hand, we can jump to $l_{\text{zero}}$ iff by induction hypothesis we can decrement $s_{d+1}$ by $2^{2^{n+1-(d+1)}}$ using the call to \texttt{dec} at $l_{\text{exit}}$. Since we assumed $s_{d+1}$ to be $0$ initially, we have to increment it that many times beforehand, requiring that many visits to $l_{\text{loop}}$. This then correctly shifts the value $2^{2^{n - d}}$ from $\bar{v}_{d+1}$ to $v_{d+1}$ which means that we had $v_{d+1} = 0$ at the start from our assumptions.

Regarding the inductive case for \texttt{dec}, if the execution does not get stuck, we have to visit $l_{\text{outer}}$ $2^{2^{n - d}}$-times while visiting $l_{\text{inner}}$ $2^{2^{n - d}}$-times for each visit to $l_{\text{outer}}$. This is because we decrement either $y_{d+1}$ or $z_{d+1}$ once for each visit, both of them had initial value $2^{2^{n - d}}$, and by induction hypothesis Test$_{+1}$ correctly tests them for zero. In the zero case, these variables are also conveniently reset to their maximum value, meaning that there are no side effects on them. All in all, $s_d$ gets decremented by $2^{2^{n - d}} \cdot 2^{2^{n - d}} = 2^{2^{n - d} + 2^{n - d}} = 2^{2 \cdot 2^{n - d}} = 2^{2^{n - d + 1}}$, which is the correct amount.

Regarding the inductive case for \texttt{inc}, the recursive call gives us the statement for each $d' > d$ per the induction hypothesis. The remainder is very similar to \texttt{dec} and relies on the fact that all the variable values required by Test$_{+1}$ have already been set by the recursive call to \texttt{inc}.

\vspace{1ex}

\noindent It is now clear that the call to \texttt{inc} at the start of $R_{\mathit{init}}(C)$ sets up all the requirements so that \texttt{dec} and Test$_{+1}$ can work correctly. The remainder of $R_{\mathit{init}}(C)$ uses the variable $y_1$ to increment $\bar{x}_0$ by $2^{2^n}$ for each $x \in X$. The correctness of the macro Test regarding $x_0$, since it is only used in the main program, can then easily be inferred from the correctness of Test$_{+1}$ regarding $v_1$ and the similarities of their commands.

\vspace{1ex}

\noindent The same construction with a maximum recursion depth of $2^n + 1$ can be used to simulate a counter program with counters bounded by $2^{2^{2^n}}$: The procedures \texttt{inc} and \texttt{dec} both work by iteratively squaring a base value of $2$ and performing that many increments or decrements. Starting with $2$ and squaring $n$-times yields $2^{2^n}$, therefore squaring $2^n$ times instead yields $2^{2^{2^n}}$.  To give a more visual explanation:
\begin{align*}
  (\cdots(2\overbrace{^2)^2\cdots)^2}^{n\text{-times}} &= 2^{2^n}\\
  (\cdots(2\overbrace{^2)^2\cdots)^2}^{2^n\text{-times}} &= 2^{2^{2^n}}.
\end{align*}

\subsection{Proofs for \cref{sec:RNPtoTPN}} \label{sec:TPNconstruction}

Given an $\rnp$ $R$ with maximum recursion depth $k$ we construct a $\TDPN$ $\mathcal{N} = (w_{\mathit{init}},$ $w_{\mathit{final}},$ $\mathcal{T}_{\mathit{move}},$ $\mathcal{T}_{\mathit{fork}},$ $\mathcal{T}_{\mathit{join}})$, which defines the Petri net $N(\mathcal{N}) = (P,T,F,p_0,p_f)$, such that $\multi{p_f}$ is coverable in $N(\mathcal{N})$ iff there is a terminating execution of $R$. Let us give more details on how to construct $N(\mathcal{N})$ and the three transducers that define it.

The main idea for $N(\mathcal{N})$ is to have up to $k+1$ many places for each variable and each label of $R$. It is clear from the semantics of recursive net programs, why we need $k+1$ places for each variable. Furthermore, we need $k-1$ places for each label appearing in the $\text{\texttt{proc}}_{<\text{max}}$-specification for a procedure \texttt{proc} to simulate the call stack. A single place per label would allow us to store the contents of the call stack, but not the order, meaning we would not be able to distinguish between some configurations. For $\text{\texttt{proc}}_{=\text{max}}$ as well as the main program, only a single place is needed per label, since the corresponding commands are only executed at recursion depths $k$ and $0$, respectively.

Transitions are introduced in similar fashion as places. We follow along the Lipton construction, with the alteration of having to copy some transitions up to $k$-times, to connect places at different recursion depths. How to simulate each command of $R$ with Petri net transitions can be inferred from Figure~\ref{fig:netcommands} and Figure~\ref{fig:netprocedures}. As we can see, each \textbf{call} command and each procedure requires an additional place per recursion depth $d$ (named $l_{1,d}$\_calls\_\texttt{proc} respectively return\_\texttt{proc}$_{d+1}$ in Figure~\ref{fig:netprocedures}). Furthermore, we need a single place $w_{\mathit{halt}}$ for the \textbf{halt} command, which we use as our $w_{\mathit{final}}$.

To make procedure calls work correctly, we also identify the starting labels of $\text{\texttt{proc}}_{<\text{max}}$ and $\text{\texttt{proc}}_{=\text{max}}$ for each procedure \texttt{proc}. For such a label $l$, the places indexed $1$ to $k-1$ then correspond to $\text{\texttt{proc}}_{<\text{max}}$, whereas the place with index $k$ corresponds to $\text{\texttt{proc}}_{=\text{max}}$. The places for labels of the main program are all indexed with $0$. For $w_{\mathit{init}}$ we use the place corresponding to the first label in the main program.

\vspace{1ex}

\noindent Let us now construct the transducers. First, we define our alphabet to be $\Sigma := \{0,1\}$. Thus, we need to give every place of $N(\mathcal{N})$ a binary address. To this end, we first have to count the number $h$ of places without counting additional copies at different recursion depths. Given a procedure \texttt{proc}, let $\#p(\text{main})$, $\#p(\text{\texttt{proc}}_{=\text{max}})$, and $\#p(\text{\texttt{proc}}_{<\text{max},d})$ be the number of places needed for the main program, for $\text{\texttt{proc}}_{=\text{max}}$, and for $\text{\texttt{proc}}_{<\text{max}}$ at recursion depth $d$, respectively. We define 
\[h := \#p(\text{main}) + \sum_{\text{\texttt{proc}} \in \mathsf{PROC}} \big(\#p(\text{\texttt{proc}}_{=\text{max}}) + \#p(\text{\texttt{proc}}_{<\text{max},1}) - 1\big).\]
Here we have to subtract $1$ for each procedure \texttt{proc}, because we identified the starting labels of $\text{\texttt{proc}}_{<\text{max}}$ and $\text{\texttt{proc}}_{=\text{max}}$. Since every labelled command results in at most $2$ places counted this way (one for the label and up to one auxiliary place), $h$ is linear in the size of $R$. Now we can give each of the counted places a different address using $\log{h}$ bits. Then for the actual places, we just append to this address the binary representation of the recursion depth $d$, that a given place corresponds to. This results in addresses of length at most $\log{h} + \log{k} = \log(h \cdot k)$. If we encode the numbers for label and recursion depth with leading zeros, all addresses also have the same length, as required.

Now we start describing the components of the ternary transducers, leaving the binary one for later. Regarding Figure~\ref{fig:netcommands}, to check whether three places are connected via a transition, most of the information is confined to the first $\log{h}$ bits, while the last $\log{k}$ bits just have to be checked for equality. Furthermore, there are only $2^{3\log{h}} = 8h$ many possibilities for triples of addresses of length $\log{h}$, which is polynomial in the size of $R$. To differentiate between all these possibilities, we can just use a ternary transducer $\mathcal{T}_{\mathit{diff}}$ with states $Q_{\mathit{diff}} = \bigcup_{j=0}^{\log{h}} \left(\{w \in \Sigma^*~:~|w|=j\}^3\right)$, initial state $(\varepsilon,\varepsilon,\varepsilon)$, and transitions $(w_1,w_2,w_3) \xrightarrow{(a_1,a_2,a_3)} (w_1.a_1,w_2.a_2,w_3.a_3)$ for all $(a_1,a_2,a_3) \in \Sigma^3$ and $(w_1,w_2,w_3) \in Q_{\mathit{diff}}$ with $|w_1| = |w_2| = |w_3| < \log{h}$. This transducer then has polynomially many states and transitions in the size of $R$. For the equality check on the last $\log{k}$ bits, we use a ternary transducer $\mathcal{T}_{\mathit{eq}}$ with states $Q_{\mathit{eq}} = \{q_0,\ldots,q_{\log{k}}\}$, initial state $q_0$, final states $\{q_{\log{k}}\}$, and transitions $q_i \xrightarrow{(a,a,a)} q_{i+1}$ for each $a \in \Sigma$ and all $i \in \{0,\ldots,\log{k}-1\}$. Since $k$ was encoded in binary for $R$, meaning it needed $\log{k}$ space, this transducer's size is also polynomial in the size of $R$.

For the transitions connecting three places in Figure~\ref{fig:netprocedures}, we can reuse the transducer $\mathcal{T}_{\mathit{diff}}$ for the first $\log{h}$ bits, but need something different from $\mathcal{T}_{\mathit{eq}}$ for the last $\log{k}$ bits. For transitions like the one connecting $l_{1,d}$, $l_{3,d+1}$ and $l_{1,d}$\_calls\_\texttt{proc}, or the one connecting $l_{1,d}$\_calls\_\texttt{proc}, return\_\texttt{proc}$_{d+1}$ and $l_{2,d}$ the second place belongs to a recursion depth one higher than the other two. Notice that we chose to order the places such a way, that it is always the second place that is different from the others. This order also complies with the definitions of $\mathcal{T}_{\mathit{fork}}$ and $\mathcal{T}_{\mathit{join}})$. To check that the second number encoded in a triple of $\log{k}$ bits is exactly one higher than the other two, we make use of a ternary transducer $\mathcal{T}_{\mathit{inc}}$ with states $Q_{\mathit{inc}} = \{0,\ldots,\log{k}\} \times \{0,1\}$, initial state $(0,0)$, final states $\{(\log{k},1)\}$, and the following transitions:
\begin{itemize}
\item $(i,0) \xrightarrow{(a,a,a)} (i+1,0)$ for all $i \in \{0,\ldots,\log{k}-1\}, a \in \Sigma$,
\item $(i,0) \xrightarrow{(0,1,0)} (i+1,1)$ for all $i \in \{0,\ldots,\log{k}-1\}$, and
\item $(i,1) \xrightarrow{(1,0,1)} (i+1,1)$ for all $i \in \{0,\ldots,\log{k}-1\}$.
\end{itemize}
This transducer suffices for all transitions connecting three places in Figure~\ref{fig:netprocedures}, since the two we mentioned already are the only ones. It is also of polynomial size in the size of $R$, since it consists at most twice as many states and transitions as $\mathcal{T}_{\mathit{eq}}$.

To now construct $\mathcal{T}_{\mathit{fork}}$, we first take a copy of $\mathcal{T}_{\mathit{diff}}$, $\mathcal{T}_{\mathit{eq}}$, and $\mathcal{T}_{\mathit{inc}}$ and use the state $(\varepsilon,\varepsilon,\varepsilon)$ of $\mathcal{T}_{\mathit{diff}}$ as the new initial state. Then we consider all states of $\mathcal{T}_{\mathit{diff}}$ that correspond to a triple which could be connected via a transition of $N$ of the appropriate form required for $\mathcal{T}_{\mathit{fork}}$ (see Figure~\ref{fig:PNtransducers}). Then depending on whether this transition of $N(\mathcal{N})$ would belong to Figure~\ref{fig:netcommands} or Figure~\ref{fig:netprocedures}, we connect the considered state to either $\mathcal{T}_{\mathit{eq}}$, or $\mathcal{T}_{\mathit{inc}}$ in the following way: Let $(w_1,w_2,w_3)$ be the considered state and let $\mathcal{T}$ be the transducer we want to connect it to. Then for every transition $q \xrightarrow{(a,b,c)} q'$ of $\mathcal{T}$, where $q$ was the initial state of $\mathcal{T}$, we add a transition $(w_1,w_2,w_3) \xrightarrow{(a,b,c)} q'$. The transducer $\mathcal{T}_{\mathit{join}}$ is constructed in the same way.

The construction of $\mathcal{T}_{\mathit{move}}$ is also very similar: We take the analogous binary transducers of $\mathcal{T}_{\mathit{diff}}$ and $\mathcal{T}_{\mathit{eq}}$ and connect them in the same way as before. This suffices, because all transitions connecting exactly two places in Figure~\ref{fig:netcommands} and Figure~\ref{fig:netprocedures} only connect places of the same recursion depth.

\vspace{1ex}

\noindent All three transducers of $\mathcal{N}$ have size polynomial in the size of $R$, and the binary addresses $w_{\mathit{init}}$ and $w_{\mathit{halt}}$ are also of length polynomial in this size. Therefore the input to the coverability problem for $\mathcal{N}$ fulfils the size requirements. Furthermore, we can see that this construction can be done in polynomial time: We just construct $N(\mathcal{N})$ without additional copies of places for different recursion depths and then construct the three transducers from there.

%!TEX root = ./main.tex
\section{Reducing Non-Inheritance to Inheritance $\DCPS$} \label{sec:inheritance}

In \cite{AtigBQ2009} the authors consider a variant of $\DCPS$ that has a slightly changed relation $\rightarrow_i$ compared to ours: Each newly spawned thread starts with $i+1$ as its context switch number instead of $0$, where $i$ is the context switch number of the thread that spawned it. Formally, for all $i \in \mathbb{N}, w \in \Gamma^*$ and each rule $g|\gamma \hookrightarrow g'|w' \triangleright \gamma'$ we have $\langle g, (\gamma.w,i), \mmap \rangle \rightarrow_i \langle g', (w'.w,i), \mmap' \rangle$, where now $\mmap' = \mmap \oplus [(\gamma',i+1)]$ instead of $\mmap \oplus [(\gamma',0)]$. We call this model $\DCPS$ \emph{with inheritance}, because each thread basically inherits the context switches from its parent. Our original model could then also be referred to as $\DCPS$ \emph{without inheritance}.

To make use of the $\TWOEXPSPACE$-membership result from \cite{AtigBQ2009}, let us reduce $\SRP[K]$ for $\DCPS$ without inheritance to $\SRP[K+2]$ for $\DCPS$ with inheritance.

\vspace{1ex}

\noindent Let $\mathcal{A} = (G,\Gamma,\Delta,g_0,\gamma_0)$ be a $\DCPS$ without inheritance and let $g_{\mathit{reach}} \in G$, $K \in \mathbb{N}$. We construct a $\DCPS$ with inheritance $\mathcal{A}' = (G',\Gamma',\Delta',g_0',\gamma_0')$ with $g'_{\mathit{reach}} \in G'$ such that $g_{\mathit{reach}}$ is $K$-bounded reachable in $\mathcal{A}$ iff $g'_{\mathit{reach}}$ is $(K + 2)$-bounded reachable in $\mathcal{A}'$:
\begin{itemize}
\item $G' := \{g'_0\} \cup (G \times \{0,1,2\}) \cup (G \times \Gamma)$,
\item $g'_{\mathit{reach}} = (g_{\mathit{reach}},2)$,
\item $\Gamma' := \{\gamma'_0,\gamma_{\mathit{dorm}},\gamma_{\mathit{step}},\bot,\top\} \cup \Gamma \cup \bar{\Gamma}$, where $\bar{\Gamma} = \{\bar{\gamma}|\gamma \in \Gamma\}$, and
\item $\Delta'$ contains the following transition rules:
  \begin{enumerate} 
  \item $g'_0|\gamma'_0 \hookrightarrow g'_0|\gamma'_0 \triangleright \gamma_{\mathit{dorm}}$,
  \item $g'_0|\gamma'_0 \hookrightarrow g'_0|\varepsilon \triangleright \bot$,
  \item $g'_0|\bot \hookrightarrow (g_0,0)|\gamma_0.\bot$
  \item $(g_1,0)|\gamma \hookrightarrow (g_2,0)|w$ iff $g_1|\gamma \hookrightarrow g_2|w \in \Delta$,
  \item $(g_1,0)|\gamma_1 \hookrightarrow (g_2,0)|w \triangleright \bar{\gamma_2}$ iff $g_1|\gamma_1 \hookrightarrow g_2|w \triangleright \gamma_2 \in \Delta$,
  \item $(g,0)|\gamma \hookrightarrow (g,1)|\top.\gamma \triangleright \gamma_{\mathit{step}}$ for each $g \in G$ and each $\gamma \in \Gamma \cup \{\bot\}$,
  \item $(g,1)|\gamma_{\mathit{step}} \hookrightarrow (g,2)|\varepsilon$ for each $g \in G$,
  \item $(g,2)|\top \hookrightarrow (g,0)|\varepsilon$ for each $g \in G$,
  \item $(g,2)|\bar{\gamma} \hookrightarrow (g,\gamma)|\varepsilon$ for each $g \in G$ and each $\gamma \in \Gamma$, and
  \item $(g,\gamma)|\gamma_{\mathit{dorm}} \hookrightarrow (g,0)|\gamma.\bot$ for each $g \in G$ and each $\gamma \in \Gamma$.
  \end{enumerate}
\end{itemize}
To make it clear which rule is being applied for each $\rightarrow_i$-related pair of configurations, we put the rule number above the arrow.

We prove the following statements:
\begin{enumerate}
\item If $\langle g, (w,i), \mmap \rangle$ is $K$-bounded reachable in $\mathcal{A}$ then $\langle (g,0), (w.\bot,i+1), \mmap' \rangle$ is $(K+2)$-bounded reachable in $\mathcal{A}'$, where
  \begin{itemize}
  \item $\mmap(\gamma,0) = \sum_{j'=0}^{K+2} \mmap'(\bar{\gamma},j')$ for all $\gamma \in \Gamma$,
  \item $\mmap(v,j) = \mmap'(\top.v.\bot,j+1)$ for all $(v,j) \in \Gamma^* \times \{1,\ldots,K\}$, and
  \item $\mmap'(v',j') \neq 0$ for any $j' \in \mathbb{N}$ implies that $v' = \varepsilon$, $v' \in \bar{\Gamma}$, or $v' = \top.v.\bot$ for some $v \in \Gamma^*$.
  \end{itemize}
\item If $\langle (g,0), (w.\bot,i'), \mmap' \rangle$ with $i' \leq K+1$ is $(K+2)$-bounded reachable in $\mathcal{A}'$ then $\langle g, (w,i),$ $\mmap \rangle$ is $K$-bounded reachable in $\mathcal{A}$, where
  \begin{itemize}
  \item $i+1 \leq i'$,
  \item $\mmap$ and $\mmap'$ are related in the same way as in the previous statement, except every context switch number of a local configuration in $\mmap'$ is allowed to be arbitrarily higher and $\mmap'$ is allowed to contain additional local configurations with stack content $\gamma_{dorm}$.
  \end{itemize}
\end{enumerate}

\noindent Regarding the first statement, we use induction on the length of the sequence of $\Rightarrow_{\leq K}$-related configurations of $\mathcal{A}$. In the base case, this sequence consists of just the initial configuration $\langle g_0, (\gamma_0,0), \emptyset \rangle$. Consider the following configuration sequence of $\mathcal{A}'$:
\begin{align*}
  &\langle g'_0, (\gamma'_0,0), \emptyset \rangle \\
  &\xrightarrow{2}_{0} \langle g'_0, (\varepsilon,0), \multi{(\bot,1)} \rangle \\
  &\mapsto_{0} \langle g'_0, (\bot,1), \multi{(\varepsilon,1)} \rangle \\
  &\xrightarrow{3}_{1} \langle (g_0,0), (\gamma_0.\bot,1), \multi{(\varepsilon,1)} \rangle.
\end{align*}
The last configuration in this sequence fulfils the requirement.

For the inductive case, assume there is a $K$-bounded reachable configuration $C$ of $\mathcal{A}$, for which the statement already holds. Let $C'$ be a configuration of $\mathcal{A}$ with $C \Rightarrow_{\leq K} C'$. We proceed by going over all cases for why this pair could be related via $\Rightarrow_{\leq K}$:
\begin{description}
\item[{Case $\langle g, (\gamma.w,i), \mmap \rangle \rightarrow_i \langle g', (w'.w,i), \mmap \rangle$:}]\ \\
By induction hypothesis, a configuration of the form
\[\langle (g,0), (\gamma.w.\bot,i+1), \mmap' \rangle\]
is $(K+2)$-bounded reachable in $\mathcal{A}'$, where $\mmap'$ corresponds to $\mmap$ in the required way. Furthermore, $\Delta$ has to contain the rule $g|\gamma \hookrightarrow g'|w'$ for the above relation to hold. Therefore, $\Delta'$ contains the rule $(g,0)|\gamma \hookrightarrow (g',0)|w'$ by definition. Thus, we have
\[\langle (g,0), (\gamma.w.\bot,i+1), \mmap' \rangle \xrightarrow{4}_{i+1} \langle (g,0), (w'.w.\bot,i+1), \mmap' \rangle.\]
The last configuration here fulfils the requirement.
\vspace{1ex}
\item[{Case $\langle g, (\gamma.w,i), \mmap \rangle \rightarrow_i \langle g', (w'.w,i), \mmap \oplus \multi{(\gamma',0)} \rangle$:}]\ \\
By induction hypothesis, a configuration of the form
\[\langle (g,0), (\gamma.w.\bot,i+1), \mmap' \rangle\]
is $(K+2)$-bounded reachable in $\mathcal{A}'$, where $\mmap'$ corresponds to $\mmap$ in the required way. Furthermore, $\Delta$ has to contain the rule $g|\gamma \hookrightarrow g'|w' \triangleright \gamma'$ for the above relation to hold. Therefore, $\Delta'$ contains the rule $(g,0)|\gamma \hookrightarrow (g',0)|w' \triangleright \bar{\gamma'}$ by definition. Thus, we have
\[\langle (g,0), (\gamma.w.\bot,i+1), \mmap' \rangle \xrightarrow{5}_{i+1} \langle (g,0), (w'.w.\bot,i+1), \mmap' \oplus \multi{(\bar{\gamma'},i+2)} \rangle.\]
The last configuration here fulfils the requirement, because $i \leq K$ due to adhering to the bound in $\mathcal{A}$.
\vspace{1ex}
\item[{Case $\langle g, (w,i), \mmap \oplus \multi{(w',0)} \rangle \mapsto_i \langle g, (w',0), \mmap \oplus \multi{(w,i+1)} \rangle$:}]\ \\
Since inactive threads with context switch number $0$ have never been active, their stack contents cannot have changed since they were spawned. Therefore $|w'| = 0$, or rather $w' = \gamma \in \Gamma$. By induction hypothesis, a configuration of the form
\[\langle (g,0), (w.\bot,i+1), \mmap' \oplus \multi{(\bar{\gamma},j')} \rangle\]
is $(K+2)$-bounded reachable in $\mathcal{A}'$, where $j' \leq K+2$ and $\mmap'$ corresponds to $\mmap$ in the required way. To the configuration sequence of $\mathcal{A}'$, that serves as a witness for this reachability, we add the following at the start:
\[\langle g'_0, (\gamma'_0,0), \emptyset \rangle \xrightarrow{1}_{0} \langle g'_0, (\gamma'_0,0), \multi{(\gamma_{\mathit{dorm}},1)} \rangle.\]
In the process, every configuration in the sequence receives an additional inactive thread with local configuration $(\gamma_{\mathit{dorm}},1)$. Thus, the configuration 
\[\langle (g,0), (w.\bot,i+1), \mmap' \oplus \multi{(\bar{\gamma},j')} \oplus \multi{(\gamma_{\mathit{dorm}},1)} \rangle\]
is also $(K+2)$-bounded reachable in $\mathcal{A}'$. From there, we consider the following configuration sequence:
\begin{align*}
  &\langle (g,0), (w.\bot,i+1), \mmap' \oplus \multi{(\bar{\gamma},j')} \oplus \multi{(\gamma_{\mathit{dorm}},1)} \rangle\\
  &\xrightarrow{6}_{i+1} \langle (g,1), (\top.w.\bot,i+1), \mmap' \oplus \multi{(\bar{\gamma},j')} \oplus \multi{(\gamma_{\mathit{dorm}},1)} \oplus \multi{(\gamma_{\mathit{step}},i+2)} \rangle\\
  &\mapsto_{i+1} \langle (g,1), (\gamma_{\mathit{step}},i+2), \mmap' \oplus \multi{(\bar{\gamma},j')} \oplus \multi{(\gamma_{\mathit{dorm}},1)} \oplus \multi{(\top.w.\bot,i+2)} \rangle\\
  &\xrightarrow{7}_{i+2} \langle (g,2), (\varepsilon,i+2), \mmap' \oplus \multi{(\bar{\gamma},j')} \oplus \multi{(\gamma_{\mathit{dorm}},1)} \oplus \multi{(\top.w.\bot,i+2)} \rangle\\
  &\mapsto_{i+2} \langle (g,2), (\bar{\gamma},j'), \mmap' \oplus \multi{(\top.w.\bot,i+2)} \oplus \multi{(\gamma_{\mathit{dorm}},1)} \oplus \multi{(\varepsilon,i+3)} \rangle\\
  &\xrightarrow{9}_{j'} \langle (g,\gamma), (\varepsilon,j'), \mmap' \oplus \multi{(\top.w.\bot,i+2)} \oplus \multi{(\gamma_{\mathit{dorm}},1)} \oplus \multi{(\varepsilon,i+3)} \rangle\\
  &\mapsto_{j'} \langle (g,\gamma), (\gamma_{\mathit{dorm}},1), \mmap' \oplus \multi{(\top.w.\bot,i+2)} \oplus \multi{(\varepsilon,j'+1)} \oplus \multi{(\varepsilon,i+3)} \rangle\\
  &\xrightarrow{10}_{1} \langle (g,0), (\gamma.\bot,1), \mmap' \oplus \multi{(\top.w.\bot,i+2)} \oplus \multi{(\varepsilon,j'+1)} \oplus \multi{(\varepsilon,i+3)} \rangle.
\end{align*}
The relations here hold, since $i \leq K$ due to adhering to the bound $K$ in $\mathcal{A}$. The last configuration in this sequence fulfils the requirement.
\vspace{1ex}
\item[{Case $\langle g, (w,i), \mmap \oplus \multi{(w',j)} \rangle \mapsto_i \langle g, (w',j), \mmap \oplus \multi{(w,i+1)} \rangle$, where $j > 0$:}]\ \\
By induction hypothesis, a configuration of the form
\[\langle (g,0), (w.\bot,i+1), \mmap' \oplus \multi{(\top.w'.\bot,j+1)} \rangle\]
is $(K+2)$-bounded reachable in $\mathcal{A}'$, where $\mmap'$ corresponds to $\mmap$ in the required way. From here, we consider the following configuration sequence of $\mathcal{A}'$:
\begin{align*}
  &\langle (g,0), (w.\bot,i+1), \mmap' \oplus \multi{(\top.w'.\bot,j+1)} \rangle\\
  &\xrightarrow{6}_{i+1} \langle (g,1), (\top.w.\bot,i+1), \mmap' \oplus \multi{(\top.w'.\bot,j+1)} \oplus \multi{(\gamma_{\mathit{step}},i+2)} \rangle\\
  &\mapsto_{i+1} \langle (g,1), (\gamma_{\mathit{step}},i+2), \mmap' \oplus \multi{(\top.w'.\bot,j+1)} \oplus \multi{(\top.w.\bot,i+2)} \rangle\\
  &\xrightarrow{7}_{i+2} \langle (g,2), (\varepsilon,i+2), \mmap' \oplus \multi{(\top.w'.\bot,j+1)} \oplus \multi{(\top.w.\bot,i+2)} \rangle\\
  &\mapsto_{i+2} \langle (g,2), (\top.w'.\bot,j+1), \mmap' \oplus \multi{(\varepsilon,i+3)} \oplus \multi{(\top.w.\bot,i+2)} \rangle\\
  &\xrightarrow{8}_{j+1} \langle (g,0), (w'.\bot,j+1), \mmap' \oplus \multi{(\varepsilon,i+3)} \oplus \multi{(\top.w.\bot,i+2)} \rangle
\end{align*}
The relations here hold, since $i \leq K$ due to adhering to the bound $K$ in $\mathcal{A}$. The last configuration in this sequence fulfils the requirement.
\end{description}
This concludes the proof of the first statement.

\vspace{1ex}

\noindent Regarding the second statement, we use induction on the number of global states in $G \times \{0\}$, that appear in the sequence of $\Rightarrow_{\leq K + 2}$-related configurations of $\mathcal{A}'$. For the base case, we again consider the following configuration sequence:
\begin{align*}
  &\langle g'_0, (\gamma'_0,0), \emptyset \rangle \\
  &\xrightarrow{2}_{0} \langle g'_0, (\varepsilon,0), \multi{(\bot,1)} \rangle \\
  &\mapsto_{0} \langle g'_0, (\bot,1), \multi{(\varepsilon,1)} \rangle \\
  &\xrightarrow{3}_{1} \langle (g_0,0), (\gamma_0.\bot,1), \multi{(\varepsilon,1)} \rangle.
\end{align*}
The last configuration in this sequence corresponds to the initial configuration of $\mathcal{A}$ as required. The only way to deviate from this sequence without including more than one configuration of $G \times \{0\}$ is to spawn additional threads with stack content $\gamma_{\mathit{dorm}}$ and to perform arbitrary context switches between these.

For the inductive case, we assume that we have $C \Rightarrow^*_{\leq K + 2} C'$ for configurations $C,C'$ of $\mathcal{A}'$, whose global states are in $G \times \{0\}$. Furthermore, we assume that the statement already holds for $C$ and that the $\Rightarrow_{\leq K + 2}$-related sequence from $C$ to $C'$ contains no further configurations, whose global states are in $G \times \{0\}$. For the most part, it is easy to see, that such a sequence has to fall under one of the cases considered in the proof of the first statement, but may contain additional context switches. In all these cases, we end up with a configuration $C'$, for which the second statement holds.

The one case, that we still need to consider, is having an active thread with local configuration $(w.\bot,K+2)$ and a global state in $G \times \{0\}$. This case is not covered by the second statement and therefore we should proof that it does not cause any problems. However, let us first argue that this case can actually occur. Consider the following sequence:
\begin{align*}
  &C = \langle (g,0), (w.\bot,i'), \mmap' \oplus \multi{(\top.w'.\bot,K+2)} \rangle\\
  &\xrightarrow{6}_{i'} \langle (g,1), (\top.w.\bot,i'), \mmap' \oplus \multi{(\top.w'.\bot,K+2)} \oplus \multi{(\gamma_{\mathit{step}},i'+1)} \rangle\\
  &\mapsto_{i'} \langle (g,1), (\gamma_{\mathit{step}},i'+1), \mmap' \oplus \multi{(\top.w'.\bot,K+2)} \oplus \multi{(\top.w.\bot,i'+1)} \rangle\\
  &\xrightarrow{7}_{i'+1} \langle (g,2), (\varepsilon,i'+1), \mmap' \oplus \multi{(\top.w'.\bot,K+2)} \oplus \multi{(\top.w.\bot,i'+1)} \rangle\\
  &\mapsto_{i'+1} \langle (g,2), (\top.w'.\bot,K+2), \mmap' \oplus \multi{(\varepsilon,i'+1)} \oplus \multi{(\top.w.\bot,i'+1)} \rangle\\
  &\xrightarrow{8}_{K+2} \langle (g,0), (w'.\bot,K+2), \mmap' \oplus \multi{(\varepsilon,i'+2)} \oplus \multi{(\top.w.\bot,i'+1)} \rangle
\end{align*}
We can see that this case can occur any time we have an inactive thread with local configuration $(\top.w'.\bot,K+2)$ for some $w' \in \Gamma^*$. Let us argue that from here $\mathcal{A}'$ can no longer $(K+2)$-bounded reach a configuration of the form $\langle (\tilde{g},0), (\tilde{w}.\bot,\tilde{i}), \tilde{\mmap} \rangle$, where $\tilde{w} \in \Gamma^*$ and $\tilde{i} < K+2$. Considering $C$ matched the requirements of the second statement, the only way to reach such a configuration is to go to some state in $G \times \{2\}$ first, to then switch in an inactive thread with local configuration $(\top.\tilde{w}.\bot,\tilde{i})$ and afterwards pop the $\top$-symbol. An attempt at this leads to the following sequence:
\begin{align*}
  &\langle (g,0), (w'.\bot,K+2), \mmap' \oplus \multi{(\varepsilon,i'+2)} \oplus \multi{(\top.w.\bot,i'+1)} \rangle\\
  &\xrightarrow{6}_{K+2} \langle (g,1), (\top.w'.\bot,K+2), \\
  &\qquad\qquad\qquad \mmap' \oplus \multi{(\varepsilon,i'+2)} \oplus \multi{(\top.w.\bot,i'+1)} \oplus \multi{(\gamma_{\mathit{step}},K+3)} \rangle
\end{align*}
Here, the inactive thread with local configuration $(\gamma_{\mathit{step}},K+3)$ cannot be switched in, since its context switch number is already too high. The multiset $\mmap'$ also contains no other $\gamma_{\mathit{step}}$-threads by virtue of $C$ matching the requirements of the second statement. Thus, $\mathcal{A}'$ can no longer change the global state from here. This concludes the proof of the second statement.

\vspace{1ex}

\noindent Finally, let us prove that $g_{\mathit{reach}}$ is $K$-bounded reachable in $\mathcal{A}$ iff $(g_{\mathit{reach}},2)$ is $(K + 2)$-bounded reachable in $\mathcal{A}'$:

If $g_{\mathit{reach}}$ is $K$-bounded reachable in $\mathcal{A}$, then a configuration of the form $\langle (g,0), (w.\bot,i+1), \mmap' \rangle$ is $(K + 2)$-bounded reachable in $\mathcal{A}'$ by the first statement. Consider the following configuration sequence of $\mathcal{A}'$, which reaches $(g_{\mathit{reach}},2)$:
\begin{align*}
  &\langle (g_{\mathit{reach}},0), (w.\bot,i+1), \mmap' \rangle\\
  &\xrightarrow{6}_{i+1} \langle (g_{\mathit{reach}},1), (\top.w.\bot,i+1), \mmap' \oplus \multi{(\gamma_{step},i+2)} \rangle\\
  &\mapsto_{i+1} \langle (g_{\mathit{reach}},1), (\gamma_{step},i+2), \mmap' \oplus \multi{(\top.w.\bot,i+2)} \rangle\\
  &\xrightarrow{7}_{i+2} \langle (g_{\mathit{reach}},2), (\varepsilon,i+2), \mmap' \oplus \multi{(\top.w.\bot,i+2)} \rangle
\end{align*}

On the other hand, if $(g_{\mathit{reach}},2)$ is $(K + 2)$-bounded reachable in $\mathcal{A}'$ then $(g_{\mathit{reach}},0)$ must have been reachable as well, because the transition rules of $\mathcal{A}'$ only allow the $G$-part of the state tuple to change while in $G \times \{1\}$. Furthermore, only top of the stack symbols from $\Gamma$ allow for changes in the $G$-part of the state tuple, and stacks with such symbols always receive a bottom of the stack symbol $\bot$ according to the transition rules of $\mathcal{A}'$. Additionally, an active thread with such stack contents must have a context switch number $< K+2$ to reach $G \times \{2\}$ from $G \times \{0\}$, as we have seen previously. Finally, we can assume there to be no inactive threads with stack content $\gamma_{step}$, since such threads only exist in $G \times \{1\}$ and get consumed to move away from there.

Therefore, a configuration of the form $\langle (g_{\mathit{reach}},0), (w.\bot,i'), \mmap' \rangle$ as required by the second statement is $(K + 2)$-bounded reachable in $\mathcal{A}'$. This means that a configuration of the form $\langle g_{\mathit{reach}}, (w,i), \mmap \rangle$ with $i \leq i'-1$ is $K$-bounded reachable in $\mathcal{A}$, and so is $g_{reach}$ itself.

\end{document}